\newcommand{\paratitle}[1]{\noindent{\bf #1}}
\colorlet{myPurp}{purple!40!gray!60}
\newcommand*{\rom}[1]{\expandafter\@slowromancap\romannumeral #1@}
\newcommand{\reva}[1]{\leavevmode\color{black}{#1}}
\newcommand{\revb}[1]{\leavevmode\color{black}{#1}}
\newcommand{\revc}[1]{\leavevmode\color{black}{#1}}
\newcommand{\common}[1]{\leavevmode\color{black}{#1}}
\newtheorem{theorem}{Theorem}[section]
\newtheorem{proposition}[theorem]{Proposition}
\newtheorem{example}[theorem]{Example}
\newtheorem{definition}[theorem]{Definition}
\newtheorem{lemma}[theorem]{Lemma}
\newcommand{\prob}{C-Extension}
\newcommand{\noJoin}{invalid}
\newcommand{\joinView}{\ensuremath{V_{Join}}}
\renewcommand\footnotetextcopyrightpermission[1]{} % removes footnote with conference information in first column
\begin{document}
\fancyhead{}
% \input{revision_letter.tex}

%
% The "title" command has an optional parameter, allowing the author to define a "short title" to be used in page headers.
\title{Synthesizing Linked Data Under \texorpdfstring{\\} Cardinality and Integrity Constraints}

%
% The "author" command and its associated commands are used to define the authors and their affiliations.
% Of note is the shared affiliation of the first two authors, and the "authornote" and "authornotemark" commands
% used to denote shared contribution to the research.
% \author{Ben Trovato}
% \authornote{Both authors contributed equally to this research.}
% \email{trovato@corporation.com}
% \orcid{1234-5678-9012}
% \author{G.K.M. Tobin}
% \authornotemark[1]
% \email{webmaster@marysville-ohio.com}
% \affiliation{%
%   \institution{Institute for Clarity in Documentation}
%   \streetaddress{P.O. Box 1212}
%   \city{Dublin}
%   \state{Ohio}
%   \postcode{43017-6221}
% }
\settopmatter{authorsperrow=3}
\author{Amir Gilad}
\authornote{Both authors contributed equally to this research.}
\affiliation{%
\institution{Duke University}
\country{}
}
\email{agilad@cs.duke.edu}

\author{Shweta Patwa}
\authornotemark[1]
\affiliation{%
  \institution{Duke University}
  \country{}
}
\email{sjpatwa@cs.duke.edu}

\author{Ashwin Machanavajjhala}
\affiliation{%
  \institution{Duke University}
  \country{}
}
\email{ashwin@cs.duke.edu}

%
% By default, the full list of authors will be used in the page headers. Often, this list is too long, and will overlap
% other information printed in the page headers. This command allows the author to define a more concise list
% of authors' names for this purpose.
% \renewcommand{\shortauthors}{}
%
% The abstract is a short summary of the work to be presented in the article.
\begin{abstract}
The generation of synthetic data is useful in multiple aspects, from testing applications to benchmarking to privacy preservation. Generating the \textit{links} between relations, subject to \textit{cardinality constraints} (CCs) and \textit{integrity constraints} (ICs) is an important aspect of this problem. 
Given instances of two relations, where one has a foreign key dependence on the other and is missing its foreign key ($FK$) values, and two types of constraints: (1) CCs that apply to the join view and (2) ICs that apply to the table with missing $FK$ values, our goal is to impute the missing $FK$ values such that the constraints are satisfied. 
We provide a novel framework for the problem based on declarative CCs and ICs. 
We further show that the problem is NP-hard and propose a novel two-phase solution that guarantees the satisfaction of the ICs. 
Phase \rom{1} yields an intermediate solution accounting for the CCs alone, and relies on a hybrid approach based on CC types. For one type, the problem is modeled as an Integer Linear Program. For the others, we describe an efficient and accurate solution. We then combine the two solutions. 
Phase \rom{2} augments this solution by incorporating the ICs and uses a coloring of the conflict hypergraph to infer the values of the $FK$ column. 
Our extensive experimental study shows that our solution scales well when the data and number of constraints increases. We further show that our solution maintains low error rates for the CCs. 
\end{abstract}

\maketitle
%
%% Keywords. The author(s) should pick words that accurately describe
%% the work being presented. Separate the keywords with commas.
% \keywords{Database Constraints, Data Generation, Linked Data}

%%
%% This command processes the author and affiliation and title
%% information and builds the first part of the formatted document.

%%
%% The code below is generated by the tool at http://dl.acm.org/ccs.cfm.
%% Please copy and paste the code instead of the example below.
%%
% \begin{CCSXML}
% <ccs2012>
%  <concept>
%   <concept_id>10010520.10010553.10010562</concept_id>
%   <concept_desc>Computer systems organization~Embedded systems</concept_desc>
%   <concept_significance>500</concept_significance>
%  </concept>
%  <concept>
%   <concept_id>10010520.10010575.10010755</concept_id>
%   <concept_desc>Computer systems organization~Redundancy</concept_desc>
%   <concept_significance>300</concept_significance>
%  </concept>
%  <concept>
%   <concept_id>10010520.10010553.10010554</concept_id>
%   <concept_desc>Computer systems organization~Robotics</concept_desc>
%   <concept_significance>100</concept_significance>
%  </concept>
%  <concept>
%   <concept_id>10003033.10003083.10003095</concept_id>
%   <concept_desc>Networks~Network reliability</concept_desc>
%   <concept_significance>100</concept_significance>
%  </concept>
% </ccs2012>
% \end{CCSXML}

% \thispagestyle{empty}

\section{Introduction}\label{sec:intro}
In recent years, we have witnessed an increase in data-centric applications that call for efficient testing over reliable databases with certain desired qualities \cite{BrunoC05,LoCH10}. 
Existing benchmarks such as TPC-H \cite{tpchPaper,tpch} may not possess the desired characteristics for testing a specific application as they may not have the needed statistical qualities or the correct Integrity Constraints (ICs).
The field of {\em data generation} \cite{MannilaR89,GraySEBW94,HoukjaerTW06,BinnigKLO07,Arasu2011,RablDFSJ15,FazekasK18,SanghiSHT18} has proven effective in this respect. 
Two prominent challenges in this field are: (1) the generation
of links between different tables, i.e., aligning foreign keys with
primary keys based on Cardinality Constraints (CCs) \cite{Arasu2011}, and
(2) ensuring that the data will satisfy a set of expected ICs \cite{SoltanaSB17}. 

In particular, when the real data is sensitive and access to it is heavily regulated, users often need to wait months or years to get access to the real data before they can even start writing data analysis programs. One solution is to generate realistic synthetic data that satisfies some CCs and ICs so that users can: (a) start writing code to analyse the data, (b) test it locally, and (c) evaluate whether access to the data would be useful for their purposes even before they get access to the real data. However, current methods for generating synthetic data under privacy constraints (especially state-of-the-art standards like differential privacy \cite{Dwork06}) do not handle data with a combination of CCs or statistical constraints and ICs. Most, (e.g., \cite{ZhangCPSX17,HeMD14,SnokeS18}), only handle statistical constraints. 
% Other approaches from the statistics literature (which do not satisfy differential privacy) \cite{...} \amir{add} try to handle ICs (like structural zeros \cite{bishop2007discrete}), but usually do not ensure they are always met. 

% Another field that requires Consistent query answering from relational data under differential privacy: 
Furthermore, there has been a lot of recent work on answering count queries under differential privacy (e.g., Matrix mechanism \cite{LiMHMR15}, HDMM  \cite{McKennaMHM18}) and in particular over relational databases \cite{KotsogiannisTHF19}. A key challenge when answering queries especially over relational databases is that of \textit{consistency} -- are the answers outputted by a differentially private algorithm consistent with some underlying database? While there is work on using inference to enforce consistency when all the count queries are over a single view of the underlying database \cite{HayRMS10}, these techniques do not extend to the case when: (a) the underlying database is relational and query answers are over several joined views of the relations, and (b) when the underlying database needs to satisfy some ICs. One solution to this problem is to find a database that is consistent with the query answers and the ICs, and answer queries from it. While techniques for finding such a consistent database are known for single tables without ICs \cite{HayRMS10,LiHMW14,BarakCDKMT07}, no such techniques are known when there are multiple tables in a relational database with ICs. 

\revb{
Moreover, DBMS testing and other applications may require databases that conform to both CCs and ICs to make them more realistic \cite{Arasu2011,SoltanaSB17}. For instance, consider a table with the attributes $A$ and $B$. A query grouping over attributes $A$ and $B$ could return as many tuples as the cross product of the active domains of $A$ and $B$. However, if there is a Functional Dependency $A \rightarrow B$, then the output size of the group-by query is only the maximum of the active domains of the two attributes. Thus, the presence of ICs can significantly impact the performance characteristics of queries. 
}

\begin{figure}[ht]
    % \vspace{-2mm}
    \centering
	\begin{footnotesize}
		\begin{minipage}{.65\linewidth}
		    \centering
            \caption*{Persons (rel. $R_1$)}\label{tbl:presons}
    		\begin{tabular}{| c | c | c | c || >{\columncolor[RGB]{204, 204, 255}}c |}
    			\hline $p_{id}$ & $Age$ & $Rel$ & \reva{$Multi$-$ling$} & $h_{id}$ \\
    			\hline $1$ & $75$ & $Owner$ & $0$ & ?\\
    			\hline $2$ & $75$ & $Owner$ & $1$ & ?\\
    			\hline $3$ & $25$ & $Owner$ & $0$ & ?\\
    			\hline $4$ & $25$ & $Owner$ & $1$ & ?\\
    			\hline $5$ & $24$ & $Spouse$ & $0$ & ?\\
    			\hline $6$ & $10$ & $Child$ & $1$ & ?\\
    			\hline $7$ & $10$ & $Child$ & $1$ & ?\\
    			\hline $8$ & $30$ & $Owner$ & $0$ & ?\\
    			\hline $9$ & $30$ & $Owner$ & $1$ & ?\\
    			\hline
    		\end{tabular}
		\end{minipage}%
        \begin{minipage}{.35\linewidth}
            \centering
            \caption*{Housing (rel. $R_2$)}\label{tbl:house}
    		\begin{tabular}{| c | c | c |}
    		     \hline $h_{id}$ & $Area$\\
    		     \hline $1$ & $Chicago$\\
    		     \hline $2$ & $Chicago$\\
    		     \hline $3$ & $Chicago$\\
    		     \hline $4$ & $Chicago$\\
    		     \hline $5$ & $NYC$\\
    		     \hline $6$ & $NYC$\\
    		     \hline
    		\end{tabular}
		\end{minipage}
	\end{footnotesize}
	\caption{Database $\mathcal{D}$ with FK $h_{id}$ missing from $R_1$}
	\label{fig:db_table}
	\vspace{-3mm}
\end{figure}

\begin{figure}[ht]
		\centering
		\begin{subfigure}{\linewidth}
		\begin{footnotesize}
        \begin{lstlisting}[mathescape=true, basicstyle=\linespread{1.5}]
$DC_{O,O}:$ $\forall t_1, t_2.~ \neg (t_1.Rel = t_2.Rel = Owner \land t_1.h_{id} = t_2.h_{id})$
$DC_{O,S,low}:$ $\forall t_1, t_2.~ \neg(t_1.Rel= Owner \land t_2.Rel= Spouse \land$
            $t_2.Age<t_1.Age-50 \land t_1.h_{id}=t_2.h_{id})$
$DC_{O,S,up}:$ $\forall t_1, t_2.~ \neg(t_1.Rel= Owner \land t_2.Rel= Spouse \land$ 
            $t_2.Age>t_1.Age+50 \land t_1.h_{id}=t_2.h_{id})$
$DC_{O,C,low}:$ $\forall t_1, t_2.~ \neg(t_1.Rel= Owner \land t_1.\reva{Multi\text{-}ling= 1} \land t_2.Rel= $ 
            $Child \land t_2.Age<t_1.Age-50 \land t_1.h_{id}=t_2.h_{id})$
$DC_{O,C,up}:$ $\forall t_1, t_2.~ \neg(t_1.Rel= Owner \land t_1.\reva{Multi\text{-}ling= 1} \land t_2.Rel=$
            $Child \land t_2.Age>t_1.Age-12 \land t_1.h_{id}=t_2.h_{id})$
        \end{lstlisting}
        \end{footnotesize}
        \caption{\normalsize Denial Constraints: $DC_{O,O}$ enforces that no two homeowners can reside in the same home, $DC_{O,S,low}$ and $DC_{O,S,up}$ together specify the permissible age range of a spouse in any home, and $DC_{O,C,low}$ and $DC_{O,C,up}$ give the age range for a child living with a \reva{multi-lingual} homeowner}\label{fig:dcs}
		\end{subfigure}
        \begin{subfigure}{\linewidth}
        \begin{small}
        \centering
        \begin{lstlisting}[mathescape=true, basicstyle=\linespread{1.5}]
        $CC_1:$ $|\sigma_{Rel=Owner,Area=Chicago}(R_1\bowtie R_2)| = 4$
        $CC_2:$ $|\sigma_{Rel=Owner,Area=NYC}(R_1\bowtie R_2)| = 2$
        $CC_3:$ $|\sigma_{Age\leq 24,Area=Chicago}(R_1\bowtie R_2)| = 3$
        $CC_4:$ $|\sigma_{\reva{Multi\text{-}ling=1},Area=Chicago}(R_1\bowtie R_2)| = 4$
        \end{lstlisting}
        \end{small}
        \caption{\normalsize Cardinality Constraints: $CC_1$ and $CC_2$ give the number of homeowners in Chicago and NYC, resp., $CC_3$ gives the number of people younger than $25$ who live in Chicago, and $CC_4$ gives the number of \reva{multi-lingual individuals} in Chicago.}\label{fig:ccs}
		\end{subfigure}
	\caption{Set of DCs and set of CCs}\label{fig:constraints}
	\vspace{-4mm}
\end{figure}

{\em In this paper, we investigate the problem of generating the links between database tables based on a set of linear CCs and a set of ICs.}

Formally, we consider two relations, $R_1$ and $R_2$, where $R_1$ has a foreign key dependence on $R_2$ and is missing all values in its foreign key column $FK$. The goal is to impute $FK$ in $R_1$ based on the given CCs and ICs. Importantly, this problem and our solutions can be extended to \revb{relational databases} with a snowflake schema~\cite{10.1145/248603.248616}, by focusing on pairs of relations linked by foreign key joins.

% , with only foreign key joins between relations. This can be done by focusing on pairs of relations.
%\shweta{Should we specify that the CCs are over the foreign key join and use attributes from both relations? DCs specify conditions for which set of tuples can get the same FK values.}

%\shweta{Shortened}
\begin{example}
\label{ex:running_ex_setup}
Consider the relations in Figure~\ref{fig:db_table} based on the Census database. $R_1$ describes %individual 
people through attributes such as age, relationship to a household (e.g. owner or spouse), \reva{whether they speak more than $1$ language} and a (missing) household id, whereas $R_2$ shows %which designated area each household belongs to. 
the area for each household. 
In addition, we are given the set of ICs and CCs in Figures \ref{fig:dcs} and \ref{fig:ccs}, respectively.
The goal is to impute %the 
values in the $h_{id}$ column in $R_1$ %such 
so that the ICs and CCs are satisfied.
\end{example}

% \ashwin{The intro reads well. I think at this point we should say the following (or something like it): 
% We believe the problem we focus on is a key building block for the general problem of synthesizing data consistent with CCs and ICs for all the three use cases mentioned above. In particular, we believe one can use the wealth of existing literature to synthesize individual relations consistent with CCs without the key relationships and then use our technique to fill in the keys. }

We believe that the problem we focus on is a key building block for the general problem of synthesizing data consistent with CCs and ICs for all three use-cases mentioned above. 
In particular, we believe that one can use the wealth of existing literature to synthesize individual relations consistent with CCs without the key relationships and then use our technique to fill-in the foreign keys.
% \amir{mention limitations for snowflake schema and specific CCs and DCs!}

% \begin{figure}
%     \centering
%     \begin{tikzpicture}
%     \tikzstyle{every node}=[font=\footnotesize]
%     % \draw[rounded corners]  (0,0)  rectangle (4.65,1);
%     \draw[rounded corners, align=center, ] (0.15,0.1)  rectangle (1.7,0.9) node[midway] {Provenance\\ Extractor};
%     \draw[rounded corners, align=center] (3.3,0.1)  rectangle (4.5,0.9) node[midway] {Query \\ Inference};
%   \draw[->, align=left] (-0.5, 2.5) -- node[left]  {$R_1$ (missing FK values)}(0.15, 2.5);
%   \draw[->, align=left] (1.7, 0.5) -- node  {Examples + \\ partial prov.}(3.3, 0.5);
%   \draw[->, align=right] (4.5, 0.5) -- node[ right=0.2cm]  {Query}(5, 0.5);
% \end{tikzpicture}
%     \caption{System Architecture}
%     \label{fig:systemArch}
% \end{figure}

% \begin{figure}[t]
%     \centering
%     \includegraphics[width=\linewidth]{}
%     \caption{High-Level Solution}
%     \label{fig:overview}
% \end{figure}

\subsection*{Our Contributions}
We model the problem, give a theoretical analysis, and provide a solution for the generation of foreign keys for existing database relations while ensuring the satisfaction of a set of ICs and reducing the error of a set of CCs. 
%We next briefly describe our main contributions.
Next, we give our main contributions.\\

\paratitle{Model and Theoretical Results:}
We define the problem of \prob\ whose input is a relation $R_1$ with an unknown foreign key dependence on a relation $R_2$, i.e., the $FK$ column in $R_1$ is missing, and a set of CCs and ICs. 
For the CCs, we define and use linear CCs that apply to $R_1\bowtie R_2$, based on \cite{Arasu2011}. For the ICs, we define a type of Denial Constraints (DCs) \cite{ChomickiM05,ChuIP13}, called Foreign Key DCs, that applies to $R_1$ and forbids tuples from having the same $FK$ value under specified conditions. 
We then show that \prob\ is NP-hard in data complexity. 
This result leads us to a two-phase heuristic solution that still ensures the satisfaction of all DCs, while tolerating possible errors in the CC counts. 
% We devise a framework for two relations, which can be generalized to multiple relations in a snowflake schema where all joins are foreign key joins. We provide a model for linear CCs that apply to $R_1\bowtie R_2$ (based on \cite{Arasu2011}), and define a class of DCs that apply to $R_1$ and forbid tuples from having the same $FK$ under specified conditions. 
% We then define the problem of \prob\ as completing the $FK$ values in all tuples in $R_1$ so that the CCs and DCs are satisfied. 
% We further show that \prob\ is NP-hard in data complexity. 
% This result leads us to a two-phase heuristic solution that still ensures the satisfaction of all DCs, while tolerating possible errors in the CC counts. 

\paratitle{Solution:}
Our solution can be split into two phases:
% , as depicted in Figure~\ref{fig:overview}
(1) first phase (Section \ref{sec:cc}) is designed for the completion of a view \joinView\ based on CCs, where \joinView\ represents $R_1\bowtie R_2$ and is initialized with a copy of $R_1$ (without the $FK$ column) along with an empty column per non-key column in $R_2$ (due to foreign key dependence, $|R_1| = |\joinView|$), and (2) second phase (Section \ref{sec:dc}) uses the generated view \joinView\ to complete the $FK$ column in $R_1$ so that the DCs are satisfied.

% \ashwin{get rid of the highest level itemize. Maybe just use something like: \textit{Phase I:} and \textit{Phase II:}}
% \begin{enumerate}
    % \item 
    {\bf Phase \rom{1}:} 
    % We start by showing that completing the missing values in \joinView\ to satisfy the CCs is NP-hard in data complexity when the selection conditions of the CCs involve disjunctions. Therefore, we provide a novel characterization 
    % \ashwin{do we want to be more specific and say necessary and sufficient conditions?} 
    %  we provide a novel description of CC relationships that allows for \joinView\ to be completed efficiently and precisely under specific conditions. We further devise algorithms for this case and the general case:
    We provide a novel description of CC relationships that allows for \joinView\ to be completed efficiently and precisely under specific conditions \revc{(presented in Section \ref{sec:first_phase_overview})}. We further devise algorithms for this case and the general case:
    % \begin{compactenum}[$\bullet$]
    \begin{itemize}
    \itemsep0em
        \item For the general case, we devise an algorithm that models the CCs and the tuples in \joinView\ as an Integer Linear Program (inspired by \cite{Arasu2011}). From its solution, we greedily infer the values in \joinView\ for the attributes that come from $R_2$.
        \item For the special case, we devise a novel algorithm based on relationships between the CCs. 
        We show that if the CCs have containment or disjointness relationships between them (defined in Section \ref{subsec:cc_relationships}), then we can find an exact completion of \joinView\ without any errors, provided one exists.
        % We show that if we have a set of CCs such that every pair either has containment relationship or a disjointness relationship (to be defined in the sequel) then we can find an exact completion of \joinView\ without any errors, if one exists.
    \end{itemize}
    % \end{compactenum} 
    Our approach is a hybrid of these two solutions 
    % We further combine these two solutions into a \textit{hybrid approach}, which we also refer to as \textit{our approach} 
    % \ashwin{rewrite as: \textit{Our approach} is a hybrid of these two solutions --}, 
    that employs the first solution for the subset of CCs that does not fit the special case, and employs the second solution for the subset of CCs that does. 
    
    Another novelty in our solution exploits the fact that the \reva{all-way marginals for $R_1$, i.e., counts of tuples with different combinations of values in $R_1$'s non-key columns,} have the same counts in \joinView. Thus, we augment the input set of CCs to improve accuracy. 
    % However, we cannot add all marginals without affecting the result of some already resolved CCs. 
    % \ashwin{Do we want CC augmentation as a separate phase or subphase? Should that be a phase 1 before the LP?}
    
    %In Section~\ref{sec:experiments}, we compare the following variants to investigate the impact of adding marginals and combining the two solutions: (1) general solution without CC augmentation, (2) general solution with CC augmentation, and (3) hybrid solution, where CCs are augmented relative to the subset of constraints that get sent to the Linear Program solver.
    
    % \item 
    %\shweta{Shortened}
    {\bf Phase \rom{2}:} 
    % For the second phase, we assume that each tuple in $R_1$ maps to a single tuple in \joinView. Under this assumption, we employ the concept of a \textit{conflict hypergraph} \cite{ChuIP13} and use a novel algorithm based on hypergraph coloring \amir{check with Shweta}. 
    For the second phase, we employ the concept of a \textit{conflict hypergraph} \cite{ChuIP13} and use a novel algorithm based on hypergraph coloring. 
    We model the tuples in $R_1$ as vertices and connect by an edge every set of %these 
    tuples that will violate a DC if %they were 
    assigned the same foreign key. Thus, colors represent the values that the foreign keys can take in $R_1$, and a proper coloring represents a mapping of tuples to foreign keys that does not violate any DC. Due to the previous stage that considered $R_1\bowtie R_2$, tuples in $R_1$ have a certain list of permitted colors. This version of the graph coloring problem is called {\em List Coloring} \cite{achlioptas_molloy_1997} and is known to be NP-hard. 
    % (Prev version)
    To color the graph, we use a greedy coloring algorithm that considers vertices in descending order by degrees. 
    The algorithm skips vertices whose list of permitted colors is subsumed by the colors assigned to their neighbors. We ensure a proper coloring by adding the least number of new colors %needed 
    for the skipped vertices. Adding colors beyond the permitted lists corresponds to artificially adding tuples in $R_2$.
    %We use a greedy coloring algorithm that considers vertices in descending order by degrees, while skipping vertices whose list of permitted colors is subsumed by their neighbors' colors. We ensure a proper coloring by adding fewest new colors for the skipped vertices. Adding colors beyond the permitted lists corresponds to artificially adding tuples in $R_2$. 
% \end{enumerate}

\paratitle{Experimental Evaluation}
We have implemented our solution and performed a comprehensive set of experiments on a dataset derived from the 2010 U.S. Decennial Census \cite{sexton_abowd_schmutte_vilhuber_2017}. 
We have evaluated our solution in terms of accuracy and scalability in various scenarios, several of which were used for comparison with a baseline based on \cite{Arasu2011}. 
% increasing the number of CCs and DCs, increasing the database size, and changing the composition of the CC set and the DC set. 
% For each of these, we analyze the accuracy and scalability of our approach against the baselines. 
We further examined the runtime breakdown of our approach, presenting the runtimes of phases \rom{1} and \rom{2} in our solution. 
Our results indicate that our solution incurs relatively small error for CCs and no error for DCs (as guaranteed by our theoretical analysis). Moreover, our algorithms scale well for large data sizes, and large and complex sets of CCs and DCs. %\ashwin{Can we say at least XX times faster than baseline approaches}
For increasing data scales, our approach was $17$ times faster on average across different cases than the baseline we compare to.

\vspace{-1mm}
\section{Preliminaries and Model}\label{sec:prelim}
We now define the basic concepts used throughout the paper, and the \prob\ problem. 
% and give examples for the two classes of constraints that we consider in our framework.\\

% \subsection{Relational Database}
\paratitle{Relations in a Database:}
Let $R_1$ and $R_2$ be relations over the schema attributes $(K_1, A_1,\ldots, A_p,$ $FK)$ and $(K_2, B_1, \ldots, B_q)$, respectively. An attribute $A_j$ of $R_i$ may also be called a column and is denoted by $R_i.A_j$. 
$t \in R_i$ denotes a tuple in $R_i$ and $t.A_j$ denotes the {\em cell} of column $A_j$ in tuple $t$. 
% For simplicity, we assume that the domain of an attribute $A_i$ is the set of natural numbers, i.e., $t.A_i \in \mathbb{N}$. 
The last column in $R_1$ ($FK$) is a foreign key column that gets its values from the key column $K_2$ in $R_2$. 
%(Prev version) When joined, the generated view is denoted by \joinView, i.e., $\joinView = R_1 \bowtie_{FK=K_2} R_2$. 
The view $\joinView = R_1 \bowtie_{FK=K_2} R_2$ denotes the join of the two relations. 
If all values of a column $A_i$ are missing, it is called a {\em missing column}. 

\begin{example}
Consider a database $\mathcal{D}$ with two relations $R_1$ and $R_2$ as shown in Figure~\ref{fig:db_table}. $R_1.h_{id}$ is a missing column. The first row in $R_1$ says that $t_1.Age$ is $75$, $t_1.Rel$ is Owner and \reva{$t_1.Multi$-$ling$} is $0$.
\end{example}

% \subsection{Constraints}
% \paragraph*{Constraints}
% We consider two types of constraints: (1) \textit{Denial Constraints} (DCs) and (2) \textit{Count Constraints} (CCs). If a relation $R$ (or database $\mathcal{D}$) satisfies a constraint $\sigma$, we write $R\vDash \sigma$ (or $\mathcal{D}\vDash \sigma$).

% \paragraph*{Denial Constraints}
\paratitle{Foreign Key Denial Constraints:}
DCs \cite{ChomickiM05} are a general form of constraints that can be written as a negated First Order Logic statement. DCs can express several types of integrity constraints like functional dependencies and conditional functional dependencies \cite{bohannon2007conditional}.
% and metric functional dependencies \cite{koudas2009metric}. 
In this paper, we restrict our attention to DCs that contain a condition of the form $t_1.FK = \ldots = t_k.FK$.

\begin{definition}[Foreign Key DC]\label{def:dc}
A Foreign Key DC on a relation $R(K_1, A_1,\ldots,$ $A_p, FK)$ is defined as the following FOL statement:
\begin{small}
\begin{lstlisting}
                $\forall t_1, t_2, \ldots, t_k.~$ $\neg (p_1 \land \ldots \land p_n)$
\end{lstlisting}
\end{small}
where $p_q = t_i.A_l  \circ t_j.A_l$ or $p_q = t_i.A_l \circ c$, for $t_i,t_j \in R$, $p\geq 2$, $\circ \in \{=,<,>,\neq\}$, $c$ and $k$ are constants, and $p_n = (t_1.FK = \ldots = t_k.FK)$. 
\end{definition}

We use the terms Foreign Key DC and DC interchangeably. %for the rest of the paper. 
% In this paper, we restrict attention to DCs that contain at least one atom, $p_n$, of the form $t_1.FK = \ldots = t_k.FK$.

\begin{example}\label{ex:dc}
$DC_{O,O}$ (Figure~\ref{fig:dcs}), which states that two homeowners cannot be in the same home, can be formulated as follows:
\begin{small}
\begin{lstlisting}
    $\forall t_1, t_2\in R_1.~ \neg(t_1.Rel = t_2.Rel = Owner \land t_1.h_{id} = t_2.h_{id})$
\end{lstlisting}
\end{small}
Note that the restriction to Foreign Key DCs means that all constraints are on people that are in the same household.
\end{example}
% \ashwin{Maybe note in the above example that restriction to FK DCs means all constraints are on people that are in the same household.}

% \noindent
% We consider only the DCs that take the following form:
% \begin{small}
% \begin{lstlisting}
%     $\forall t_1,\ldots, t_k.~ \neg (\varphi(t_1,\ldots, t_k) \land t_1.A_m = \ldots = t_k.A_m)$
% \end{lstlisting}
% \end{small}
% where $t_1, \ldots, t_k$ are distinct tuples in $R_1$ and $\varphi(t_1,\ldots, t_k)$ is some logical statement over the tuples $t_1,\ldots, t_k$ that checks attributes other than $A_m$. In particular, we can see that this is the case in the Census dataset.

% \begin{example}\label{ex:dc_2}
% Reconsider the DC $\sigma$ from Example \ref{ex:dc}. It states that no two homeowners can share a household, i.e., $\varphi(t_1, t_2) = (t_1.RELP = t_2.RELP = 0)$. Relations $R_1$ and $R_2$ from in Figure~\ref{fig:db_table} satisfy $\sigma$ because each household contains only one homeowner ($RELP=0$). Hence, $\mathcal{D}\vDash \sigma$.
% \end{example}

%\shweta{Shortened}
% \paragraph*{Linear Cardinality Constraints} 
\paratitle{Linear Cardinality Constraints:}
% (Prev version) CCs form the second class of constraints that allows for the specification of the number of tuples that should posses a certain attribute value or set of attribute values.\\CCs specify the target number of tuples that satisfy a selection condition. 
CCs form the second class of constraints that allows for the specification of the number of tuples that should posses a certain set of attribute values, which can be expressed as a selection condition. 
As standard in previous work \cite{Arasu2011, Mckenna2019}, we restrict our attention to \textit{linear} CCs. 

\begin{definition}[Linear CC, adapted from \cite{Arasu2011}]\label{def:cc}
A linear CC over a database $\mathcal{D}$ consisting of relations $R_1(K_1, A_1,\ldots, A_p, FK)$ and $R_2(K_2, B_1, \ldots,$ $B_q)$ is defined as follows:
\begin{small}
\begin{lstlisting}
                $|\sigma_{\varphi}(R_1\bowtie_{FK=K_2} R_2)| = k$
\end{lstlisting}
\end{small}
where $\varphi$ is a Boolean selection predicate over a subset of (non-key) attributes in $\mathcal{D}$, and $k\in\mathbb{N}$. 
% \amir{do we want to limit to conjunction since algo. 2 only handles this?} \shweta{I think you found a hardness result for if we can do pairwise comparisons to check CC containment with disjunctions/CCs that use a join relation. Is that the question?}
\end{definition}

In the rest of the paper, we only refer to conjunctive selection predicates with conjuncts of the form $A_i\circ c$, where $\circ\in \{=, <, >, \leq, \geq\}$ and $c$ is in the domain of column $A_i$, though our algorithms can be extended to conditions that contain disjunction as well.
% ; similarly for $B_i$, 

%\amir{check:} Disjunctive selection predicates

\begin{example}\label{ex:cc}
$CC_1$ (Figure~\ref{fig:ccs}), which states that the number of homeowners ($Rel=$ Owner) living in $Area=Chicago$ must equal $4$, can be written as: $|\sigma_{Rel=Owner, Area=Chicago}~ R_1\bowtie R_2| = 4$.
% \begin{small}
% \begin{lstlisting}
%         $|\sigma_{Rel=Owner, Area=Chicago}~ R_1\bowtie R_2| = 4$
% \end{lstlisting}
% \end{small}
\end{example}

%Throughout the paper, we denote by $R\vDash \sigma$ the fact that a relation $R$ satisfies a constraint $\sigma$.
We denote by $R\vDash \sigma$ the fact that relation $R$ meets constraint $\sigma$.
% \vspace{-1mm}
% \section{Problem Definition}\label{sec:problemdef}

\paratitle{Problem Definition:} We now formally define the \prob\ problem and discuss its intractability. 

\begin{definition}[\prob]\label{def:prob}
Let $R_1(K_1, A_1, \dots, A_p, FK)$ and $R_2(K_2,$ $B_1, \dots, B_q)$ be two relations, where $R_1.FK$ is a foreign key mapped from $R_2.K_2$ and is empty
%\ashwin{all the values in R1.FK are unknown?}
%\shweta{yes}
. 
Let $S_{DC}$ denote the set of DCs over $R_1$ and let $S_{CC}$ denote the set of linear CCs over the foreign key join between $R_1$ and $R_2$. 
\prob\ is the problem of completing all the values in $R_1.FK$ to create $\hat{R_1}$ so that (1) $\forall \sigma\in S_{DC},~ \hat{R_1} \vDash \sigma$, (2) $\forall \sigma\in S_{CC},~ \hat{R_1}\bowtie_{FK=K_2} R_2 \vDash \sigma$. 
% Let $R_1(K_1, A_1, \dots, A_p, FK)$ and $R_2(K_2,$ $B_1, \dots, B_q)$ be two relations, where $R_1.FK$ is a foreign key mapped from $R_2.K_2$ and is empty. 
% Let $S_{DC}$ denote the set of DCs over $R_1$ and let $S_{CC}$ denote the set of linear CCs over the join between $R_1$ and $R_2$. 
% \prob\ is the problem of completing all the values in $R_1.FK$ to create $\hat{R_1}$ so that (1) $|\hat{R_1}\bowtie_{FK=K_2} R_2| = |\hat{R_1}|$, (2) $\forall \sigma\in S_{DC},~ \hat{R_1} \vDash \sigma$, (3) $\forall \sigma\in S_{CC},~ \hat{R_1}\bowtie_{FK=K_2} R_2 \vDash \sigma$. 
\end{definition}

\begin{example}\label{ex:prob}
Reconsider relations $R_1$ and $R_2$ in Figure~\ref{fig:db_table}, and DCs and CCs in Figure~\ref{fig:constraints}.
A solution $\hat{R_1}$ for the \prob\ problem as defined by these relations and constraints is shown in Figure \ref{fig:db_table_complete}.
%$t_1.FK=2,$ $t_2.FK=1, t_3.FK=3, t_4.FK=4, t_5.FK=2, t_6.FK=2, t_7.FK=2, t_8.FK=5$ and $t_9.FK=6$, which defines the relation $\hat{R_1}$.
\vspace{-2mm}
\begin{figure}[!ht]
	\centering
	\begin{footnotesize}
    \caption*{Persons (rel. $R_1$)}\label{tbl:persons_complete}
	\begin{tabular}{| c | c | c | c || >{\columncolor[RGB]{204, 204, 255}}c |}
		\hline $p_{id}$ & $Age$ & $Rel$ & \reva{$Multi$-$ling$} & $h_{id}$ \\
		\hline $1$ & $75$ & $Owner$ & $0$ & $2$\\
		\hline $2$ & $75$ & $Owner$ & $1$ & $1$\\
		\hline $3$ & $25$ & $Owner$ & $0$ & $3$\\
		\hline $4$ & $25$ & $Owner$ & $1$ & $4$\\
		\hline $5$ & $24$ & $Spouse$ & $0$ & $2$\\
		\hline $6$ & $10$ & $Child$ & $1$ & $2$\\
		\hline $7$ & $10$ & $Child$ & $1$ & $2$\\
		\hline $8$ & $30$ & $Owner$ & $0$ & $5$\\
		\hline $9$ & $30$ & $Owner$ & $1$ & $6$\\
		\hline
	\end{tabular}
    \end{footnotesize}
\caption{Relation $R_1$ from Figure~\ref{fig:db_table} with FK $h_{id}$ filled-in to satisfy DCs and CCs given in Figure~\ref{fig:constraints}}\label{fig:db_table_complete}
\end{figure}
\vspace{-3mm}
\end{example}

% As shown by the following proposition, \prob\ is NP-hard, even when $S_{CC}$ is empty. 
The decision version of \prob\ is given by the same setting as in Definition \ref{def:prob}. The output is $1$ if there exists a completion of $R_1.FK$ such that all DCs and CCs are satisfied, and $0$ otherwise. 
% \ashwin{Isn't the above problem already a decision version? If not, maybe write it as a decision problem. Also, we may need to say somewhere that \prob is feasible whenever there exists a DB from which R1, R2 were generated and which satisfies the CCs and DCs. However, due to the hardness, our goal would be to find a completion that (a) satisfies all the DCs, and (b) get as close to satisfying the CCs as possible. And even define what error metric we are trying to optimize for.}
\begin{proposition}\label{prop:hardness}
The decision problem version of \prob\ is NP-hard in data complexity.
\end{proposition}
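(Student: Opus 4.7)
The plan is to prove hardness in data complexity by reducing the classical \textsc{3-Coloring} problem to the decision version of \prob{}, using a fixed schema, a fixed set of two DCs, and an empty CC set; only the instances of $R_1, R_2$ grow with the input graph, which is exactly what data-complexity hardness requires.

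Given a graph $G=(\mathcal{V},\mathcal{E})$, I would build $R_2$ with three tuples whose keys represent the colors $\{1,2,3\}$, and $R_1$ with schema $(X, Y, T, C, FK)$, where $T$ ranges over two constant values $\mathsf{vtx}$ and $\mathsf{edg}$. For each $v\in\mathcal{V}$ include a \emph{vertex tuple} $(v,v,\mathsf{vtx},0,\text{?})$; for each edge $(u,v)\in\mathcal{E}$ with $u<v$ and each $k\in\{1,2,3\}$ include an \emph{edge-copy tuple} $(u,v,\mathsf{edg},k,\text{?})$. The two fixed DCs are
\[
DC_1:\ \forall t_1,t_2.\ \neg\bigl(t_1.T{=}\mathsf{edg} \wedge t_2.T{=}\mathsf{edg} \wedge t_1.X{=}t_2.X \wedge t_1.Y{=}t_2.Y \wedge t_1.C{\ne}t_2.C \wedge t_1.FK{=}t_2.FK\bigr),
\]
forcing the three edge-copies of each edge to take distinct FK values, and
\[
DC_2:\ \forall t_1,t_2,t_3.\ \neg\bigl(t_1.T{=}\mathsf{vtx} \wedge t_2.T{=}\mathsf{vtx} \wedge t_3.T{=}\mathsf{edg} \wedge t_1.X{=}t_3.X \wedge t_2.Y{=}t_3.Y \wedge t_1.FK{=}t_2.FK{=}t_3.FK\bigr),
\]
which forbids a ``monochromatic edge'' witnessed by an edge-copy of the same color. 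Every non-$p_n$ atom is either an attribute-vs.-constant comparison or a same-attribute comparison between two tuples, matching Definition~\ref{def:dc}; crucially, both DCs and the schema are independent of $G$.

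For correctness, any proper 3-coloring $c$ of $G$ yields a satisfying assignment by setting $t_v.FK=c(v)$ and, on each edge, assigning the three edge-copies any permutation of $\{1,2,3\}$. Conversely, given any satisfying FK assignment, define $c(v):=t_v.FK$: by $DC_1$, the three edge-copies of every $(u,v)\in\mathcal{E}$ must take three distinct FKs and thus cover $\{1,2,3\}$; if some edge had $c(u)=c(v)=k$, then $t_u$, $t_v$, and the unique edge-copy with $FK=k$ would violate $DC_2$, contradicting satisfaction. Hence $c$ is a proper 3-coloring of $G$. Since $|R_1|=O(|\mathcal{V}|+|\mathcal{E}|)$ the reduction is polynomial, and NP-hardness of 3-Coloring transfers to \prob\ in data complexity.

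The main obstacle is structural: a Foreign Key DC can only \emph{forbid} a set of tuples from sharing an FK value, never \emph{require} two tuples to agree on one, and under Definition~\ref{def:dc} it cannot directly compare two different columns across two tuples. A naive one-tuple-per-vertex encoding therefore cannot express the adjacency relation of $G$ while keeping the DC set fixed. The edge-copy gadget resolves both difficulties at once: saturating every edge with the full color palette guarantees that any would-be monochromatic edge must match some edge-copy in FK, converting the ``forbidden agreement'' into a sharing event that $DC_2$ can detect using only same-attribute atoms.
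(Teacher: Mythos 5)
Your reduction is correct, but it takes a genuinely different route from the paper's. The paper reduces from NAE-3SAT: tuples encode (variable, polarity, clause) triples, $R_2$ supplies a two-value key domain $\{0,1\}$ standing for truth values, one fixed DC forces a consistent assignment per variable, and a second three-tuple DC forces each clause to be neither all-true nor all-false. You instead reduce from 3-Coloring, with $R_2$ supplying a three-value palette, and you solve the two structural obstacles you correctly identify (FK DCs can only forbid agreement, and cross-tuple atoms must compare the \emph{same} attribute) via two gadgets: duplicating each vertex id into both $X$ and $Y$ of its vertex tuple so that same-attribute atoms $t_1.X{=}t_3.X$ and $t_2.Y{=}t_3.Y$ can pin the two endpoints of an edge-copy, and saturating each edge with all three colors via $DC_1$ so that a monochromatic edge is forced to collide with some edge-copy and trip $DC_2$. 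I checked the corner cases: both DCs use only atoms permitted by Definition~\ref{def:dc} and end with the required $t_1.FK=\cdots=t_k.FK$ conjunct; degenerate instantiations ($t_1=t_2$ in $DC_2$, which would need a self-loop, and same-$C$ pairs in $DC_1$) cannot fire; and the schema, the two DCs, and the empty CC set are all independent of $G$, so the argument is valid for data complexity, exactly as in the paper. The two proofs buy essentially the same thing (hardness with a constant-size constraint set and no CCs); the paper's NAE-3SAT source makes the "at least two distinct FKs per hyperedge" semantics of Proposition~\ref{prop:color_satisfies} especially transparent, while your version is arguably more self-contained for readers who think of the FK-assignment problem as a coloring problem, since it connects directly to the list-coloring machinery of Section~\ref{sec:dc}.
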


%\shweta{Shortened}
\revc{
\vspace{-6mm}
\begin{proof}[Proof Sketch]
We describe a reduction from NAE-3SAT to \prob. 
In the NAE-3SAT problem, we are given a 3-CNF formula $\varphi$ and asked whether there is a satisfying assignment to $\varphi$ with every clause having at least one literal with the value False. 
Given a 3-CNF formula $\varphi = C_1 \land \ldots \land C_n$, where $x_1,\ldots, x_m$ are the propositional variables in $\varphi$, construct a relation $R_1(Var,\alpha,Cls,Chosen)$, where $Chosen$ is missing all values, and $Var,\alpha,Cls$ %attributes take the following values:
columns take values:
\setdefaultleftmargin{0pt}{}{}{}{}{}
\begin{enumerate}
    \itemsep0em
    \item $(x_i, 1, C_j, ?)$ if making $x_i$ \textit{True} makes $C_j$ \textit{True}
    \item $(x_i, 0, C_j, ?)$ if making $x_i$ \textit{False} makes $C_j$ \textit{True}
\end{enumerate}
We define $S_{DC}$ to be the set with the following two DCs:
% \setdefaultleftmargin{8pt}{}{}{}{}{}
\begin{small}
\begin{enumerate}
    \itemsep0em
    \item $\forall t_1, t_2.~ \neg (t_1.Var$=$t_2.Var \land t_1.\alpha \neq t_2.\alpha \land  t_1.Chosen$=$t_2.Chosen)$ 
    % \item $\forall t_1, t_2, t_3.~ \neg (t_1.\alpha = t_2.\alpha = t_3.\alpha \land t_1.Cls = t_2.Cls = t_3.Cls\land t_1.Chosen = t_2.Chosen = t_3.Chosen)$
    \item $\forall t_1, t_2, t_3.~ \neg (t_1.Cls$=$t_2.Cls$=$t_3.Cls \land t_1.Chosen$=$t_2.Chosen$= $t_3.Chosen)$
\end{enumerate}
\end{small}
CCs are not needed in the reduction. 
The goal is to complete the missing column $Chosen$ in $R_1$.
We define $R_2$ as containing two columns: a primary key column $Chosen$, and another column $E$. 
% The domain for $Chosen$ is given by $\{0, 1\}$ and values in $E$ are of no major significance here. 
$R_2$ contains the tuples $(0, a)$ and $(1, b)$, i.e., the domain for $Chosen$ is $\{0, 1\}$. 
Intuitively, $Chosen$ encodes the satisfying assignment for $\varphi$ by assigning values to each tuple, where $t.Chosen$=$1$ iff the assignment should be $t.Var$=$t.\alpha$.
% DC (1) makes sure that if a tuple of the form $(x_i, 1, C_a)$ is chosen for the assignment, then $(x_i, 0, C_b)$ cannot be chosen as well and vice versa, and DC (2) enforces that for each clause, at least one literal in that clause will have the value True and at least one literal will have value False. 
% If can now be shown that a satisfying assignment for $\varphi$ exists iff there is a solution to \prob.
\end{proof}
}

The full proofs are detailed in Section \ref{sec:proofs} of the appendix.
\vspace{-1mm}
\section{Solution Overview}\label{sec:overview}
% \ashwin{This section is excellent!}
% We require the that all DCs will be satisfied. This may result in the addition of new tuples in $R_2$ as discussed in \shweta{add}, while having possible errors in the CC counts. %This will sometimes result in additional tuples in $R_2$
Our solution proceeds in two phases as seen in Figure \ref{fig:detailed}. 
In phase \rom{1}, we consider the view \joinView\ representing the join of the two relations $R_1$ and $R_2$, where $R_1$ has a foreign key dependence on $R_2$, and initialize it with (non $FK$) columns from $R_1$ and an empty column per non-key column from $R_2$. 
We infer these values based on the CCs by a hybrid approach that uses both ILP \cite{Arasu2011} and a more efficient and accurate procedure for special cases. 
In phase \rom{2}, we impute $R_1.FK$ by modeling the problem as a conflict hypergraph using the DCs, and coloring it based on the inferred values in \joinView. 
% considering the candidate $FK$ values for each tuple in $R_1$ and ensuring that we choose a candidate that does not violate a DC.
% This two-phase approach allows us to consider each class of constraints separately and devise a custom technique for each class. 

\begin{figure}[ht]
    \centering
    \includegraphics[width=0.77\linewidth]{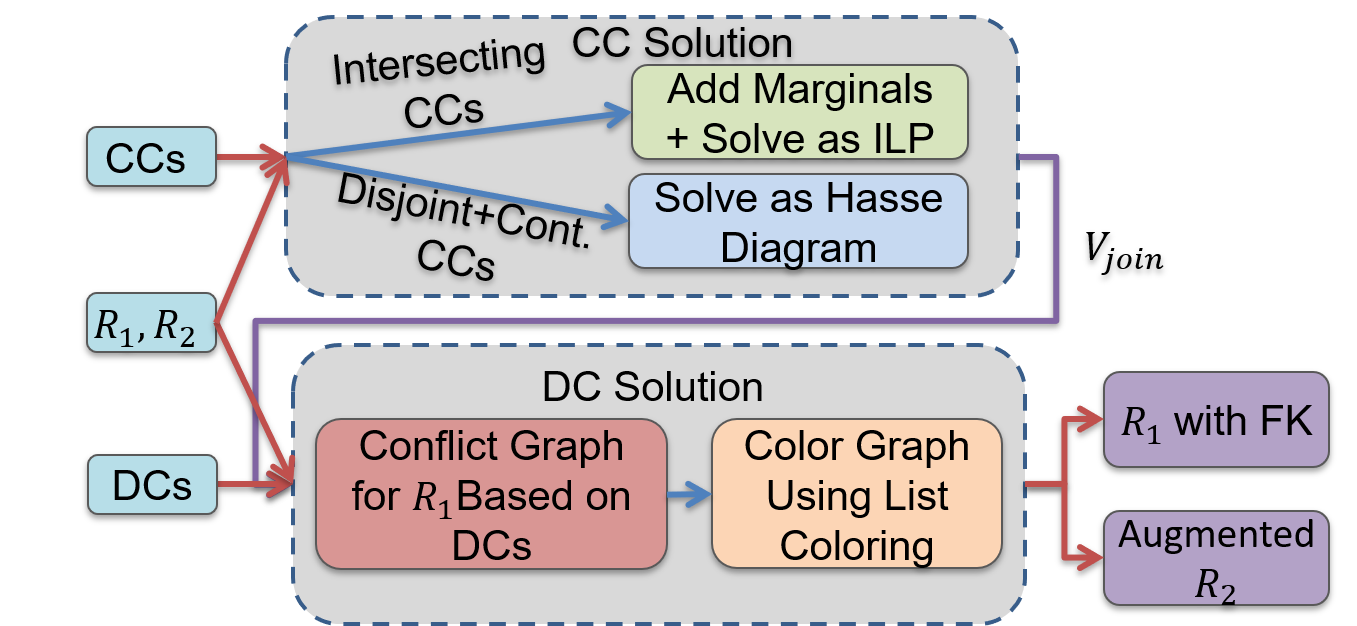}
    \caption{Solution Overview}
    \label{fig:detailed}
    \vspace{-4mm}
\end{figure}

\subsection{Overview of the First Phase}\label{sec:first_phase_overview}
% \ashwin{The title says overview, but there is only a hardness result in this section. Could you rewrite this section as: (i) example, (ii) say this problem itself is hard (showing that disjunctions are the ones that cause the problem), (iii) 2 lines about idea for general case (solve + greedy fill in), (iv) 2 lines about special cases where we have efficient algorithms (and how it follows from hardness result). Then give forward pointers. A reader could skip the details by just reading this section then.} \amir{Fixed} 

Due to the foreign key dependence (Definition \ref{def:prob}), we define \joinView\ over the columns $K_1, A_1, \ldots, A_p, B_1,\ldots, B_q$ such that $t\in R_1$ implies that there is a single $t'\in\joinView$ with $t.K_1$=$t'.K_1$ and $\forall 1\leq i\leq p.~ t.A_i = t'.A_i$ 
%for every tuple in $R_1$, there is a matching tuple in $R_2$ with the same $K_1, A_1, \ldots, A_p$ values
with additional $B_1,\ldots, B_q$ entries that are initially all empty because $FK$ is missing in $R_1$. 
Therefore, $|\joinView| = |R_1|$. Our goal is to complete these columns based on the CCs.
% so that the CCs in the set $S_{CC}$ that apply to \joinView\ are satisfied.

% \begin{figure}[!htb]
% 	\begin{footnotesize}
% 	\centering
% 	\begin{tabular}{| c | c | c | c || c |}
% 		\hline $p_{id}$ & $Age$ & $Rel$ & \reva{$Multi$-$ling$} & $Area$ \\
% 		\hline $1$ & $75$ & Owner & $0$ & \\
% 		\hline $2$ & $75$ & Owner & $1$ & \\
% 		\hline $3$ & $25$ & Owner & $0$ & \\
% 		\hline $4$ & $25$ & Owner & $1$ & \\
% 		\hline $5$ & $24$ & Spouse & $0$ & \\
% 		\hline $6$ & $10$ & Child & $1$ & \\
% 		\hline $7$ & $10$ & Child & $1$ & \\
% 		\hline $8$ & $30$ & Owner & $0$ & \\
% 		\hline $9$ & $30$ & Owner & $1$ & \\
% 		\hline
% 	\end{tabular}
% 	\caption{Join view $\joinView = (R_1\bowtie_{FK=K_2}R_2)$ for $R_1$ and $R_2$ from Figure \ref{fig:db_table} with missing $Area$ values \amir{candidate for removal and describe in text}}\label{fig:v1_missing}
% 	\end{footnotesize}
% 	\vspace{-2mm}
% \end{figure}

\begin{example}\label{ex:join_view_missing}
Reconsider $R_1$ and $R_2$ shown in Figure \ref{fig:db_table} and the CCs in Figure \ref{fig:ccs}. 
% The join view \joinView\ appears in Figure~\ref{fig:v1_missing}. 
The join view \joinView\ is $R_1$ as it appears in Figure \ref{fig:db_table} (without $h_{id}$) with an empty Area column (as this is the schema of $R_1\bowtie R_2$). 
Due to the foreign key dependency, we have $|\joinView| = |R_1|$, 
and \joinView\ contains a tuple for each $R_1$ tuple with the same values as in $R_1$ and an empty $Area$ value.
The reason is that the $FK$ values are missing in $R_1$. Our goal is to fill-in \joinView\ so that the CCs are satisfied.
\end{example}

We give a short description of our solution for completing \joinView. 
% We first consider a general case, followed by a special case where the CCs can be solved exactly and finally combine the two into a hybrid solution. \shweta{Comment number 5}

% \paratitle{Approach for the General Case (Section \ref{subsec:cc_general}, green box in Figure \ref{fig:detailed}):}
\paratitle{Solution as an ILP (Section \ref{subsec:cc_general}, green box in Figure \ref{fig:detailed}):}
%Given a set of CCs on \joinView, we model the problem of completing the missing attributes as a system of linear equations \ashwin{mention integrality constraint}.
Given a set of CCs on \joinView, we model the problem of completing the missing columns as a system of linear equations with variables accounting for counts of different tuples needed in \joinView\ to satisfy the CCs. Thus, the variables must take non-negative integer values. 
We artificially add to $S_{CC}$ \reva{all-way marginals (using the idea of \textit{intervalization} from \cite{Arasu2011} that is explained in Section \ref{sec:cc})} from $R_1$ to enhance the accuracy of the solution. \reva{For example, based on CCs given in Example~\ref{ex:running_ex_setup}, $|\sigma_{Age\leq 24, Rel=Spouse, Multi\text{-}ling=0}|=1$ gets added to $S_{CC}$.} We then assign $B_1, \ldots, B_q$ values to the tuples in \joinView\ based on the solution returned by an ILP solver.

% \paragraph*{Using CC Relationships for Special Cases (Section \ref{subsec:cc_relationships})}
\paratitle{Using CC Relationships for Special Cases (Section \ref{subsec:cc_relationships}, blue box in Figure \ref{fig:detailed}):}
We give a novel description of the relationships between CCs based on their selection conditions, defining CC containment, disjointness and intersection. In the case where there are no intersecting CCs and no disjunctions, we give an algorithm to complete \joinView\ that models the containment and disjointness of CCs as a \textit{Hasse diagram} \cite{williamson2002combinatorics} that it recurses on bottom-up to fill-in \joinView. Any leftover \joinView\ tuples without $B_1, \ldots, B_q$ values are randomly assigned a combination that cannot cause a new contribution towards the target count of any CC. However, if no such combinations are available, then the leftover \joinView\ tuples cannot be completed. We refer to these as \noJoin\ tuples.

\paratitle{Hybrid Approach (Section \ref{sec:hybrid_approach}):}
In the absence of intersecting CCs, the solution decomposes cleanly as seen above. This motivates the hybrid approach that combines ideas from both cases to achieve better runtime and accuracy when some CCs intersect. We start by labeling each pair of CCs as disjoint, contained or intersecting. 
For all CCs that do not intersect or contain any intersecting CCs, we use the approach from Section \ref{subsec:cc_relationships}, and for the rest, we use the ILP approach from Section \ref{subsec:cc_general}. Lastly, as seen above in the special case, we may end up with some \noJoin\ tuples.

\subsection{Overview of the Second Phase}
% After assigning values in the columns of \joinView\ that originated in $R_2$ ($B_1 \ldots, B_q$), we turn to reverse engineering $R_1$ from \joinView. This phase starts by considering the conflict hypergraph of the tuples of $R_1$ based on the DCs.
After filling-in the columns of \joinView\ that originate in $R_2$ ($B_1 \ldots, B_q$), we turn to reverse-engineering $R_1$ from \joinView. This phase uses \textit{conflict hypergraphs} \cite{ChuIP13} to represent possible DC violations.

\paratitle{Conflict Hypergraph (Section \ref{subsec:list_coloring}, red box in Figure \ref{fig:detailed}):}
We use the notion of conflict hypergraph for the tuples of $R_1$ based on the DCs. Given a DC, we construct an edge for all the sets of tuples that cannot get the same foreign key value due to that DC.

% to improve efficiency, we further partition it into smaller graphs based on the $B_1 \ldots, B_q$ values in \joinView. 

\begin{example}\label{ex:hypergraph_overview}
Consider the relation $R_1$ depicted in Figure \ref{fig:db_table} and the first DC in Figure \ref{fig:dcs}. Suppose the first two tuples are assigned the same $Area$ value in \joinView. Thus, the conflict hypergraph will have an edge containing the tuples with $p_{id} = 1$ and $p_{id} = 2$ since they are both owners and cannot be in the same household (the $h_{id}$ value). 
\reva{The conflict hypergraph of our running example is depicted in Figure \ref{fig:graph}.}
\end{example}

\paratitle{List Coloring (Section \ref{subsec:list_coloring}, orange box in Figure \ref{fig:detailed}):}
Proper coloring of the hypergraph ensures that there must be at least two vertices in each edge with distinct colors. Thus, modeling each $FK$ value as a color and each tuple as a vertex allows us to prove that a proper coloring results in an assignment of $FK$ values that satisfies the DCs. 
The values in \joinView\ filled-in by the previous phase induce a list of possible $FK$ values, and thus colors, for $R_1$ tuples. %These correspond to a list of possible colors the vertices in the hypergraph. 
Finding a proper coloring such that each vertex assumes a color from its predefined list is called List Coloring \cite{achlioptas_molloy_1997} and is NP-hard. 
We thus propose a greedy coloring algorithm based on vertex degree. 

%\shweta{Shortened}
\paratitle{Algorithm for Satisfying the DCs (Section \ref{sec:dc_algo}):}  
% Proper coloring ensures that there must be at least two vertices in each edge with distinct colors. Thus, modeling each $FK$ value as a color and each tuple as a vertex allows us to prove that a proper coloring results in an assignment of $FK$ values that satisfies the DCs. 
The size of the conflict hypergraph can be very large 
% (there is a node for each tuple in $R_1$ and the number of edges adjacent to each vertex may be large) 
and thus may cause a significant slowdown in practice. 
Therefore, we partition $R_1$ into smaller sets with the same $B_1 \ldots, B_q$ values and construct a conflict hypergraph for each set separately. 
For each non-\noJoin\ tuple, \joinView\ contains $B_1 \ldots, B_q$ values, 
%(Prev version) which gives a predefined list of possible $FK$ values that the corresponding tuples can get in $R_1$. We then use our greedy coloring algorithm to find a coloring for these vertices. 
so we can use our greedy coloring algorithm to find a coloring for them. 
%(Prev version) For \noJoin\ tuples, we consider all $FK$ values as candidates, and color them at the end. 
We color \noJoin\ tuples at the end using all $FK$ values as candidates. 
This phase may result in the addition of extra tuples to $R_2$ (the second output in Figure \ref{fig:detailed}). 

\section{First Phase: Solving CCs}\label{sec:cc}
In this section, we focus on the first phase. Given two relations $R_1(K_1,$ $A_1, \ldots, A_p, FK)$ and $R_2(K_2,B_1,\ldots, B_q)$, we wish to satisfy a set $S_{CC}$ of CCs over the join view $\joinView = R_1\bowtie_{FK=K_2}R_2$. 

\subsection{Solution as an ILP}
\label{subsec:cc_general}
We give a two-part solution in Algorithm \ref{algo:cc_intersection} where we: (1) model the CCs as a system of linear equations and solve it using an ILP solver, and (2) greedily fill-in $B_1, \ldots, B_q$ values for each tuple in \joinView. 
The first part (lines \ref{l:denote}--\ref{l:solve}) is inspired by \cite{Arasu2011}. Each variable represents the number of tuples with a specific combination of $A_1, \ldots, A_p, B_1, \ldots, B_q$ values in \joinView. 
%(Prev version) Each equation represents a CC from $S_{CC}$ and is written as a sum of the variables whose associated tuples satisfy the selection condition in that CC. 
Each CC is written as a sum of the variables whose associated tuples satisfy its selection condition. 
We now introduce the notion of {\em intervalization} \cite{Arasu2011}.
% as a means of reducing the number of variables in the ILP by splitting the column domain into disjoint intervals based on the input CCs.

\paratitle{Intervalization:}
%In comparison, our approach first looks at the distinct tuples already present in $R_1$ (or \joinView) and bins them according to the disjoint intervals of values returned by applying intervalization on $R_1$ columns that appear in $S_{CC}$.
% The notion of intervalization can assist in reducing the number of variables based on the intervals of values used in $S_{CC}$. 
% We get a further improvement in the contribution to the number of variables from $A_1, \ldots, A_p$ columns by applying intervalization to tuples already present in $R_1$ (which is not possible in the setting of \cite{Arasu2011}).
%(Old version) We apply intervalization to only those tuples that are present in $R_1$ (which is not possible in the setting of \cite{Arasu2011}). We refer to this process as \textit{binning} the distinct $(A_1, \ldots, A_p)$ values in $R_1$. $x$ uses a copy of each bin per combination of values in the cross product of the active domains of (non-key) $R_2$ columns.
Creating a variable for every combination of values in the cross product of the full domains of all the $p+q$ (non-key) columns in \joinView\ would give a very large ILP. 
%(Prev version) We augment the notion of intervalization \cite{Arasu2011} so that it will not only assist in reducing the number of variables based on the intervals of values used in $S_{CC}$, but also consider only those combinations of $A_1, \ldots, A_p$ values that appear in $R_1$. 
We augment the notion of intervalization \cite{Arasu2011} so that it will not only assist in reducing the number of variables based on the intervals of values in $S_{CC}$, but also use only the combinations of $A_1, \ldots, A_p$ values already in $R_1$.
%(Prev version) We refer to this process as \textit{binning} the distinct $(A_1, \ldots, A_p)$ values in $R_1$. 
We call this \textit{binning} the distinct $(A_1, \ldots, A_p)$ values in $R_1$. 

%\shweta{Shortened}
%(Old version) Next, we give the system of equations for Example~\ref{ex:running_ex_setup}. We represent the variables as vector $x$. Each row in $A$ corresponds to a constraint and the matching row in $b$ stores its target count from $S_{CC}$. Since $S_{CC}$ contains linear CCs over \joinView, each element in $A$ is either $0$ or $1$. The goal is to solve $Ax=b$ to get an integral $x$.
%(Prev version) We model the problem as a system of equations $Ax=b$, where row $r_i$ in $A$ corresponds to $CC_i\in S_{CC}$ and the matching row $b_i$ in $b$ stores $CC_i$'s target count. 
In the system of equations $Ax=b$, row $r_i$ (in $A$) corresponds to $CC_i$ and row $b_i$ (in $b$) stores $CC_i$'s target count. We create the vector $x$ of variables by putting bins with the same $B_1, \ldots, B_q$ values as contiguous elements (see Example~\ref{ex:ILP}). 
Since input CCs are linear, each element in $A$ is %either 
$0$ or $1$. 
The goal is to solve for an $x$ with non-negative integer entries (line \ref{l:solve}). Such a solution can be obtained if there exists a solution to \prob\ where $R_1\bowtie_{FK=K_2} R_2$ satisfies $S_{CC}$.
%\amir{define $x$ more clearly. comment 9}
%\amir{What if there is no solution to the ILP? Does that mean no valid completion of \joinView? comment 11}
%(Prev version) In the second part (lines \ref{l:iteratevals}--\ref{l:updatevals}), we fill-in the values of $B_1, \ldots, B_q$ in a greedy manner. We iterate over the variables and for each variable assignment $x_i = v_i$, we look for at most $v_i$ tuples (with empty $B_1, \ldots, B_q$ cells in \joinView) that satisfy the selection condition on $R_1$ corresponding to this variable, and fill-in their $B_1, \ldots, B_q$ values with the combination encoded by $x_i$. 
In the second part (lines \ref{l:iteratevals}--\ref{l:updatevals}), we fill-in the $B_1, \ldots, B_q$ values greedily. %by iterating over variables in $x$. 
For each assignment $x_i = v_i$, we find at most $v_i$ tuples (with empty $B_1, \ldots, B_q$ cells) in \joinView\ that satisfy $x_i$'s selection condition on $R_1$, and fill-in their $B_1, \ldots, B_q$ values as encoded by $x_i$. 

%\shweta{Shortened}
\begin{example}
%\label{ex:intervalization}
\label{ex:ILP}
    Reconsider relations $R_1$ and $R_2$ in Figure \ref{fig:db_table}, CCs in Figure \ref{fig:ccs} and \joinView\ 
    % in Figure~\ref{fig:v1_missing}. 
    described in Example \ref{ex:join_view_missing}.
    Intervalization splits $Age$ into %two intervals 
    $[0, 24]$ and $[25, 114]$ due to $CC_3$ (all other columns are categorical). 
    % (Prev version) Even though $R_1$ contains multiple tuples for \reva{multi-lingual} homeowners with age greater than $24$, it suffices to look at only the following two types of \reva{multi-lingual} homeowners: (1) with $Age$ in $[0, 24]$, and (2) with $Age$ in $[25, 114]$.
    Even though $R_1$ contains multiple tuples for \reva{multi-lingual} homeowners with age greater than $24$, it suffices to look at those with $Age$ in $[0, 24]$ and $[25, 114]$. 
    Importantly, for the given instance, %of $R_1$ and $S_{CC}$, 
    we only need to keep track of the following tuple types: (1) $Age\in [25, 114]$, $Rel=$ Owner, \reva{$Multi$-$ling=0$}, (2) $Age\in [0, 24]$, $Rel=$ Spouse, \reva{$Multi$-$ling=0$}, (3) $Age\in [0, 24]$, $Rel=$ Child, \reva{$Multi$-$ling=1$}, and (4) $Age\in [25, 114]$, $Rel=$ Owner, \reva{$Multi$-$ling=1$}. Here, vector $x$ uses a copy of these four bins with $Area=$ Chicago in $x_1$ to $x_4$ and %with 
    $Area=$ NYC in $x_5$ to $x_8$. \revc{Without the idea of binning, we would need $16$ variables because $Area$ can take $2$ distinct values and $R_1$ contains $8$ unique tuples.} Finally, we iterate through each $CC_i\in S_{CC}$ and add rows $r_i$ and $b_i$ in $A$ and $b$, resp. \revc{For $CC_1$, $r_i=[1, 0, 0, 1, 0, 0, 0, 0]$ and $b_i=4$ 
    %\amir{explain this}
    because only $x_1$ and $x_4$ match the selection conditions in $CC_1$; similarly for other CCs.} 
    Hence, $Ax=b$ %(with marginals) 
    has a solution given by $x_1=2, x_2=1, x_3=2, x_4=2, x_5=1, x_6=0, x_7=0$ and $x_8=1$. 
    Finally, we iterate through $x_i$'s to find \joinView\ tuples which satisfy its selection condition and assign the matching $Area$ value that gives the view in Figure~\ref{fig:v1_complete}. E.g., we find two tuples in \joinView\ with $Age\in[25, 114]$, $Rel=$ Owner and \reva{$Multi$-$ling=0$} for $x_1$ and assign $Area=Chicago$. 
    %Algorithm~\ref{algo:cc_intersection} returns the view in Figure~\ref{fig:v1_complete}.
\end{example}

\begin{figure}[!htb]
	\begin{footnotesize}
	\centering
	\begin{tabular}{| c | c | c | c || c |}
		\hline $p_{id}$ & $Age$ & $Rel$ & \reva{$Multi$-$ling$} & $Area$ \\
		\hline $1$ & $75$ & $Owner$ & $0$ & $Chicago$\\
		\hline $2$ & $75$ & $Owner$ & $1$ & $Chicago$\\
		\hline $3$ & $25$ & $Owner$ & $0$ & $Chicago$\\
		\hline $4$ & $25$ & $Owner$ & $1$ & $Chicago$\\
		\hline $5$ & $24$ & $Spouse$ & $0$ & $Chicago$\\
		\hline $6$ & $10$ & $Child$ & $1$ & $Chicago$\\
		\hline $7$ & $10$ & $Child$ & $1$ & $Chicago$\\
		\hline $8$ & $30$ & $Owner$ & $0$ & $NYC$\\
		\hline $9$ & $30$ & $Owner$ & $1$ & $NYC$\\
		\hline
	\end{tabular}
	\caption{Join view $\joinView = (R_1\bowtie_{FK=K_2}R_2)$ of $R_1$ and $R_2$ from Figure \ref{fig:db_table} with filled-in $Area$ values}\label{fig:v1_complete}
	\end{footnotesize}
\end{figure}

% We skip details for line \ref{l:createrow}, as it involves some case analysis to decide which of the created variables must be summed up in the equation corresponding to a CC.

% We illustrate the idea in Examples~\ref{ex:solve_cc_two}, and define \noJoin\ tuples in Definition \ref{def:noJoin} followed by an example showing how we may end up with such tuples at the end of the \viewGen\ step.

% \IncMargin{1em}
\begin{algorithm}[ht]
    \caption{Complete \joinView\ - Intersecting CCs}\label{algo:cc_intersection}
    
    \SetKwInOut{Input}{Input}\SetKwInOut{Output}{Output}
    \LinesNumbered
    \Input{Relations $R_1(K_1, A_1, \ldots, A_p)$ and $R_2(K_2, B_1, \ldots, B_q)$, 
    $S_{CC}$ - set of linear CCs %(with or without augmentation based on intervalization) 
    with target counts}
    \Output{\joinView\ - $B_1, \ldots, B_q$ values filled-in %\amir{can we guarantee that it is ``most of''?}
    }
    \BlankLine
    
    \tcc{model CCs as integer program and solve}
    View $\joinView(K_1, A_1, \ldots, A_p, B_1, \ldots, B_q) \gets$ copy of $R_1$ with empty $B_1, \ldots, B_q$ columns\;
    $n_{\joinView} \gets$ \revc{number of bins in which distinct tuples of $R_1$ are grouped using binning}\;\label{l:denote}
    %number of distinct tuples in $R_1(A_1, \ldots, A_p)$, binned by intervalization on $S_{CC}$\;\label{l:denote}
    $\forall i\in [q],~ n_i \gets$ number of distinct $B_i$ values in $R_2$\;
    $n \gets n_{\joinView}\times\prod\limits_{i=1}^q n_i$\;
    %$A \gets$ all-zeroes $(n_{\joinView}+|S_{CC}|)\times n$ matrix for CCs\;
    \revc{/*$A$ will be a $(n_{\joinView}+|S_{CC}|)\times n$ matrix for CCs*/}\label{l:what_is_A}\\
    $b \gets$ empty $(n_{\joinView}+|S_{CC}|)\times 1$ vector for target counts\;
    $x \gets n\times 1$ vector for non-negative integer variables\;\label{l:elements_in_x}
    \For{each tuple type $t_i$ accounted for by $n_{\joinView}$}
    {   \label{l:augment_S_CC}
        Add row in $A$ \revc{/* $0$'s except $1$ for relevant variables in $x$*/}\;\label{l:createrow_augment_S_CC}
        $b[i] \gets $ number of copies of $t_i$ in $R_1$\;
    }
    \For{each $CC_i\in S_{CC}$}
    {
        Add row in $A$ \revc{/* $0$'s except $1$ for relevant variables in $x$*/}\;\label{l:createrow}
        $b[i] \gets CC_i.target$\;
    }
    Compute $x$ by solving $Ax=b$\;\label{l:solve}
    \tcc{fill values in $B_1, \ldots, B_q$ greedily}
    \For{each $x_i\in x$ with value $c_i$}
    {\label{l:iteratevals}
        Find (at most) $c_i$ tuples satisfying $x_i$'s condition in \joinView\;
        Update $B_1, \ldots, B_q$ values encoded by $x_i$\;\label{l:updatevals}
    }
    \Return $\joinView$
\end{algorithm}
\paratitle{Augmenting with All-Way Marginals:}
When $A$ is sparse, some $x_i$ values in the solution may not match the true counts. %\amir{what is the reason for the error? comment 10} 
%Despite such discrepancies, we are able to update several $B_1, \ldots, B_q$ values in $\joinView$ because we find at most as many tuples as the value of $x_i$ in the solution. 
Despite such discrepancies, we can complete several tuples in $\joinView$ because we update at most as many tuples as the value of $x_i$ in the solution. 
The order of updates may also impact which subset of $\joinView$ tuples gets specific $B_1, \ldots, B_q$ values. %, causing errors in the final CC counts. 
For example, another solution to the ILP in Example~\ref{ex:ILP} is given by $x_1=0, x_2=3, x_3=0, x_4=4, x_5=x_6=x_7=0$, $x_8=2$. This assigns $Area=$ Chicago to tuples with $p_{id}=2, 4, 5, 9$ in \joinView. However, the remaining tuples do not get any $Area$ value and no CC in $S_{CC}$ gets satisfied in \joinView. 
%For example, when there are not enough tuples for $x_j$ because a previous $x_i$ had an error in its value and ended up consuming tuples needed for $x_j$. Lastly, some tuples in $\joinView$ may not get assigned any $B_1, \ldots, B_q$ values at the end of this phase.  
We overcome this issue by using both $S_{CC}$ and all all-way marginals over $A_1, \ldots, A_p$ from $R_1$ when solving the ILP (see the discussion about the baseline's CC accuracy in Section~\ref{sec:experiments}). The solution reported in Example~\ref{ex:ILP} was computed with all all-way marginals.

\paratitle{Complexity:} The complexity of Algorithm~\ref{algo:cc_intersection} is $O(|S'_{CC}|\cdot m + S)$, where $S'_{CC}$ contains CCs from $S_{CC}$ along with the marginals, and $m$ is the number of variables that is upper-bounded by the number of tuples in $R_1$ times the product of the sizes of the active domains of $B_1, \ldots, B_q$ in $R_2$. Lastly, $S$ is the time complexity of the ILP solver. 

% Completing \joinView\ with Algorithm \ref{algo:cc_intersection} is possible, but incurs very slow runtimes in practice (Section \ref{sec:experiments}. This motivates the nex

\iffalse
\begin{example}
We give another example for when some $Area$ cells may remain empty at the end of the Algorithm \ref{algo:cc_intersection}. Suppose $\joinView$ has the schema given in Example~\ref{ex:solve_cc_two}, and the true instance of $R_1$ and $R_2$ have $55$ men in $Area=$ Chicago and $45$ men in $Area=$ NYC. If the solution $x$ has $56$ men in $Area=$ Chicago$ and $44$ men in $Area=$ NYC$ instead of $55$ and $45$, respectively, then $\joinView$ will contain $55$ men in $Area=$ Chicago because we first find at most $56$ such tuples to assign $Area$ value but only $55$ exist. Next, $\joinView$ will contain $44$ men in $Area=$ NYC and one man with no $Area$ assignment. We refer to the tuple with missing $Area$ value as an \noJoin\ tuple.
\end{example}
\fi

% SUPPORTING DISJUNCTIONS - DO NOT DELETE
% \paragraph*{Supporting Disjunctions}
% If we have a CC with disjunctions $\sigma = \varphi_1 \lor \ldots \lor \varphi_l$ where $\varphi_i$ is a conjunct, denote the set of variable for $\varphi_i$ by $S_i$, then we use the inclusion-exclusion principle to compute the sum of the variables without duplicate counting, i.e. $\cup_{1\leq i \leq l} S_i$. 

%\paragraph*{Enhancing Performance in Practice} \ashwin{Why is this a subheading? Merge with the intro to Sec 4.3?}
%Next, we characterize cases where the problem for CCs is easier to solve and devise an algorithm to tackle them. 

\subsection{Efficient Algorithm for Special CC Types}\label{subsec:cc_relationships}
% \ashwin{Call this section: Efficient algorithm for Special CC-types or something like that. }
In practice, Algorithm \ref{algo:cc_intersection} may incur slow runtimes as generating and solving the system of equations is time consuming, even with state-of-the-art ILP solvers (as shown in Section \ref{sec:experiments}). Thus, we describe a model for %characterizing 
relationships between the CCs in $S_{CC}$ and devise an algorithm to better tackle \joinView\ completion in specific scenarios. 
% We start with the notion of a tuple $t\in \joinView$ contributing to the count of a CC. 

% \begin{figure}[h]
%     \begin{small}
%     \centering
%     \begin{subfigure}{\linewidth}
%         \centering
%         \begin{enumerate}
%         \itemsep0em
%         \item[] $CC_1: |\sigma_{Age\in[10, 14], Area=Chicago} (R_1\bowtie R_2)| = 20$
%         \item[] $CC_2: |\sigma_{Age\in[50, 60], \reva{Multi\text{-}ling=0}, Area=NYC} (R_1\bowtie R_2)| = 25$
%         \end{enumerate}
%         \caption{Disjoint CCs}\label{fig:disjoint_ccs}
%     \end{subfigure}
%     \begin{subfigure}{\linewidth}
%         \centering
%         \begin{enumerate}
%         \itemsep0em
%         \item[] $CC_3: |\sigma_{Age\in[13, 64], Area=Chicago} R_1\bowtie R_2| = 100$
%         \item[] $CC_4: |\sigma_{Age\in[18, 24], \reva{Multi\text{-}ling=0}, Area=Chicago} R_1\bowtie R_2| = 16$
%         \end{enumerate}
%         \caption{Contained CCs: $CC_4 \subseteq CC_3$}\label{fig:contained_ccs}
%     \end{subfigure}
%     \end{small}
%     \caption{CC relationships \amir{candidate for removal and place in Example 4.9 in the text}}\label{fig:cc_relationships}
% \end{figure}

\begin{figure}[ht]
    \begin{small}
    \centering
        \centering
        \begin{enumerate}
        \itemsep0em
        \item[] $CC_1: |\sigma_{Age\in[10, 14], Area=Chicago} (R_1\bowtie R_2)| = 20$
        \item[] $CC_2: |\sigma_{Age\in[50, 60], \reva{Multi\text{-}ling=0}, Area=NYC} (R_1\bowtie R_2)| = 25$
        \item[] $CC_3: |\sigma_{Age\in[13, 64], Area=Chicago} R_1\bowtie R_2| = 100$
        \item[] $CC_4: |\sigma_{Age\in[18, 24], \reva{Multi\text{-}ling=0}, Area=Chicago} R_1\bowtie R_2| = 16$
        \end{enumerate}
    \end{small}
    \caption{CC relationships. $CC_1 \cap CC_2 = \emptyset$, and $CC_4 \subseteq CC_3$}\label{fig:cc_relationships}
    \vspace{-4mm}
\end{figure}

% \begin{definition}
% \label{def:contribute}
%     If a tuple $t\in R_1$ (and similarly in $\joinView$) satisfies the predicates that apply to $R_1$ in $CC_i\in S_{CC}$, then we say that $t$ can contribute to $CC_i$.
% \end{definition}

% \begin{example}
%     Reconsider the example in Fig. \ref{fig:db_table}. Tuple $t_1 = (75, 0, 1)$ can contribute to $CC_1$ (or $CC_2$), and can be used towards satisfying $CC_1$ (or $CC_2$) in $\joinView$ by assigning it an $Area$ value of $Chicago$ (or $NYC$). But, $t_1$ cannot contribute to $CC_3$ because $t_1.Age \not\leq 24$, and $CC_4$ because $t_1.Sex\neq 1$.
% \end{example}

\begin{definition}
\label{def:disjoint}
% $CC_i, CC_j\in S_{CC}$ are disjoint if their selection conditions on the $R_1$ attributes are disjoint, denoted by $CC_i\cap CC_j = \emptyset$.
$CC_i, CC_j\in S_{CC}$ are disjoint either if their selection conditions on the $R_1$ attributes are disjoint, or if their selection conditions on $R_1$ are identical and the conditions on $R_2$ are disjoint. We denote this by $CC_i\cap CC_j = \emptyset$.
% $CC_i, CC_j\in S_{CC}$ are disjoint if a tuple that can contribute to $CC_i$ cannot contribute to $CC_j$ and vice versa. We denote this by $CC_i \cap CC_j = \emptyset$. 
\end{definition}

% \begin{example}
%     \label{ex:disjoint}
%     \amir{candidate for removal}
%     % Given the following CCs:
%     % \begin{enumerate}
%     %     \itemsep0em
%     %     \item $CC_1: |\sigma_{Age\in[10, 14], Area=Chicago} (R_1\bowtie R_2)| = 20$
%     %     \item $CC_2: |\sigma_{Age\in[50, 60], Sex=1, Area=Chicago} (R_1\bowtie R_2)| = 25$
%     % \end{enumerate}
%     Consider the CCs in Figure \ref{fig:disjoint_ccs}. 
%     $CC_1\cap CC_2 = \emptyset$ because $[10, 14]\cap [50, 60] = \emptyset$, so a tuple $t\in R_1$ cannot contribute to both CCs at the same time.
% \end{example}

Note that we also consider pairs of CCs with the same $R_1$, but disjoint $R_2$ selection conditions as disjoint. 
For a pair $(CC_i, CC_j)$ of such CCs, assigning $B_1, \ldots, B_q$ values in tuples that contribute to the count of $CC_i$ should not limit the set of tuples available for $CC_j$, if a solution exists. We label such pairs similarly to a pair of disjoint CCs. 
%\shweta{This will help with the settings of $S_{CC}$ where we can get an improvement, especially when we are considering all the $982$ pumas. It is not reasonable to give strictly disjoint set of CCs in $S_{CC}$ at that scale. Comments?} 
Next, we define the notion of CC containment.
%using the classic definition of Union of Conjunctive Queries (UCQs) containment \cite{ChandraMerlin}. \shweta{Algo implementation works with only CQs so far.}

\begin{definition}
\label{def:contained}
% Let $CC_i, CC_j \in S_{CC}$ such that $CC_i: |\sigma_{\varphi_i}(R)| = k_i$ and $CC_j: |\sigma_{\varphi_j}(R)| = k_j$. $CC_i$ is contained in $CC_j$ if $\sigma_{\varphi_i}(R)$ is contained in $\sigma_{\varphi_j}(R)$, where $\sigma_{\varphi_i}(R)$ and $\sigma_{\varphi_j}(R)$ are UCQs. 
%We cannot ignore $R_2$ columns. Example: $CC_1: |\sigma_{Age=[20, 80], Area=Chicago, Ten=1} R_1\bowtie R_2| = a_1$ and $CC_2: |\sigma_{Age=[30, 50], Area=Chicago} R_1\bowtie R_2| = a_2$. Not every tuple that gets assigned $Area$ and $Ten$ according to $CC_2$ contributes towards the count for $CC_1$.}
Let $CC_i, CC_j \in S_{CC}$ such that $CC_i: |\sigma_{\varphi_i}(R)| = k_i$ and $CC_j: |\sigma_{\varphi_j}(R)| = k_j$. $CC_i$ is contained in $CC_j$, denoted $CC_i \subseteq CC_j$, if $\varphi_i$ uses a (non-strict) superset of attributes in $\varphi_j$ and for each common attribute, the values in $CC_i$ are a subset of the corresponding values in $CC_j$.
% \amir{Changed the def. back to what it was}
\end{definition}

% \begin{example}
%     \label{ex:contained}
%     \amir{candidate for removal}
%     % Given the following CCs:
%     % \begin{enumerate}
%     %     \itemsep0em
%     %     \item $CC_1: |\sigma_{Age\in[13, 64], Area=Chicago} R_1\bowtie R_2| = Chicago$
%     %     \item $CC_2: |\sigma_{Age\in[18, 24], Sex=1, Area=Chicago} R_1\bowtie R_2| = 16$
%     %     %\item $CC_3: |\sigma_{Rel=0, Area=Chicago} R_1\bowtie R_2| = 150$
%     %     %\item $CC_4: |\sigma_{Age\in[18, 24], Sex=1, Area=NYC} R_1\bowtie R_2| = 20$
%     % \end{enumerate}
%     Consider the CCs in Figure \ref{fig:contained_ccs}. 
%     Here, $CC_4\subseteq CC_3$ since for any database $D$, $\sigma_{\varphi_4}(D) \subseteq \sigma_{\varphi_3}(D)$, where $\sigma_{\varphi_i}$ is the selection query which is counted in $CC_i$.%, but $CC_3$ is not contained in any of them since a tuple with value $Rel=0$ can have $Age$ value outside $[13, 64]$. Lastly, $CC_4\not\subseteq CC_1$ because their $Area$ values differ.
% \end{example}

Intuitively, if $CC_i$ is contained in $CC_j$, then $CC_i$ is more restrictive than $CC_j$, and assigning a tuple $t\in R_1$ values in $B_1, \ldots, B_q$ that satisfy the selection condition in $CC_i$ will also satisfy the selection condition in $CC_j$. This observation defines a partial order on $S_{CC}$ which we utilize later to find a solution for CCs.

\begin{definition}
\label{def:intersect}
    $CC_i, CC_j\in S_{CC}$ are said to be intersecting if they are neither disjoint nor does one contain the other. 
    We denote this by $CC_i\cap CC_j \neq \emptyset$.
\end{definition}

%Another way of defining intersecting CCs is by saying that there exists a tuple that can contribute to only one of them and there exists a tuple that can contribute to both. 

\begin{example}
    \label{ex:intersect}
    Assume $R_1$ (or $\joinView$) contains $10$ tuples with $Age\in[10, 30)$, $20$ with $Age\in[30, 50)$ and $50$ with $Age\in[50, 70]$. Let: 
    \begin{enumerate}
        \itemsep0em
        \item[] $CC_1: |\sigma_{Age\in[10,50), Area=Chicago} R_1\bowtie R_2| = 30$
        \item[] $CC_2: |\sigma_{Age\in[30, 70], Area=NYC} R_1\bowtie R_2| = 30$
        \end{enumerate}
    % $CC_1: |\sigma_{Age\in[10,50), Area=Chicago} R_1\bowtie R_2| = 30$ 
    % $CC_2: |\sigma_{Age\in[30, 70], Area=NYC} R_1\bowtie R_2| = 30$. 
    If all tuples with $Age\in[30, 50)$ get assigned $Area=NYC$, $CC_1$ cannot be satisfied.
    Even when $Area=Chicago$ in $CC_2$, it is unclear how many tuples with age in $[30, 50)$ can be assigned $Area=Chicago$.
\end{example}

%\shweta{TO DO: Update once definitions are set.} Clearly, given two CCs, $CC_i, CC_j$ exactly one of the following holds: $CC_i \subseteq CC_j$, or $CC_j \subseteq CC_i$, or $CC_i\cap CC_j = \emptyset$ or $CC_i\cap CC_j \neq \emptyset$.

% \paragraph*{Solution Without Intersecting CCs}\label{sec:cc_no_intersection}
\paratitle{Solution Without Intersecting CCs:}
% \label{subsec:CC_no_intersection}
% \ashwin{Make this a subsubsection}
Now, we focus on the setting where there are no intersecting CCs present and describe Algorithm~\ref{algo:cc_no_intersection} that outputs an exact solution. 

We use the notion of a Hasse diagram \cite{williamson2002combinatorics}, denoted by $\mathcal{H}=(V, E)$, to encode the containment relationships between the CCs in $S_{CC}$. We refer to each connected component in the undirected version of $\mathcal{H}$ as a diagram. Within each diagram, the CC that is not contained in any other CC is referred to as the \textit{maximal element}. 

% \begin{example}
% In Example \ref{ex:disjoint}, $\mathcal{H}$ would contain two vertices without any edges. However, in Example \ref{ex:contained}, it would contain two vertices with a directed edge from $3$ to $4$. Here, $CC_3$ is the maximal element from partial order $CC_4\subseteq CC_3$.
% \end{example}

Algorithm \ref{algo:cc_no_intersection} is given the join view \joinView\ with missing $B_1,\ldots, B_q$ columns, $S_{CC}$ and the Hasse diagram $\mathcal{H}$ describing the containment relations in $S_{CC}$. We denote by $\mathcal{V}(\mathcal{H})$ and $\mathcal{E}(\mathcal{H})$ the collective set of all nodes and edges of the diagrams in $\mathcal{H}$. 
%The algorithm operates recursively with a single base case, described as follows. If all the CCs in $S_{CC}$ are disjoint, i.e., there are no edges in $\mathcal{H}$ (line \ref{l:all_disjoint}), then it simply chooses $k_i$ tuples that can contribute to each $CC_i \in S_{CC}$ and completes their $B_1,\ldots, B_q$ values given by $CC_i$. 
The algorithm operates recursively with a single base case -- if all the CCs in $S_{CC}$ are disjoint, i.e., $\mathcal{E}(\mathcal{H})$ is empty (line \ref{l:all_disjoint}), then it simply chooses $k_i$ tuples that can contribute to each $CC_i \in S_{CC}$ and completes their $B_1,\ldots, B_q$ values given by $CC_i$. 
When the base case is not met, for each $H\in\mathcal{H}$, the algorithm makes a recursive call on each child of the maximal element $m$ in $H$ (lines \ref{l:iter_child}--\ref{l:call_recurse}) to get the resulting view of the sub-diagram and then finds the remaining number of tuples that will get $CC_m$ to its target count (lines \ref{l:combine_with_root}--\ref{l:combine}). 
Finally, in the loop in line \ref{l:remaining}, the algorithm completes any missing values in the tuples while ensuring that these values do not add to the count of any $CC\in S_{CC}$ by finding combinations that are not specified in $S_{CC}$.

% \IncMargin{1em}
\begin{algorithm}[ht]
    \caption{Complete \joinView\ - Non-intersecting CCs}
    \label{algo:cc_no_intersection}
    
    \SetKwInOut{Input}{Input}\SetKwInOut{Output}{Output}
    \LinesNumbered
    \Input{\joinView\ - View to complete, 
        $S_{CC}$ - Set of CCs, 
        $\mathcal{H}$ - Set of diagrams encoding CC containment}
    \Output{\joinView\ - $B_1, \ldots, B_q$ values filled-in}
    
    \BlankLine
    
    $\forall i\in \mathcal{V}(\mathcal{H}).~\sigma_i, k_i\leftarrow$ selection condition on $R_1$, count\;\label{l:first_non_inter}
    \If{$\mathcal{E}(\mathcal{H})=\emptyset$}
    {\label{l:all_disjoint}
        \ForEach{$i\in \mathcal{V}(\mathcal{H})$}
        {\label{l:assign_vals_loop}
            Find $k_i$ tuples in \joinView\ (without $B_1,\ldots,B_q$ values) that satisfy $\sigma_i$\;\label{l:findTo_assign_vals_end}
            Assign $B_1, \ldots, B_q$ values\;\label{l:assign_vals_end}
        }
        \Return \joinView \;\label{l:return_all_disjoint}
    }
    \ForEach{$H\in \mathcal{H}$}
    {\label{l:loop_diagrams}
        $m\leftarrow$ maximal elem. in $H$\;\label{l:maximal}
        \ForEach{$c \in children(m)$}
        {\label{l:iter_child}
            $H_{c}\leftarrow$ sub-diagram with maximal elem. $c$\;
            $\joinView =$ Algorithm \ref{algo:cc_no_intersection}($\joinView, S_{CC}, \{H_{c}\}$)\;\label{l:call_recurse}
        }
        Find $k_{m}-\sum_{c \in children(m)}k_c$ tuples in \joinView\ that satisfy $\sigma_m\bigwedge\limits_{c \in children(m)}\neg\sigma_c$\;\label{l:combine_with_root}
        Assign $B_1, \ldots, B_q$ values from $CC_m$\;\label{l:combine}
    }
    \revc{$combo_{unused} \gets$ list of combinations in $R_2$ columns that are not relevant to $S_{CC}$}\;\label{l:combo_unused}
    \ForEach{$t\in \joinView$}
    {\label{l:remaining}
        %\If{any $t.B_{i_1}, \ldots, t.B_{i_l}$ value is missing}
        %{
        %    Assign combinations of values not used in $S_{CC}$\;\label{l:complete_partial_assignment}
            %Assign values that are not used in $\sigma_i,~\forall i\in V_{\mathcal{H}}$\;\label{l:complete_partial_assignment}
        %}
        \If{$t.B_{i}, \ldots, t.B_{q}$ values are missing}
        {\label{l:add_combos}
            Assign a combination of values \revc{from $combo_{unused}$}\;\label{l:complete_partial_assignment}
            %Assign values that are not used in $\sigma_i,~\forall i\in V_{\mathcal{H}}$\;\label{l:complete_partial_assignment}
        }
    }
    \Return \joinView\;
\end{algorithm}
% \DecMargin{1em}

%\shweta{Shortened}
\begin{example}\label{ex:non_intersection_algo}
Reconsider CCs 1--4 in Figure \ref{fig:cc_relationships}. The set $\mathcal{H}$ is $\{H_1, H_2,$ $H_3\}$, where $H_1$ and $H_2$ contain only $CC_1$ and $CC_2$, respectively, and $H_3$ is a diagram composed of one edge from $CC_3$ to $CC_4$. 
Algorithm \ref{algo:cc_no_intersection} gets $\mathcal{H}$ along with \joinView\ and $S_{CC}$ as input. It %first 
assigns $CC_i$'s selection condition on $R_1$ and target count to $\sigma_i$ and $k_i$, for all $i$. Then, it checks the condition in line \ref{l:all_disjoint}, which does not hold as we have the edge $(CC_3, CC_4)$. 
Thus, it goes to the loop in line \ref{l:loop_diagrams} to iterate over the three diagrams. 
For $H_3$, the maximal element is $CC_3$, so Algorithm \ref{algo:cc_no_intersection} recursively calls itself for the sub-diagram containing only $CC_4$ (line \ref{l:call_recurse}) and finds $16$ tuples such that $Age\in[18,24]$ and \reva{$Multi$-$ling=0$}, and assigns to them $Area=Chicago$ (lines \ref{l:assign_vals_loop}--\ref{l:assign_vals_end}). It then returns from the recursive call to find $100-16=84$ tuples with %that have
$Age\in[13,64]\setminus[18,24]$ and \reva{$Multi$-$ling\neq 0$}
% (Age\in[13,64])\land~\neg(Age\in[18,24]\land\reva{$Multi$-$ling$\neq 0})
, and assigns to them $Area = Chicago$ (lines \ref{l:combine_with_root}--\ref{l:combine}). 
For $H_1$ ($H_2$) the maximal element is $CC_1$ ($CC_2$), the algorithm then performs a recursive call to itself with an empty diagram, and returns from the call to select $20$ ($25$) tuples that have $Age\in [10,14]$ ($Age\in [50,60]$ and \reva{$Multi$-$ling=0$}) and assign to them $Area=Chicago$ ($Area=NYC$). \revc{
%\amir{The first sentence here should be placed in the text describing the algo.:}Lastly, we also keep track of combinations of values in the $R_2$ columns that are not relevant to $S_{CC}$ in a list called $combo_{unused}$. 
Here, $combo_{unused}$ contains values from $Area$'s domain except $Chicago$ and $NYC$ which get used in $S_{CC}$. If there are any tuples in \joinView\ without an assignment (see loop on line~\ref{l:remaining}), we assign to each a value chosen from $combo_{unused}$.} 
%\amir{revise to include combo\_unused for R3O7}
\end{example}

%\amir{add paragraph about how invalid tuples are created} 
% At the end, any tuple in \joinView\ without $B_1, \ldots, B_q$ values is randomly assigned a combination of $B_1, \ldots, B_q$ values that are not used in $S_{CC}$ (line \ref{l:complete_partial_assignment}). Let $unused_{combo}$ be the set of these values.

% \begin{definition}
% \label{def:noJoin}
%     Tuples in $\joinView$ with no assignment in $B_1, \ldots, B_q$ at the end of Algorithm \ref{algo:cc_no_intersection} because $|unused_{combo}|=\emptyset$ are called \noJoin\ tuples. These correspond to tuples in $R_1$ without an $FK$ to $K_2$ mapping, and therefore cannot join with any tuple in $R_2$ yet.
% \end{definition}

%Tuples in $\joinView$ with no assignment in $B_1, \ldots, B_q$ at the end of Algorithm \ref{algo:cc_no_intersection} because ... are called {\em \noJoin\ tuples}. 
At the end of the algorithm, any tuple in \joinView\ without $B_1, \ldots, B_q$ values is randomly assigned a combination of values that is not used in $S_{CC}$ (line \ref{l:complete_partial_assignment}). We refer to these tuples as {\em \noJoin\ tuples} if no such combination is available. Observe that the matching tuples in $R_1$ do not have an $FK$ to $K_2$ mapping
\revc{, i.e., if there is a tuple in \joinView\ that is missing an assignment, also called an \noJoin\ tuple, then \joinView\ does not give a set of candidate $K_2$ values that could be assigned in its $FK$ cell.
}
We will handle such tuples in Section \ref{sec:dc_algo}. 
%\amir{If this is due to the modified marginals in Section 4.3, then move there and the proposition works} \shweta{It is not because of the marginals.}

\begin{proposition}\label{prop:hybrid}
    If $S_{CC}$ does not contain intersecting CCs and there exists a join view \joinView\ that satisfies all CCs in $S_{CC}$, then Algorithm \ref{algo:cc_no_intersection} finds such a view.
\end{proposition}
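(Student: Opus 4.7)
The plan is to argue by strong induction on $|\mathcal{V}(\mathcal{H})|$: for every Hasse diagram $\mathcal{H}$ and every partially-filled \joinView\ that admits a completion satisfying the relevant CCs, Algorithm~\ref{algo:cc_no_intersection} produces such a completion. A key preliminary fact, which I would establish first, is that under the no-intersection hypothesis any two siblings in $\mathcal{H}$ (two immediate children of a common parent, or two maximal elements of distinct diagrams) must be disjoint per Definition~\ref{def:disjoint}: they cannot contain one another without inducing an ancestor/descendant relation in the Hasse diagram, and by Definition~\ref{def:intersect} intersection is excluded by assumption. Consequently, assignments made for one sibling---either on $R_1$-disjoint pools or partitioning a shared $R_1$-pool via disjoint $R_2$-conditions---never shrink the feasible set for another.

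The base case, $\mathcal{E}(\mathcal{H})=\emptyset$, consists of pairwise-disjoint CCs. The preliminary fact plus the assumed existence of a valid completion $\joinView^{\star}$ guarantee that for each $CC_i$ there remain at least $k_i$ as-yet-unassigned tuples satisfying $\sigma_i$ in \joinView, so the greedy loop of lines~\ref{l:assign_vals_loop}--\ref{l:assign_vals_end} succeeds and produces the target counts by construction.

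For the inductive step, consider a diagram $H\in\mathcal{H}$ with maximal element $m$ and immediate children $c_1,\dots,c_\ell$. I would first argue that a hypothetical valid $\joinView^{\star}$, when restricted to tuples contributing to the sub-diagram rooted at each $c_i$, yields a valid completion of that smaller sub-problem (the no-intersection hypothesis is inherited, and sibling disjointness rules out cross-interference between the $c_i$-branches). By the inductive hypothesis, each recursive call on line~\ref{l:call_recurse} succeeds. Since Definition~\ref{def:contained} forces $\sigma_{c_i}\Rightarrow\sigma_m$ and the children are mutually disjoint, in $\joinView^{\star}$ exactly $\sum_i k_{c_i}$ of the $k_m$ tuples contributing to $CC_m$ lie in the children's regions, leaving at least $k_m-\sum_i k_{c_i}$ tuples that satisfy $\sigma_m\wedge\bigwedge_i\neg\sigma_{c_i}$ and remain unassigned when control returns to lines~\ref{l:combine_with_root}--\ref{l:combine}; these can therefore be completed as the algorithm requires. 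The final loop (lines~\ref{l:remaining}--\ref{l:complete_partial_assignment}) only uses combinations from $combo_{unused}$, which by construction satisfy no $CC\in S_{CC}$ and hence cannot perturb any count.

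The main obstacle is formalizing the accounting argument in the inductive step: precisely tracking how $\joinView^{\star}$ must distribute its contributing tuples between each child's region and the leftover region $\sigma_m\wedge\bigwedge_i\neg\sigma_{c_i}$, and showing that this distribution is preserved through the recursion. I expect to handle this via an invariant asserting that after the recursive calls \joinView\ and $\joinView^{\star}$ can be made to agree on all tuples contributing to already-processed $CC_{c_i}$ (by symmetry among tuples of identical type within each $c_i$-branch), so that the greedy step at the $m$-level can still mimic $\joinView^{\star}$'s choices in the leftover region.
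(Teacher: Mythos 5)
Your proposal is correct and follows essentially the same route as the paper: a base case handling pairwise-disjoint CCs (the paper isolates this as a lemma), followed by induction on the Hasse diagram that recurses on the children of each maximal element, uses their pairwise disjointness, fills the leftover $k_m-\sum_c k_c$ tuples satisfying $\sigma_m\wedge\bigwedge_c\neg\sigma_c$, and notes that the final $combo_{unused}$ assignments cannot affect any CC count. The accounting invariant you flag as the main obstacle is the one step the paper leaves implicit, so your sketch is, if anything, slightly more explicit about where the work lies.
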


\common{
\paratitle{Complexity:}
% The complexity of Algorithm \ref{algo:cc_no_intersection} is $O(|S_{CC}|^2\cdot d)$, where $d$ is the number of columns in \joinView. When computing disjointness and containment relationships between CCs, we require $|S_{CC}|^2$ pairwise comparisons, each taking time that is linear in $d$. \shweta{Additive factor of $|S_{CC}|\cdot|V_{join}|\cdot$combinations}
% \amir{Verify:} 
% (Prev ver) The complexity of Algorithm \ref{algo:cc_no_intersection} is $O(|S_{CC}|^2\cdot d + |S_{CC}| \cdot (\max_{i} |act\_dom(B_i)|)^d + |V_{Join}|)$, where $d$ is the number of columns in \joinView\ and $act\_dom(B_i)$ is the active domain of $B_i$ in $R_2$. The first term is for computing the disjointness and containment relationships between the CCs and recursing on the Hasse diagrams (lines \ref{l:loop_diagrams}--\ref{l:combine}), the second term is for the time to construct a list of combinations (line \ref{l:combo_unused}), and the third term is for choosing a random value per tuple in lines \ref{l:remaining}--\ref{l:complete_partial_assignment} and writing to tuples in loop in line~\ref{l:loop_diagrams}. In practice, for the second term, we only consider columns used in $S_{CC}$ instead of $d$.
The complexity of Algorithm \ref{algo:cc_no_intersection} is $O(|S_{CC}|^2\cdot d_1 + |S_{CC}| \cdot (\max_{i} |dom_a(B_i)|)^{d_2} + |S_{CC}|\cdot |V_{Join}|)$
%\amir{why does the first term contains $d_1$? the third term should be $|V_{Join}| \cdot |S_{CC}|$ since all DCs can be contained in one another right?} \shweta{First term -- pairwise comparison between CCs to label them as disjoint, contained or intersecting which requires comparing values specified for each attribute in \joinView\ in the worst-case (we say it is passed as input but we compute it and then pass it in), third term -- writes to \joinView\ (If $\mathcal{H}$ contains a single tree which is a path, the amount of work for each CC is just its target count minus the target count of its child (line 12), right? So target counts could be $5, 4, 3, 2, 1$ but we write to one tuple for the leaf, one tuple for its parent and so on until we reach the root)} 
%\amir{still sounds to me like it should be $|V_{Join}| \cdot |S_{CC}|$ in lines 4--5 since each DC can only need one tuple and we have to iterate over all $\joinView$ to find each of the tuples}
, where $d_1, d_2$ are the number of columns in \joinView\ and $R_2$, $dom_a(B_i)$ is the active domain of $R_2.B_i$. The first term is for computing the relationships between CCs and recursing on the Hasse diagrams
(lines \ref{l:loop_diagrams}--\ref{l:call_recurse}), second term is for constructing $combo_{unused}$ (line \ref{l:combo_unused}) and third term is for lines \ref{l:first_non_inter}--\ref{l:return_all_disjoint}, \ref{l:combine_with_root}--\ref{l:combine} and choosing a random value per tuple in lines \ref{l:remaining}--\ref{l:complete_partial_assignment}. In practice, we only consider columns used in $S_{CC}$ instead of $d_2$.
}

% because we restrict our attention to conjunctive selection predicates. The total number of nodes in $\mathcal{H}$ is $|S_{CC}|$ and the algorithm traverses each of them a constant number of times. 

%Recall that in definition \ref{def:contained} two CCs have a containment relationship if their selection conditions have a containment relationship. For conjunctive selection predicates on a single view \joinView, this can be done in time that is polynomial in the size of $S_{CC}$.
% Checking this for a pair of CCs has been shown to be NP-hard in the size of the queries \cite{ChandraMerlin}.

% \subsubsection*{Handling Sets with Intersecting and Non-Intersecting CCs}
% \ashwin{Add a new subsubsection -- handling intersecting and non-intersecting CCs. Begin with a naive approach of breaking up the intersecting ccs into non-intersecting ones and say what the problem is. Then go on to discuss the hybrid algorithm.}

% \subsection{Handling Sets with Intersecting and Non-Intersecting CCs}
\subsection{Hybrid Approach}
\label{sec:hybrid_approach}
In many cases, $S_{CC}$ contains a combination of disjoint, contained, and intersecting CCs, so we combine Algorithms \ref{algo:cc_intersection} and \ref{algo:cc_no_intersection}.
% We now discuss our approach for utilizing Algorithm \ref{algo:cc_no_intersection} in this scenario. 

% One na\"ive approach is to try to reduce this scenario to the case of non-intersecting CCs by ``spliting'' the original CCs into sub-CCs that do not intersect. For example, converting a pair of intersecting CCs $CC_i, CC_j$ into disjoint CCs $CC_i', CC_j'$ and $CC_{ij}$, where a tuple can contribute to $CC_i'$ iff it can contribute to $CC_i$ and not to $CC_j$, and a tuple can contribute to $CC_{ij}$ iff it can contribute to both CCs. 
% However, Example~\ref{ex:intersect} shows that this is not always possible since we may not know the target count for each of the sub-CCs. 

% Instead, our approach in the next subsection will be to rewrite a collection of CCs as a system of linear equations, where variables give counts of different \joinView\ tuples with $B_1, \ldots, B_q$ values, and equations represent the CCs. We use the solution to complete \joinView.

%We use a combination of Algorithms \ref{algo:cc_intersection} and \ref{algo:cc_no_intersection}. 
We start by constructing Hasse diagram based on containment relationship between pairs of CCs in $S_{CC}$. Next, we iterate through each diagram $H\in\mathcal{H}$, and discard $H$ if it contains \revc{intersecting CCs}. 
\revc{Note that the absence of an edge in the Hasse diagram does not guarantee the lack of intersection at the beginning of phase I 
%(Prev ver) (Example \ref{ex:intersect} demonstrates this scenario since the Hasse diagram would start out as two nodes without an edge, but the CCs represented by these nodes do intersect). 
(demonstrated by Example \ref{ex:intersect} where the Hasse diagram starts out as two nodes without an edge, but the CCs represented by these nodes do intersect). 
%\shweta{Shortened}
% (Prev version) Therefore, we keep track of which CCs intersect to then discard the affected diagrams and run Algorithm \ref{algo:cc_no_intersection} on the remaining diagrams.}
% Let $S_1$ be the set of CCs left in $\mathcal{H}$ 
% %at the end of the discard process, 
% and $S_2$ be the set of discarded CCs.
Therefore, we keep track of which CCs intersect to then discard the affected diagrams (set $S_2$) and run Algorithm \ref{algo:cc_no_intersection} on the remaining diagrams (set $S_1$).} 
In particular, $\forall CC_i\in S_1, CC_j\in S_2.~CC_i\cap CC_j = \emptyset$, $CC_i\not\subseteq CC_j$ and $CC_j\not\subseteq CC_i$. We then run Algorithm \ref{algo:cc_no_intersection} for CCs in $S_1$, and Algorithm \ref{algo:cc_intersection} for those in $S_2$. 

As seen above, it is possible that some tuples may not have a $B_1, \ldots, B_q$ assignment in \joinView. Let $S_3$ be the set of these tuples that are dealt with using $combo_{unused}$ as described in Example~\ref{ex:non_intersection_algo}. 
If $|combo_{unused}|=\emptyset$, then all tuples in $S_3$ are \noJoin\ tuples. %\amir{This should go in the experiments section maybe? Is it part of the algorithm or just an optimization?} Instead of directly updating the database with these values, we keep the changes in memory to save time and pass the updates directly to the next phase described in Section~\ref{sec:dc}.

% \paragraph*{Discussion}
% \amir{The tradeoff between runtime (splitting to intersecting and non-intersecting CCs) and the accuracy of the solution}
\paratitle{Augmenting with Modified Marginals}
Our approach guarantees that the partial solution returned by Algorithm~\ref{algo:cc_no_intersection} satisfies $S_1$ exactly. 
In comparison to how we augment $S_{CC}$ with marginals in Section~\ref{subsec:cc_general} before solving the ILP, we now want the scope of the marginals being added to be limited to the tuples that are relevant for the CCs in $S_2$. For example, let $S_{CC}=\{CC_1, CC_3\}$ from Figure~\ref{fig:ccs}. We add CCs with the following selection predicates: (1) $Age<=24, Rel=Owner,$ \reva{$Multi$-$ling=0$}, and (2) $Age<=24, Rel=Owner,$ \reva{$Multi$-$ling=1$}. 
%\shweta{Shortened}
% (Prev version) It may still happen that the matrix $A$ is sparse and some values in $x$ under or over estimate the true counts causing some CC errors.
It may still happen that the matrix $A$ is sparse and some $x_i$'s do not match the true counts causing some CC errors.
%This could lead to an increase in error as discussed in Section~\ref{subsec:cc_general}.
\vspace{-1mm}
\section{Second Phase: Adding DCs}
\label{sec:dc}
We start by presenting our model for conflict hypergraph for FK DCs and then use it to describe the solution for DCs. 
In short, our approach is to reverse-engineer $R_1$ from \joinView\ so that joining it with $R_2$ recovers \joinView, and $R_1$ satisfies all DCs in $S_{DC}$. 
\begin{figure}[t]
\centering
\begin{tikzpicture}[shorten >=1pt,-,scale=0.65]
  \tikzstyle{vertex}=[circle,fill=blue!25,minimum size=12pt,inner sep=0pt]
  \foreach \name/\angle/\text in {P-4/0/4, P-6/40/6, P-2/80/2, P-7/120/7, P-1/160/1, P-3/200/3, P-8/240/8, P-9/280/9, P-5/320/5}
    \node[vertex,xshift=8cm,yshift=8cm] (\name) at (\angle:1.5cm) {$\text$};
  \draw (P-1) -- (P-2);
  \draw (P-1) -- (P-3);
  \draw (P-1) -- (P-4);
  \draw (P-2) -- (P-3);
  \draw (P-2) -- (P-4);
  \draw (P-3) -- (P-4);
  \draw (P-1) -- (P-5);
  \draw (P-2) -- (P-5);
  \draw (P-2) -- (P-6);
  \draw (P-2) -- (P-7);
  \draw (P-9) -- (P-8);
  \draw[dashed] (P-9) -- (P-1);
  \draw[dashed] (P-9) -- (P-2);
  \draw[dashed] (P-9) -- (P-3);
  \draw[dashed] (P-9) -- (P-4);
  \draw[dashed] (P-8) -- (P-1);
  \draw[dashed] (P-8) -- (P-2);
  \draw[dashed] (P-8) -- (P-3);
  \draw[dashed] (P-8) -- (P-4);
\end{tikzpicture}
\caption{Conflict graph for the tuples in $R_1$ from Figure~\ref{fig:db_table} based on $\joinView$ from Figure~\ref{fig:v1_complete}. Two tuples are connected by a solid edge if their $Area$ values match but they cannot be assigned the same $h_{id}$ value. Dashed edges show DC violations between tuples with different $Area$ values. Partitioning \joinView\ by $Area$ values addresses such violations because the set of household ids is disjoint for different $Area$ values}
\label{fig:graph}
\vspace{-3mm}
\end{figure}
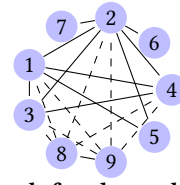

\subsection{Conflict Hypergraphs and List Coloring}\label{subsec:list_coloring}
We slightly augment the notion of conflict hypergraphs to illustrate possible Foreign Key DC violations caused by subsets of $R_1$ tuples. 
% In such a graph, each vertex represents a tuple in $R_1$ and each edge contains two or more vertices if their associated tuples can violate some DC in $S_{DC}$. 
% We next slightly augment this definition.

\begin{definition}[Conflict Hypergraph for Foreign Key DCs]
A conflict hypergraph for $R_1$ and $S_{DC}$ is defined as $G = (V,E)$ where $V$ is the set of tuples in $R_1$ and $e = \{t_1,\ldots, t_k\} \in E$ if there is a Foreign Key DC of the form $\neg (\varphi(t_1,\ldots, t_k) \land t_1.FK = \ldots = t_k.FK)$ such that $\varphi(t_1,\ldots, t_k)$ evaluates to $True$. 
\end{definition}
It suffices to consider only $\varphi(t_1,\ldots, t_k)$ in the DCs when adding edges because $FK$ is initially missing. 
Abusing notation, we denote a set of tuples $\mathcal{T}$ violating $\varphi_i(t_1,\ldots, t_k)$ in a given DC $\sigma_i$ by $\mathcal{T} \not\vDash_{\varphi_i} \sigma_i$ (we will use this notation in Algorithm \ref{algo:two_partition}). %\amir{Is this used somewhere? If not, remove} \shweta{Used in algo 4.}

% \amir{This is more like a practicalmove to 5.2?}

Next, we give the connection between conflict hypergraph coloring and $FK$ assignment in $R_1$, so a proper coloring satisfies DCs.
% Intuitively, there is a distinct color for each foreign key value in $R_1$.

\begin{proposition}\label{prop:color_satisfies}
Given an instance of \prob, a coloring of the conflict hypergraph gives an assignment to all cells of the missing FK column in $R_1$ such that all DCs are satisfied.
\end{proposition}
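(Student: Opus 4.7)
The plan is to establish a bijection between colors in the conflict hypergraph and values in the domain of the foreign key column $R_2.K_2$, and then show that the standard notion of proper hypergraph coloring (no monochromatic hyperedge) is exactly what rules out Foreign Key DC violations. So the proof will really just be an unpacking of the definitions.

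First, I would fix the setup: let $G = (V,E)$ be the conflict hypergraph for $R_1$ and $S_{DC}$, and let $\chi : V \to \mathrm{dom}(K_2)$ be a proper coloring, i.e., an assignment of colors such that for every edge $e \in E$ there exist $t, t' \in e$ with $\chi(t) \neq \chi(t')$. I would then define the imputation $\hat{R_1}$ by setting $t.FK := \chi(t)$ for every $t \in R_1$. Since $\chi$ assigns a color (which is a valid $K_2$ value) to every vertex, and since the vertices of $G$ are exactly the tuples of $R_1$, this fills in every cell of the missing $FK$ column.

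Next, I would verify that $\hat{R_1} \vDash \sigma$ for every $\sigma \in S_{DC}$. Fix a Foreign Key DC
\[
\sigma \;=\; \forall t_1,\ldots,t_k.\ \neg\bigl(\varphi(t_1,\ldots,t_k) \wedge t_1.FK = \cdots = t_k.FK\bigr),
\]
and let $t_1,\ldots,t_k \in \hat{R_1}$ be any tuples. I would split into two cases. If $\varphi(t_1,\ldots,t_k)$ is false, then the DC is vacuously satisfied on this tuple choice. If $\varphi(t_1,\ldots,t_k)$ is true, then by the definition of the conflict hypergraph the set $\{t_1,\ldots,t_k\}$ is an edge of $G$. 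Because $\chi$ is a proper coloring, this edge is not monochromatic, so there exist $t_i, t_j$ with $\chi(t_i) \neq \chi(t_j)$, and hence $t_i.FK \neq t_j.FK$ under the imputation. Thus the conjunct $t_1.FK = \cdots = t_k.FK$ fails, and the DC is satisfied on this tuple choice. Since $t_1, \ldots, t_k$ were arbitrary, $\hat{R_1} \vDash \sigma$.

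The only subtlety I anticipate is making the correspondence between ``proper hypergraph coloring'' and ``no monochromatic edge'' explicit, since for hypergraphs there are several non-equivalent notions of coloring in the literature; I would state up front that in this paper a proper coloring means exactly the non-monochromatic-edge condition, which is the weakest notion and the one that matches the Foreign Key DC semantics (a DC only fires when \emph{all} of the participating tuples share an $FK$ value). Beyond this bookkeeping, no additional work is needed: the construction of $G$ is tailored so that its edges are in one-to-one correspondence with the potential equality-on-$FK$ violations permitted by the $\varphi$ parts of the DCs, and proper coloring forbids exactly those violations.
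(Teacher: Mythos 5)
Your proposal is correct and follows essentially the same argument as the paper's proof: interpret colors as $K_2$ values, set $t.FK$ to the color of $t$, and observe that any tuple set making $\varphi$ true forms a hyperedge, which a proper (non-monochromatic) coloring prevents from being assigned a single common $FK$ value. Your explicit remark that the weak ``no monochromatic edge'' notion of proper coloring is the one needed is a useful clarification the paper only states informally before the proposition, but it does not change the substance of the argument.
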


We now turn to the problem of inferring $FK$ values in $R_1$ from the completed \joinView. %\amir{Shweta: verify this works for hypergraphs} 
Each tuple in $R_1$ can have multiple options for foreign key values that lead to \joinView\ obtained in phase \rom{1}. This establishes a list of possible colors, also referred to as \textit{candidate colors}, for each vertex in the conflict graph.
%\ashwin{There is some repetition from intro to the overall section.}
%\shweta{Shortened} .. removed para break 
% The ideal case is if we can color all vertices by {\em restricting the color choices of each vertex to its list of candidate colors} (or foreign keys). 
This problem is called List Coloring \cite{jensen2011graph,achlioptas_molloy_1997}. It is a generalization of $k$-coloring, and is %therefore 
thus NP-hard. Hence, we use a heuristic approach, described by Algorithm~\ref{algo:lf_coloring}, to color the vertices in a non-increasing order by degree, coloring as many vertices as possible. 
% This greedy solution may also be of interest to the general list coloring problem. 
In Section \ref{sec:dc_algo}, we describe how to color the vertices that remain uncolored by Algorithm~\ref{algo:lf_coloring}.
% We describe later how vertices that remain uncolored, which we refer to as \textit{skipped} vertices, are colored.

%\shweta{Shortened}
% (Prev version) Algorithm~\ref{algo:lf_coloring} takes as input the conflict hypergraph $G_c$\revc{, a mapping $c$ from vertices to colors that is initially empty and a list of candidate colors $L$}. The function can be called with a graph with a partial color assignment, which will be used in Algorithm~\ref{algo:two_partition} in Section \ref{sec:dc_algo}. 
Algorithm~\ref{algo:lf_coloring} takes as input the conflict hypergraph $G_c$\revc{, a mapping $c$ from vertices to colors (initially empty) and a list of candidate colors $L$}. It can be called on a graph with a partial color assignment (used in Algorithm~\ref{algo:two_partition} in Section \ref{sec:dc_algo}). 
Initially, $s$ is an empty list that is used to store skipped vertices, and $l$ is the list of uncolored vertices sorted in non-increasing order by degree in $G_c$ (lines \ref{l:s}--\ref{l:sort}). 
% In lines \ref{l:iteratel}--\ref{l:addtos}, we find a list of permissible colors for each $v\in l$ based on two cases: (1) if there exists an edge $(u, v)$ and $u$ is colored with $c[u]$, then $c[u]$ is a \textit{forbidden} color for $v$, and (2) if $v$ belongs to an edge $e$ of size at least $3$ and the same color $c$ appears on all colored vertices in $e$, then $c$ is a forbidden color for $v$. 
\revc{In lines \ref{l:iteratel}--\ref{l:addtos}, we find a list of permissible colors for each $v\in l$, i.e., those vertices in $G_c$ that have not been given a color in the input color map $c$. If vertex $v$ belongs to an edge $e$ where all vertices other than $v$ in $e$ have the same color $c$, then $c$ is a forbidden color for $v$.}
Next, 
%we iterate on vertices $v\in l$ and if there exist colors in list $p[v]$ that are not assigned to any of $v$'s neighbors, 
the algorithm assigns the ``smallest'' available color to $v$ in line~\ref{l:assign}. Otherwise, $v$ gets added to $s$ and remains uncolored (line \ref{l:addtos}). Finally, %the 
color map $c$ and %the 
list of skipped vertices $s$ are returned.

\IncMargin{1em}
\begin{algorithm}[!ht]
    \caption{Largest-first list coloring}
    \label{algo:lf_coloring}
    
    \SetKwInOut{Input}{Input}\SetKwInOut{Output}{Output}
    \LinesNumbered
    \Input{$G_c$ - conflict hypergraph with color choices per vertex, $c$ - a map from vertices to colors so far\revc{, $L$ - list of candidate colors}}
    \Output{$c$ - updated coloring that builds on the input coloring, $s$ - list of skipped vertices}
    
    \BlankLine
    
    \SetKwFunction{FMain}{ColoringLF}
    \SetKwProg{Fn}{Function}{:}{}
    \Fn{\FMain{$G_c, c, \revc{L}$}}{
        %\label{algo:lf_coloring}
        $s\gets \emptyset$\;\label{l:s}
        $l \gets sortDescendingDeg(\{v\in V[G_c]~|~ v\not\in c\})$\;\label{l:sort}
        %$p \gets$ map of vertices to their list of possible colors \revc{$L$}\;
        \For{$v \in l$}
        {\label{l:iteratel}
            $forbidden\gets \emptyset$\;
            \For{$e$ s.t. $v \in e$}
            {
                \revc{
                \If{$\exists c~\forall u \neq v \in e.~ c[u] = c$} 
                {\label{l:all_colored_the_same}
                    $forbidden.add(c)$\;\label{l:add_forbidden}
                }
                }
                % \If{$e = \{u,v\}$}
                % {
                %     %$forbidden \gets c[u]$\;\label{l:neighbors}
                %     $forbidden.add(c[u])$\;\label{l:neighbors}
                % }
                % \ElseIf{$\exists c~\forall u \in e.~ c[u] = c$}
                % {
                %     $forbidden.add(c)$\;
                % }
            }
            
            % $forbidden \gets \cup_{u\in neighbor(v)} c[u]$\;\label{l:neighbors}
            \If{$\revc{L \setminus forbidden} \neq \emptyset$}
            {\label{l:if_color_avail}
                $c[v] \gets \min(\revc{L \setminus forbidden})$\;\label{l:assign}
            }
            \Else{
            $s \gets s \cup \{v\}$\;\label{l:addtos}
            }
        }
        \Return $c$, $s$
    }
\end{algorithm}
\DecMargin{1em}

\begin{example}
%\amir{adapt text:} Reconsider the setup in Example~\ref{ex:conflictG}. We run Algorithm~\ref{algo:lf_coloring} on the graph for the set with $Area=$ Chicago containing vertices $1, \ldots, 7$, and then for the set with $Area=$NYC. The list of available colors for the first set is $[1, 2, 3, 4]$ since only the first $4$ tuples in $R_2$ have $Area=Chicago$. Vertex $2$ has the highest degree, followed by $1, 3, 4, 5, 6$ and $7$. Algorithm~\ref{algo:lf_coloring} will return $c[2]=1, c[1]=2, c[3]=3, c[4]=4, c[5]=3, c[6]=2, c[7]=2, c[8]=5$ and $c[9]=6$.
Reconsider the view \joinView\ and DCs shown in Figure~\ref{fig:v1_complete} and \ref{fig:dcs}, respectively. Figure~\ref{fig:graph} (including the dashed edges) gives the resulting conflict graph $G_c$. For example, there is an edge between vertices $1$ and $2$ because $t_1.Rel=t2.Rel=$ Owner, so assigning them the same $FK$ value would violate $DC_{O, O}$. Here, $l=[2, 1, 3, 4, 8, 9, 5, 6, 7]$. Thus, Algorithm~\ref{algo:lf_coloring} returns: $c[1]=2, c[2]=1, c[3]=3, c[4]=4, c[5]=3, c[6]=2, c[7]=2, c[8]=5$ and $c[9]=6$.
\label{ex:colorvalidparts}
\end{example}

\paratitle{Complexity:}
The complexity of Algorithm \ref{algo:lf_coloring} is $O(|V|\cdot\log{|V|} + |V|\cdot |E|)$ since, the algorithm sorts all vertices by degree (line \ref{l:sort}) and then traverses all edges adjacent to each vertex.
% in the worst case, a clique would force the algorithm to traverse all neighbors of all vertices in line \ref{l:neighbors}. 
% \amir{change}

\subsection{Algorithm for DCs}\label{sec:dc_algo}
%\shweta{Shortened}
We describe Algorithm \ref{algo:two_partition} as the last step in solving \prob\ by completing $R_1.FK$. 
% (Prev version) In Section \ref{sec:cc}, we showed how to complete the view \joinView\ by assigning values to the columns in \joinView\ that originated from $R_2$.  
In Section \ref{sec:cc}, we showed how to complete \joinView\ by assigning values in the columns that came from $R_2$. 
For a tuple $t\in$\joinView\ with values $t.B_i=b_i, 1\leq i\leq q$, the candidate $FK$ values for the corresponding tuple in $R_1$ are given by $\pi_{K_2}\sigma_{B_1=b_1, \ldots, B_q=b_q} R_2$. We begin with an optimization that we employ in the algorithm. 

\paratitle{Optimization:}
Working with a single conflict hypergraph when $R_1$ contains a large number of tuples would not scale since the hypergraph can form one clique in the worst-case. However, observe that we can partition the filled-in \joinView\ and $R_2$ by $B_1, \ldots, B_q$ values into sets, and only consider conflict hypergraphs within each set because the candidate $FK$ values are disjoint across sets. 
% Thus, this is a crucial step in the construction. 
% \ashwin{Should you mention the "invalid" tuples here? }\amir{I think it's OK for just discussing the conflict graph}

%\begin{center}

%\end{center}

\begin{example}
Reconsider relations $R_1$ and $R_2$, and view $\joinView$ shown in Figures~\ref{fig:db_table} and \ref{fig:v1_complete}, respectively. In $\joinView$, the tuples have $Area=Chicago$ or $Area=NYC$. Note that the set of candidate keys for tuples with $Area=Chicago$ comprises of values in $\pi_{h_{id}}\sigma_{Area=Chicago} R_2$ that is disjoint from those in $\pi_{h_{id}}\sigma_{Area=NYC} R_2$. This eliminates edges that would have been added to the conflict graph if we were to consider all vertices at once (shown as dashed edges in Figure~\ref{fig:graph}).
After partitioning \joinView\ %(ignoring \noJoin\ tuples, if any, for now) 
by $B_1, \ldots, B_q$ values and using the DCs in Figure~\ref{fig:dcs}, we get two conflict graphs: (1) with vertices for tuples $t_1, \ldots, t_7$, and (2) with vertices for tuples $t_8$ and $t_9$. There is an edge between a {\bf pair} of vertices when the corresponding tuples {\bf would} violate a DC if assigned the same $h_{id}$ value, and these are shown as solid edges in Figure~\ref{fig:graph}.
\label{ex:conflictG}
\end{example}

\IncMargin{1em}
\begin{algorithm}[ht]
    \SetKwInOut{Input}{Input}\SetKwInOut{Output}{Output}
    \LinesNumbered
    \Input{View $\joinView(K_1, A_1, \ldots, A_p, B_1, \ldots, B_q)$, 
    Relations $R_1(K_1, A_1, \ldots, A_p, FK)$ and $R_2(K_2, B_1, \ldots, B_q)$, 
    $S_{DC}$ - set of DCs on $R_1$}
    \Output{$\hat{R_1}$ - copy of $R_1$ with $FK$ column filled-in, $\hat{R_2}$ - updated copy of $R_2(K_2, B_1, \ldots, B_q)$}
    
    \BlankLine
    $c_{all} \gets \emptyset$, $\hat{R_1}\gets$ copy of $R_1$, $\hat{R_2}\gets$ copy of $R_2$\;
    % \tcc{color tuples that have $B_1, \ldots, B_q$ values in $\joinView$}
    \For{$v = (b_1, \ldots, b_q) \in \pi_{B_1, \ldots, B_q}\joinView$}
    {\label{l:start1}
        $P_v = \{t\in \joinView~|~ \forall 1\leq i\leq q.~ t.B_i = b_i\}$\;
        $V \gets \emptyset$, $E \gets \emptyset$, $c\leftarrow \emptyset$\;
        $L=\pi_{K_2}\sigma_{B_1=b_1, \ldots, B_q=b_q} \hat{R_2}$\; %\amir{is this line needed?}
        \For{$t_j \in P_v$}
        {
            $V \gets V \cup \{v_j\}$\; 
        }
        \For{$\mathcal{T} \subseteq P_v$ s.t. $\exists \sigma \in S_{DC}.~ \mathcal{T} \not\vDash_{\varphi} \sigma$}
        {\label{l:tuple_subsets}
            $E \gets E \cup \{\mathcal{T}\}$\;\label{l:add_edge}
        }
        $c, s \gets ColoringLF(G_c=(V, E), c, \revc{L})$\;\label{l:color}
        % \ForEach{$t \in s$}
        % {\label{l:iterate_s}
        %     $c[t] \gets FindColor(t)$\;\label{l:new_color_t}
        %     Add $t_{new}\in \hat{R_2}$ per new color with $B_1, \ldots, B_q$ values copied from $t\in$\joinView\ s.t. $t_{new}.K_2 = c[t]$ \amir{Does this work instead of the original line? Explain FindColor procedure in text} \shweta{Why change this? We'd have to give more details.}\;\label{l:new_r2}
        % }
        % Color $s$ using the fewest new colors, adding $t_{new}\in \hat{R_2}$ per new color with $B_1, \ldots, B_q$ values copied from $t\in$\joinView\ s.t. $c[t]=t_{new}.K_2$\;
        $L_{new} \gets |s|$ number of new colors\;\label{l:new_colors_for_s}
        $c, s \gets ColoringLF(G_c=(V, E), c, L_{new})$\;\label{l:color_s}
        \For{color $c_{new}$ in $L_{new}$ that gets used}
        {\label{l:add_row_R2}
            Add tuple $t_{new}$ in $\hat{R_2}$ with $t_{new}.K_2=c_{new}$ and $t_{new}.B_i=b_i$ ($1\leq i\leq q$)\;\label{l:create_row_R2}
        }
        $c_{all} \gets c_{all} \cup c$\;\label{l:end1}
    }
    $c_{all} \gets solveInvalidTuples(\joinView, S_{DC}, S_{CC}, \hat{R_2})$ %\amir{should it be $\hat{R_1},\hat{R_2}$?}
    \;\label{l:handle_invalid}
    % \tcc{color \noJoin\ tuples \shweta{If we complete all \joinView\ tuples, then this isn't needed}}
    % \If{there are \noJoin\ tuples in $\joinView$}
    % {\label{l:start2}
    %     $V_{\noJoin} = \{v_j ~|~ t_j\in \joinView \text{ is \noJoin}\}$\;
    %     $V \gets V\cup V_{\noJoin}$\;
    %     $L \gets \pi_{K_2}V_2$
    %     \For{$T \subseteq V$ s.t. $T \cap V_{\noJoin} \neq \emptyset \land \exists \sigma \in S_{DC}.~ T \not\vDash_{\varphi} \sigma$}
    %         {
    %             $E \gets E \cup \{T\}$\;
    %         }
    % }
    % $c_{all}, s =  ColoringLF(G_c=(V, E), c)$\;\label{l:end2}
    % \tcc{color skipped \noJoin\ tuples with new colors}
    %     \For{$v \in s$}
    %     {\label{l:start3}
    %         $minErr = \sum\limits_{CC\in S_{CC}}error(R_1, R_2, CC)$\;
    %         Create tuple $t$ in $V_2$ and set $c_{all}[v] = t.K_2$\;
    %         $\forall 1\leq i \leq q,~ t.B_i = -1$\;
    %         $min\_vals = (-1, \ldots, -1)$ a vector of $q$ coordinates\;
    %         \For{each combination of values $(b_1, \ldots, b_q)$ of $B_1, ..., B_q$}
    %         {
    %             % Assign $t^{new}$ current $B_1, ..., B_q$ values\;
    %             $\forall 1\leq i \leq q,~ t.B_i = b_i$\;
    %             $err\_post \gets \sum\limits_{CC\in S_{CC}}error(R_1, R_2, CC)$\;
    %             \If{$err\_post < minErr$}
    %             {
    %                 $minErr \gets err\_post$\;
    %                 $min\_vals \gets (b_1, \ldots, b_q)$\;
    %             }
    %         }
    %         $\forall 1\leq i \leq q,~ t.B_i \gets min\_vals[i]$\;\label{l:end3}
    %     }
    %$R_1$ = copy $\joinView$ except $B_1$ to $B_q$, and add $t_j.FK = c_{all}[v_j]$, $\forall t_j\in \joinView$\;\label{l:build_r1}
    $\forall t_j\in \joinView, t'\in \hat{R_1}.~t_j.K_1=t'.K_1$ set $t'.FK=c_{all}[v_j]$\;\label{l:build_r1}
    \Return $\hat{R_1}$, $\hat{R_2}$\;\label{l:output_r1}
\caption{Complete $R_1.FK$ column using \joinView}%DCs and Two Tables - With partition
\label{algo:two_partition}
\end{algorithm} \DecMargin{1em}

Algorithm \ref{algo:two_partition} gets as input the view $\joinView$ outputted by the algorithm described in Section \ref{sec:hybrid_approach}, 
% after values in columns $B_1, \ldots, B_q$ have been completed, 
relations $R_1$ (with missing $FK$ values) and $R_2$, and set $S_{DC}$. % of DCs on $R_1$. 
It outputs $\hat{R_1}$, i.e., $R_1$ with values in the $FK$ column and $\hat{R_2}$, i.e., $R_2$ with possible additional tuples (as described next). 
The algorithm can be divided into three parts: (1) coloring the tuples that were assigned $B_1, \ldots, B_q$ values in $\joinView$, (2) coloring the \noJoin\ tuples, i.e., tuples that were {\em not} assigned $B_1, \ldots, B_q$ values in $\joinView$, and (3) coloring any skipped tuples (defined in Section \ref{subsec:list_coloring}). Algorithm \ref{algo:two_partition} maintains a map from tuples to their %assigned \amir{assigned -> list of?} 
list of colors in $c$ and tracks the overall coloring in $c_{all}$. Eventually, $c_{all}$ has a color for every vertex that is used to complete $FK$ in $\hat{R_1}$ (lines \ref{l:build_r1}, \ref{l:output_r1}). 
% It then returns $\hat{R_1}$ and $\hat{R_2}$.

In lines \ref{l:start1}--\ref{l:end1}, the algorithm iterates over each set of tuples with the same $B_1, \ldots, B_q$ value. Given a vector $v = (b_1, \ldots, b_q)$ of $q$ constants, it iterates over tuples in sets given by $P_v = \{t\in \joinView~|~ \forall 1\leq i\leq q.~ t.B_i = b_i\}$. For each $P_v$, it generates the conflict hypergraph $G_c$ as follows: a node $v_j$ per tuple $t_j$, 
a list $L$ of candidate colors given by the 
keys from tuples in $\hat{R_2}$  
with values $v_j.B_1, \ldots, v_j.B_q$, 
and an edge per set of tuples in $P_v$ that violates $\varphi$ in some DC. Next, \revc{$G_c$, $c$ and $L$} are inputted to Algorithm~\ref{algo:lf_coloring}, which outputs a partial coloring $c$ and a list of skipped vertices $s$. Vertices in $s$ are colored using at most $|s|$ new colors (lines \ref{l:new_colors_for_s}-\ref{l:create_row_R2}), resulting in insertion of tuples in $\hat{R_2}$ because colors correspond to primary keys in $\hat{R_2}$. 
% Note that we know which values to assign in $B_1, \ldots, B_q$ in the new tuples in $\hat{R_2}$.

% \amir{This can be much shorter:}
Procedure $solveInvalidTuples$ (line \ref{l:handle_invalid}) handles the \noJoin\ tuples (defined in Section~\ref{subsec:cc_relationships}). 
Since these do not have $B_1, \ldots, B_q$ values in \joinView, the corresponding $\hat{R_1}$ tuples are missing $FK$ values because we have not yet considered them in any conflict hypergraphs. 
% Thus, we construct and color (uncolored vertices) a hypergraph with a vertex per tuple in \joinView\ and edges incident to only the vertices for \noJoin\ tuples. 
We construct a hypergraph for tuples in \joinView\ with edges incident to only the vertices for \noJoin\ tuples. 
% This ensures that there are no DC violations due to the \noJoin\ tuples. 
However, the set $s$ outputted by Algorithm \ref{algo:lf_coloring} may contain \noJoin\ tuples that had to be skipped for a lack of available colors. 
Our strategy for coloring these %skipped \noJoin\ vertices in $s$ 
is to assign to each a combination of $B_1, \ldots, B_q$ values that minimizes the error stemming from the CCs (defined in Section~\ref{sec:experiments}), 
%(Prev version) Our strategy for coloring the skipped \noJoin\ vertices in $s$ is to minimize the error stemming from the CCs, as $FK$ values may increase the error of one or more CC. For each such vertex $v$ in $s$, we choose the combination of values from the active domain of $B_1, \ldots, B_q$ that minimizes the cumulative error of the CCs \amir{do we mention how it is computed?}, 
and generate a tuple in $\hat{R_2}$ with a fresh key and the chosen $B_1, \ldots, B_q$ values. 
Finally, $\hat{R_1}.FK$ values are assigned based on the coloring $c_{all}$ (line \ref{l:build_r1}).

\begin{proposition}\label{prop:solution}
Given \joinView, $R_1$, $R_2$, $S_{DC}$, Algorithm~\ref{algo:two_partition} outputs relations $\hat{R_1}, \hat{R_2}$ such that $\hat{R_2}$ is a copy of $R_2$, possibly with more tuples, and $\hat{R_1}$ is a copy of $R_1$ with all the values in the $FK$ column completed such that $\forall \sigma\in S_{DC},~ \hat{R_1} \vDash \sigma$, and $\hat{R_1}\bowtie_{FK=K_2} \hat{R_2} = \joinView$. 
% \amir{rephrased. Change notation $V_2$ to $\hat{R_2}$? and verify!}
% Given an instance of \joinView, relation $R_2$ and set $S_{DC}$ of DCs, Algorithm~\ref{algo:two_partition} outputs a relation $R_1$ that satisfies all DCs in $S_{DC}$ and conforms to \joinView.
\end{proposition}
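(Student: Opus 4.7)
The plan is to decompose the statement into four sub-claims and verify each in turn: (i) $\hat{R_2}$ is $R_2$ possibly augmented with new tuples; (ii) every cell of $\hat{R_1}.FK$ is filled in; (iii) the completed $\hat{R_1}$ satisfies every $\sigma \in S_{DC}$; and (iv) $\hat{R_1} \bowtie_{FK=K_2} \hat{R_2} = \joinView$. Claim (i) is immediate from inspection of Algorithm~\ref{algo:two_partition}: $\hat{R_2}$ is initialized as a copy of $R_2$, and the only writes to $\hat{R_2}$ happen at line~\ref{l:create_row_R2} and inside $solveInvalidTuples$, both of which only insert new tuples.

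For claim (ii), I would partition the tuples of $\joinView$ (equivalently, of $R_1$) into the groups $P_v$ visited by the outer loop of lines~\ref{l:start1}--\ref{l:end1} together with the residual set of \noJoin\ tuples handled in line~\ref{l:handle_invalid}. Within each $P_v$, the two-stage call pattern colors all vertices: the first call to $ColoringLF$ uses the candidate list $L$ drawn from $\hat{R_2}$ and may skip a set $s$ of vertices for want of an available color; the second call is then invoked with $|L_{new}|=|s|$ fresh colors. By induction on the order in which skipped vertices are processed in the second call, at most the number of already-colored skipped predecessors can forbid a fresh color (since only they carry colors from $L_{new}$), so a fresh color is always available. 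Each freshly used color is materialized as a new tuple in $\hat{R_2}$ at line~\ref{l:create_row_R2}, and the \noJoin\ tuples receive fresh keys via $solveInvalidTuples$, so the loop at line~\ref{l:build_r1} fills every $FK$ cell.

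Claim (iii) is where the real work lies. By Proposition~\ref{prop:color_satisfies}, it suffices to show that $c_{all}$ is a proper coloring of the conflict hypergraph on the whole of $R_1$. I would split the hyperedges into two kinds. Any hyperedge that spans two or more partitions $P_v, P_{v'}$ is automatically proper, because the candidate colors for $P_v$ come from $\pi_{K_2}\sigma_{B_1=b_1,\dots,B_q=b_q}\hat{R_2}$ and are therefore disjoint from those of $P_{v'}$, so their endpoints cannot share an $FK$ value. For hyperedges contained within a single $P_v$, I would appeal to the greedy invariant of $ColoringLF$: when a vertex $v$ receives color $c$, the $forbidden$ test at line~\ref{l:all_colored_the_same} rules out any hyperedge incident to $v$ whose other endpoints are all already colored $c$. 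The main obstacle is to check that this invariant survives the second invocation of $ColoringLF$ on $L_{new}$, where fresh colors may be reused across skipped vertices; this reduces to showing that at the moment a skipped vertex is about to receive a fresh color, every earlier skipped neighbour it might clash with has already been entered into $c$, so the $forbidden$ check sees it.

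Finally, claim (iv) holds by construction. For each $t_j \in \joinView$, the color $c_{all}[v_j]$ is a $K_2$ value of some tuple $r \in \hat{R_2}$ with $r.B_i = t_j.B_i$ for all $i$ --- either an original $R_2$ tuple drawn from $L$, a fresh tuple inserted at line~\ref{l:create_row_R2}, or one inserted by $solveInvalidTuples$ --- and line~\ref{l:build_r1} assigns $FK = c_{all}[v_j]$ to the $\hat{R_1}$ tuple matching $t_j.K_1$. Hence the join produces exactly $t_j$; since $|\hat{R_1}| = |R_1| = |\joinView|$ and each match is unique, the two sides agree as multisets.
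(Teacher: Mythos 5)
Your proposal is correct and follows essentially the same route as the paper's proof: it reduces DC satisfaction to properness of the coloring via Proposition~\ref{prop:color_satisfies}, uses disjointness of candidate keys across the $P_v$ partitions to dispose of cross-partition hyperedges, and verifies $\hat{R_1}\bowtie_{FK=K_2}\hat{R_2}=\joinView$ by noting that each fresh color is materialized as a new $\hat{R_2}$ tuple carrying the matching $B_1,\ldots,B_q$ values. Your treatment is somewhat more careful than the paper's in two spots --- the explicit counting argument that $|s|$ fresh colors always suffice in the second $ColoringLF$ call, and the check that the greedy invariant is preserved across the two invocations --- but these are refinements of the same argument rather than a different approach.
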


\paratitle{Complexity:} 
% The complexity of Algorithm \ref{algo:two_partition} is $O(n\cdot|S_{DC}|\cdot\binom{n}{T})$, where $|R_1| = n$, and $T$ is the number of tuples involved in the largest DC (assumed to be a constant), since the number of edges of each vertex can be at most $\binom{n}{T}$. 
\revc{
The complexity of Algorithm \ref{algo:two_partition} is $O(n\cdot|S_{DC}|\cdot\binom{n}{T})$, where $|R_1| = n$, and $T$ is the number of tuples involved in the largest DC (assumed to be a constant), since the number of edges of each vertex can be at most $\binom{n}{T}$. 
The $n$ component stands for the possible need to iterate over all tuples in \joinView\ in line \ref{l:start1}, the $|S_{DC}|$ component stands for the possible need to iterate over all DCs when checking the condition in line \ref{l:tuple_subsets}, and the $\binom{n}{T}$ component is added due to the need to iterate over all subsets of $P_v$ that may satisfy a DC in lines \ref{l:tuple_subsets}--\ref{l:add_edge}. 
Since Algorithm \ref{algo:lf_coloring} (lines \ref{l:color}, \ref{l:color_s}) has a complexity of $O(n\cdot \binom{n}{T})$, and the loop (line \ref{l:add_row_R2}) has complexity of $O(n)$, they are not presented in the overall complexity of the algorithm. 
Note that $\bigcupdot_{j=1}^m P_v^j = \joinView$, where $m$ is the number of iterations and $P_v^j$ is the set $P_v^j$ generated in iteration $j$.
% the union of all $P_v$ sets over all iterations is $\joinView$ and that $P_v$ generated in one iteration is disjoint from $P_v$ generated in a different iteration, thus $\dot\bigcup_{j=1}^m P_v^j = \joinView$ where $m$ is the number of iterations.
}
\\
% Our optimization yields significant improvement in practice.

% The complexity of Algorithm \ref{algo:two_partition} may be exponential in $n$, as the number of edges of each vertex may be $\binom{n}{n/2}$, where $n=|R_1|=|\joinView|$. In practice, we have noticed a significant improvement in efficiency due to 
% % Worst case a vertex can be in $\binom{n}{n/2}$ hyperedges which ends up being exponential in $n$, where $n=|R_1|=|\joinView|$. 
% \shweta{Revert back to simple graphs than hypergraphs? Though DCs using so many ($n/2$) tuples is not highly likely, for large $n$.}

% \shweta{The issue starts in the graph construction steps in Algo 4. Could help if we say something about how big $k$ (Def. 5.1) can be in practical applications. Then, the number of edges would be $\binom{n}{k}$ which gets rid of the exp dependence on $n$.}\\

\paratitle{Extending the solution to snowflake schemas:}
\revb{
% As mentioned in Section \ref{sec:intro}, 
Our solution can be generalized to snowflake schemas in a manner similar to \cite{Arasu2011}. 
The idea is to start from the fact table (the central table) as $R_1$ and a table connected to it as $R_2$, i.e. going from the inside out in a Breadth-First Search manner. In every step, we include the previously completed tables in $R_1$, allowing CCs that span over the join view of multiple tables. 
This ensures that tuples are (possibly) added to the relation in the role of $R_2$ only once, since in the next step it would be considered as $R_1$ and thus maintain the foreign key dependency from the previous steps. 
}

\revb{
\begin{example}\label{ex:snowflake}
% Consider a central Students table with two foreign key dependencies of a Majors table and a Courses table, where there is another foreign key connection to a Departments table through the Majors table. This schema is depicted in Figure \ref{fig:snowflake}. 
Consider a central Students table with two foreign key dependencies of a Majors table and a Courses table, and another foreign key connection to a Departments table through Majors:
\begin{center}
    %\scriptsize
    \begin{tikzpicture}[scale=.7,level 1/.style={level distance=1.0cm, sibling distance=20mm}, level 2/.style={sibling distance=20mm}, level distance=1.0cm,
    every node/.style = {shape=rectangle, rounded corners, draw, align=center, top color=white, bottom color=blue!20},
    every edge/.style = {->},
    no edge/.style={
        edge from parent/.append style={draw=none}
    }]
        \node (r) {\footnotesize $^1$Students} [->]
        child { node (T1) {\footnotesize $^2$Majors} 
            child { node (T3) {\footnotesize $^4$Department} }
        }
        child { node (Tn) {\footnotesize $^3$Courses} };
    \end{tikzpicture}
\end{center}

The steps of the algorithm are as follows:
\begin{center}
\begin{footnotesize}
\begin{tabular}{| c | c | c | c |}
\hline 
Step & $R_1$ & $R_2$\\
\hline 
\hline 1 & $Students$ & $Majors$ \\
\hline 2 & $Students \bowtie Majors$ & $Courses$ \\
\hline 3 & $(Students \bowtie Majors) \bowtie Courses$ & $Departments$ \\
\hline
\end{tabular}
\end{footnotesize}
\end{center}
At each step we can therefore consider CCs over all tables we have considered so far. For example, in step 2, we can consider CCs over $((Students \bowtie Majors) \bowtie Courses)$ and not just over $Students \bowtie Courses$.  
Note that in the first step, we might have added artificial tuples to the Majors table. These are added without an $FK$ value that connects them to the Departments table so we account for them in the last step of connecting the Majors and Departments tables, making sure that the DCs that apply to the Majors table are satisfied. 
\end{example}
}

\vspace{-1mm}
% EXAMPLE FOR FIGURE ACROSS THE ENTIRE PAGE
% \begin{figure*}[!htb]
%     \captionsetup{justification=centering}
%     \centering
%     \begin{subfigure}[b]{0.32\textwidth}
%         \centering
%         \includegraphics[trim=0cm 0.5cm 0cm 0cm, width=2.06in]{}
%         \caption{$S^{all}_{DC}$ and $S^{good}_{CC}$}
%         \label{graph:exp1_varyScale_compareTime_good}
%     \end{subfigure}
%     \begin{subfigure}[b]{0.32\textwidth}
%         \centering
%         \includegraphics[trim=0cm 0.5cm 0cm 0cm, width=2.1in]{}
%         \caption{$S^{all}_{DC}$ and $S^{bad}_{CC}$}
%         \label{graph:exp1_varyScale_compareTime_bad}
%     \end{subfigure}
%     \begin{subfigure}[b]{0.32\textwidth}
%         \centering
%         \includegraphics[trim=0cm 0.5cm 0cm 0cm, width=2.1in]{}
%         \caption{Combinations of DCs and CCs}
%         \label{graph:exp2_4settings_total}
%     \end{subfigure}
%     \caption{Runtime comparison between baseline (no aug), baseline (with aug) and hybrid when: (a) $S^{all}_{DC}$ and $S^{good}_{CC}$ ($1001$ CCs) are used on data Scales $1$ to $5$, (b) $S^{all}_{DC}$ and $S^{bad}_{CC}$ ($1001$ CCs) are used on data Scales $1$ to $5$, (c) combinations of good and bad cases of DCs and CCs are used on data at Scale $4$.}
%     \label{fig:comparison_total_runtimes}
% \end{figure*}
\newcommand\nocell[1]{\multicolumn{#1}{c|}{}}
\newcommand\med{\cellcolor{gray!30}}
\newcommand\dark{\cellcolor{gray!50}}
\section{Experiments}
\label{sec:experiments}

We analyze the performance of our (hybrid) approach, and compare it with a baseline algorithm (based on \cite{Arasu2011}) in these terms:
\begin{compactenum}
    \itemsep0em
    \item Accuracy and runtime comparison between the baseline and our approach as data grows for fixed $S_{DC}$ and two settings of $S_{CC}$ (based on Section \ref{subsec:cc_relationships}): (i) $S_{CC}$ with no intersecting CCs, and (ii) $S_{CC}$ with intersecting CCs.
    
    \item %Accuracy and runtime comparison between the baseline and our approach for combinations of good and bad cases of $S_{DC}$ and $S_{CC}$ when data is kept fixed. Good $S_{DC}$ does not create any cliques in conflict graphs and good $S_{CC}$ does not contain any intersecting CCs.
    Accuracy and runtime comparison between the baseline and our approach for fixed data and combinations of good and bad $S_{DC}$ and $S_{CC}$. Good $S_{DC}$ creates zero cliques in conflict graphs and good $S_{CC}$ contains zero intersecting CCs.
    
    \item Runtime performance of our approach when data and $S_{DC}$ are kept fixed but the size of good $S_{CC}$ and bad $S_{CC}$ varies. %for both good and bad cases of CCs.
    
    \item \revc{Runtime performance of our approach for fixed data, and good $S_{DC}$ and $S_{CC}$ as the number of columns in $R_2$ grows.}
    
    %\item Scalability of our approach when data and CCs are kept fixed, but the size of $S_{DC}$ varies.
    %\item Effect of keeping updates to \joinView\ tuples in set $S_3$ (see Section~\ref{sec:hybrid_approach}) in memory in the hybrid approach.
\end{compactenum}

\reva{We implemented our solution and baseline in Python $3.6.9$ and Pandas DataFrame interface \cite{reback2020pandas}} on Tensor TXR$231$-$1000$R D$126$ Intel(R) Xeon(R) CPU E$5$-$2640$ v$4$ $2.40$GHz CPU with $512$ GB ($40$ cores) of RAM. We use the standard PuLP \cite{Mitchell11pulp:a} and NumPy Libraries for the ILP, and NetworkX \cite{HSSnetworkX2008} to construct and color conflict graphs.

% \vspace{-1mm}
\paratitle{A summary of our findings:}
\begin{compactenum}
    \itemsep0em
    \item Our approach satisfies all CCs in the absence of intersecting CCs with no error. Additionally, our approach satisfies all DCs (as guaranteed by our theoretical analysis), whereas the baseline does not (Figures~\ref{fig:exp1_varyData_err}-\ref{fig:exp2_4settings_error}). 
    Overall, our approach has the shortest runtime (Figure~\ref{fig:comparison_total_runtimes}) and achieves better accuracy for CCs and DCs together. Additionally, augmenting the input set of CCs with marginals over the non-key attributes in $R_1$ improves accuracy for CCs. 
    \reva{We also find that the time spent by the baseline on the ILP solver alone is comparable to the total time taken by our approach for larger data scales.}
    
    \item At a fixed data scale and for good and bad settings of DCs and CCs, where good DCs do not create cliques in conflict graphs and good CCs do not intersect, our approach has the shortest runtime. Its best performance is for good DCs and good CCs. In comparison, using bad DCs is slower because conflict graphs become denser, %\amir{slower than the baseline?} 
    and using bad CCs is even slower because of the ILP solver.
    
    \item Keeping the data and $S_{DC}$ fixed, we find that increasing the size of and/or intersections in $S_{CC}$ slows down \joinView\ completion, increasing the runtime of our approach (Figure~\ref{fig:exp3_breakdown_runtime_900}).
    
    \item \revc{We find that the time spent on coloring grows faster than that for recursing on Hasse diagrams as the number of columns in $R_2$ grows when good $S_{DC}$ and $S_{CC}$ are used.}
    %(see Section~\ref{sec:hybrid_approach} and Algorithms~\ref{algo:cc_intersection}-\ref{algo:cc_no_intersection})
    %\item Keeping the data size and $S_{CC}$ fixed, we find that increasing the size and/or complexity of $S_{DC}$ causes coloring and completing $R_1$ (see Algorithm~\ref{algo:two_partition}) to take longer, increasing the total runtime of our approach.
\end{compactenum}

%\vspace{-2mm}
\subsection{Setup}\label{subsec:setup}
We now describe the experimental setup (summarized in Table \ref{tbl:exp_settings_figs}) and define the error measures that are used to evaluate accuracy. We vary the database size (Table~\ref{tbl:data_scales}), DCs and CCs (Sec. \ref{sec:dcs_ccs}) 
%(Section \ref{sec:dcs_ccs}) 
to examine the scalability and accuracy of our solution. In Section~\ref{sec:exp_findings}, the errors and runtimes are averaged over $3$ independent runs.

\paratitle{Data.}
We perform experiments on a dataset that is derived from the 2010 U.S. Decennial Census \cite{sexton_abowd_schmutte_vilhuber_2017} comprising of two relations $Persons(p_{id}$, $Rel$, $Age$, \reva{$Multi$-$ling$}, $h_{id})$ and $Housing(h_{id}$, $Tenure$, $Area)$, with $Persons$ ($R_1$) missing all values in its foreign key column $h_{id}$. 
The different data scales are given in Table~\ref{tbl:data_scales}. 
By construction, \joinView\ and $Persons$ contain the same number of tuples. 
\revc{We also consider up to $10$ (non-key) columns in $Housing$, where we go from $(Tenure, Area)$ to $(Tenure, County, Area, St)$, add $(Div, Reg)$ and then add binary attributes $(Water, Bath)$ followed by $(Fridge, Stove)$. Note that values in $Div$ and $Reg$ are determined by the $St$ value.}
\vspace{-2mm}
\begin{table}[!ht]
    \centering \footnotesize
    \caption{Data scales given by the number of tuples}\label{tbl:data_scales}
    \begin{tabular}{| c | c | c | c | c |}
        \hline {\bf Scale} & 
        \begin{tabular}{@{}c@{}}{\bf Persons table}\end{tabular} & 
        \begin{tabular}{@{}c@{}}{\bf Housing table}\end{tabular} &
        \begin{tabular}{@{}c@{}}{\bf \joinView}\end{tabular}\\
        \hline $1\times$ & $25,099$ & $9,820$ & $25,099$\\
        \hline $2\times$ & $50,039$ & $19,640$ & $50,039$\\
        \hline \revb{$5\times$} & $124,746$ & $49,100$ & $124,746$\\
        \hline \revb{$10\times$} & $249,259$ & $98,200$ & $249,259$\\
        \hline \revb{$40\times$} & $1,015,686$ & $392,800$ & $1,015,686$\\
        \hline \revb{$80\times$} & \revb{$2,043,975$} & \revb{$785,600$} & \revb{$2,043,975$}\\
        \hline \revb{$120\times$} & \revb{$3,064,328$} & \revb{$1,178,400$} & \revb{$3,064,328$}\\
        \hline \revb{$160\times$} & \revb{$4,097,471$} & \revb{$1,571,200$} & \revb{$4,097,471$}\\
        \hline
    \end{tabular}
\end{table}
\vspace{-2mm}

\paratitle{Denial Constraints.}
%\shweta{Shortened} .. numbers do not make sense in $S^{good}_{DC}$
% (Prev version) $S^{all}_{DC}$ is the set of DCs given in Table \ref{tbl:exp_dcs}. These DCs specify not only the permissible age gap between a homeowner ($Rel=$ Owner) and other members in the same home, but also limit the number of homeowners, spouses and unmarried partners per home. Let $S^{good}_{DC}=\{1, \ldots, 8\}$ so that it does not contain DCs that create cliques in conflict graphs and is smaller than $S^{all}_{DC}$.
$S^{all}_{DC}$ is the set of DCs (Sec. \ref{sec:dcs_ccs}) that not only gives the permissible age gap between a homeowner ($Rel=$ Owner) and other members in the same home, but also limits the number of homeowners, spouses and unmarried partners per home. $S^{good}_{DC}$ contains first $8$ DCs, none of which create cliques in conflict graphs.

\paratitle{Cardinality Constraints.}
%\shweta{Shortened} .. for full version use -- We use two sets of CCs, $S^{good}_{CC}$ and $S^{bad}_{CC}$, with $1001$ CCs each (Table~\ref{tbl:exp_ccs}).
We use two sets of CCs, $S^{good}_{CC}$ and $S^{bad}_{CC}$, with $1001$ CCs each (Sec. \ref{sec:dcs_ccs}). We assume that each input CC specifies a condition on an attribute from both $R_1$ and $R_2$.
%$S^{good}_{CC}$ is designed by applying different subsets of Age ($S^{good}_{CC}$)-Rel-Sex values (see Table~\ref{tbl:exp_ccs}) to $469$ pairs of $Tenure$-$Area$ values, and another $121$ $Area$ values used without any $Tenure$ value; similarly for $S^{bad}_{CC}$. 
$S^{bad}_{CC}$ contains CCs with intersecting $Age$ intervals, but $S^{good}_{CC}$ does not. 

\paratitle{Error Measures.}
We measure \textit{relative CC error} ${err}_i$ as $\frac{|\hat{c_i} - c_i|}{\max(10, c_i)}$, where $\hat{c_i}$ and $c_i$ are $CC_i$'s (in $S_{CC}$) counts in the solution and input. \revc{We use a threshold of $10$ in the denominator because some CCs have a target count of $0$ for small data scales.} We report the median relative CC errors in Figures~\ref{fig:exp1_varyData_err}-\ref{fig:exp2_4settings_error}, where $y=1$ represents $100\%$ error. 
% Influence of \joinView\ on the choices for $FK$ values. 
We measure \textit{DC error} as the fraction of tuples in $\hat{R_1}$ that violate a $DC\in S_{DC}$. E.g., if $h_{id}$ value in the first two tuples in $Persons$ relation in Figure~\ref{fig:db_table_complete} was $2$, then the DC error would be $2/9$.

\paratitle{Baseline.}
% Arasu et al. \cite{Arasu2011} give a framework to generate synthetic databases with snowflake schema, where all joins are foreign key joins, by employing CCs to capture desired characteristics in the data. They impute $FK$ using \joinView\ without a mechanism to incorporate DCs on $FK$. Motivated by \cite{Arasu2011}, we establish two baselines versions given below. We refer the reader to Section~\ref{sec:related} for more related works.
Arasu et al. \cite{Arasu2011} focuses on the generation of synthetic databases with snowflake schema, where all joins are foreign key joins. This work considers CCs alone (no DCs) and imputes $FK$ using \joinView. Motivated by this work, we establish the two baseline versions given below (Section~\ref{sec:related} surveys more related works).

\begin{compactenum}
\item {\bf Baseline:} First, we use Algorithm~\ref{algo:cc_intersection} (without the for loop on line~\ref{l:augment_S_CC}) to fill-in tuples in \joinView. Any \joinView\ tuple without an assignment is completed by randomly assigning values in $B_1, \ldots, B_q$. In phase \rom{2}, we randomly assign a value from the candidate $FK$ values given by \joinView\ for each tuple in $R_1$.

%\shweta{Shortened}
\item {\bf Baseline with marginals:} We also study the impact of augmenting $S_{CC}$ with all $Age$--$Rel$--\reva{$Multi$-$ling$} (all-way) marginals from $Persons$, where domains of numerical attributes are broken %based on 
using intervalization \cite{Arasu2011} %applied to 
on $S_{CC}$. %Observe 
Note that the marginals have equal target counts in $Persons$ and \joinView\ by construction. They ensure that each variable participates in %an equation in 
the ILP, and is thus assigned a value in the solution. We find that this fills in all \joinView\ tuples. 
% \amir{the following sentence doen't add much. remove?}However, there is a trade-off between accuracy and runtime (Figures~\ref{fig:exp1_varyData_err} and \ref{fig:comparison_total_runtimes}). 
We refer to this algorithm as {\em baseline with marginals} that uses Algorithm~\ref{algo:cc_intersection} for phase \rom{1}, followed by random assignment in $FK$ using \joinView\ for phase \rom{2}. Hence, it falls in-between the baseline and our approach (Section~\ref{sec:hybrid_approach}). %Due to augmentation, each variable gets used in the ILP and is assigned a value in the solution. \ashwin{Previous sentence seems to be repeated from before.} 
\end{compactenum}

\vspace{-2mm}
\begin{table}[h]
    \centering \footnotesize
    %\caption{Datasets used in experiments, with details about the data scales, DCs and CCs given in Tables~\ref{tbl:data_scales}-\ref{tbl:exp_ccs}}\label{tbl:inputs}
    %\caption{Datasets used in experiments, with details about the DCs, CCs and data scales given in Tables~\ref{tbl:exp_dcs}--\ref{tbl:data_scales}}\label{tbl:inputs}
    \caption{Datasets used in experiments, with details about the data scales, DCs and CCs given in Table~\ref{tbl:data_scales} and Section \ref{sec:dcs_ccs}}\label{tbl:inputs}
    \begin{tabular}{| c | c | c | c |}
        \hline \begin{tabular}{@{}c@{}}{\bf \revc{Dataset no.}}\end{tabular} &
        \begin{tabular}{@{}c@{}}{\bf Data Scale}\end{tabular} & 
        \begin{tabular}{@{}c@{}}{\bf DCs}\end{tabular} &
        \begin{tabular}{@{}c@{}}{\bf CCs}\end{tabular}\\
        \hline 
        $1$-$5$ &  \revb{$1\times$ to $40\times$} & $S^{all}_{DC}$ & $S^{good}_{CC}$\\
        \hline
        $6$-$10$ &  \revb{$1\times$ to $40\times$} & $S^{all}_{DC}$ & $S^{bad}_{CC}$\\
        \hline
        $11$ & \revb{$10\times$} & $S^{good}_{DC}$ & $S^{good}_{CC}$\\
        $12$ & \revb{$10\times$} & $S^{good}_{DC}$ & $S^{bad}_{CC}$\\
        \hline
        %$13$ & $4$ & $S^{all}_{DC}$ & $S^{good}_{CC}$\\
        %\hline
        %$14$ & $4$ & $S^{all}_{DC}$ & $S^{bad}_{CC}$\\
        %\hline
        $13-17$ & \revb{$10\times$} & $S^{all}_{DC}$ & \begin{tabular}{@{}l@{}} ($500, 600, 700, 800,900$) $S^{good}_{CC}$\end{tabular}\\
        \hline
        $18-22$ & \revb{$10\times$} & $S^{all}_{DC}$ & \begin{tabular}{@{}l@{}} ($500, 600, 700, 800,900$) $S^{bad}_{CC}$\end{tabular}\\
        \hline
        %$23-26$ & $4$ & \begin{tabular}{@{}c@{}}{Vary number\\of DCs}\end{tabular} & $S^{bad}_{CC}$\\
        %\hline
        \revb{$23-26$} & \revb{$40\times$ to $160\times$} & \revb{$S^{good}_{DC}$} & \revb{$S^{good}_{CC}$}\\
        \hline
        \revb{$27-30$} & \revb{$40\times$ to $160\times$} & \revb{$S^{good}_{DC}$} & \revb{$S^{bad}_{CC}$}\\
        \hline
        \revc{$31-34$} & \begin{tabular}{@{}l@{}} \revb{$10\times$} \revc{($4, 6, 8, 10$ non-}\\ \revc{key $R_2$ columns)} \end{tabular} & \revc{$S^{good}_{DC}$} & \revc{$S^{good}_{CC}$}\\
        % \hline
        % \revc{$32$} & \revb{$10$} \revc{($6$ non-key $R_2$ columns)} & \revc{$S^{good}_{DC}$} & \revc{$S^{good}_{CC}$}\\
        % \hline
        % \revc{$33$} & \revb{$10$} \revc{($8$ non-key $R_2$ columns)} & \revc{$S^{good}_{DC}$} & \revc{$S^{good}_{CC}$}\\
        % \hline
        % \revc{$34$} & \revb{$10$} \revc{($10$ non-key $R_2$ columns)} & \revc{$S^{good}_{DC}$} & \revc{$S^{good}_{CC}$}\\
        \hline
    \end{tabular}
\end{table}
\vspace{-5mm}

\begin{table}[h]
    \centering \footnotesize
    \caption{Experimental settings for Figures~\ref{fig:exp1_varyData_err}-\ref{fig:exp3_breakdown_runtime_900}. Table~\ref{tbl:inputs} contains details about the input datasets}\label{tbl:exp_settings_figs}
    \begin{tabular}{| c | c | c | c |}
        \rowcolor[HTML]{C0C0C0} 
        \hline
        {\bf Experiment} & 
        {\bf Figure} & 
        {\bf Algorithm} &
        {\bf Input datasets}\\
        \hline
        % \hline \rowcolor[HTML]{D7D7D7} \multicolumn{3}{|l|}{\textbf{Accuracy Experiments}}\\
        \hline  \cellcolor[HTML]{D7D7D7}& \ref{tbl:exp1_varyScale_good_err} & Baselines vs Hybrid & $1$-$5$ \\
        \cline{2-4} \cellcolor[HTML]{D7D7D7}& \ref{tbl:exp1_varyScale_bad_err} & Baselines vs Hybrid & $6$-$10$ \\
        \cline{2-4} \cellcolor[HTML]{D7D7D7}& \ref{fig:exp1_box} & Baselines vs Hybrid & $10$\\
        \cline{2-4} \multirow{-4}{*}{\cellcolor[HTML]{D7D7D7} Accuracy Exp.} & \begin{tabular}{@{}c@{}}\ref{fig:exp2_4settings_error}\end{tabular} & Baselines vs Hybrid & $11, 12, 4, 9$ \\
        \hline
        % \hline \rowcolor[HTML]{D7D7D7} \multicolumn{3}{|l|}{\textbf{Scalability Experiments}}\\
        \hline \cellcolor[HTML]{D7D7D7} & \ref{fig:comparison_total_runtimes} & Baselines vs Hybrid & $9, 10$ \\
        %\hline \ref{graph:exp2_4settings_total} & Baselines vs Hybrid & $11, 12, 4, 9$ \\
        \cline{2-4} \cellcolor[HTML]{D7D7D7}& \revb{\ref{fig:larger_scales_gC_bC}} & \revb{Hybrid} & \revb{$11, 23-26$, $12$, $27-30$}\\
        \cline{2-4} \cellcolor[HTML]{D7D7D7}& \revc{\ref{fig:new_scaleR2_2_to_10_total}} & \revc{Hybrid} & \revc{$11, 31-34$}\\
        \cline{2-4} \multirow{-4}{*}{\cellcolor[HTML]{D7D7D7} Scalability Exp.}& \ref{fig:exp3_breakdown_runtime_900} & Hybrid & $17, 22$\\
        \hline
    \end{tabular}
    \vspace{-4mm}
\end{table}

\begin{figure*}[!htb]
    \footnotesize
    \captionsetup{justification=centering}
    \centering
    \begin{subfigure}[b]{0.49\textwidth}
        \begin{center}
            \begin{tabular}{|>{\centering\arraybackslash}p{0.5cm}|>{\centering\arraybackslash}p{0.9cm}|>{\centering\arraybackslash}p{1.1cm}|>{\centering\arraybackslash}p{0.73cm}|>{\centering\arraybackslash}p{0.9cm}|>{\centering\arraybackslash}p{1.1cm}|>{\centering\arraybackslash}p{0.73cm}|}
                \hline
                \cellcolor{blue!25} &
                \multicolumn{3}{c|}{\cellcolor{blue!25}\bf CC error}&
                \multicolumn{3}{c|}{\cellcolor{blue!25}\bf DC error}\\
                %\hhline{~|-|-|-|-|-|-|}
                \cline{2-7}
                \multirow{-2}{*}{\cellcolor{blue!25}{\bf Scale}} & 
                \cellcolor{blue!25}\begin{tabular}{@{}c@{}}{\bf Baseline}\end{tabular} &
                \cellcolor{blue!25}\begin{tabular}{@{}c@{}}{\bf Baseline}\\(marginals)\end{tabular} &
                \cellcolor{blue!25}{\bf Hybrid} &
                \cellcolor{blue!25}\begin{tabular}{@{}c@{}}{\bf Baseline}\end{tabular} &
                \cellcolor{blue!25}\begin{tabular}{@{}c@{}}{\bf Baseline}\\(marginals)\end{tabular} & 
                \cellcolor{blue!25}{\bf Hybrid}\\
                \hline
                \cellcolor{blue!25}$1\times$ & \dark$0.300$ & $0$ & $0$ & \med$0.218$ & \dark$0.445$ & $0$ \\
                \hline
                \cellcolor{blue!25}$2\times$ & \dark$0.367$ & $0$ & $0$ & \med$0.245$ & \dark$0.465$ & $0$ \\
                \hline
                \cellcolor{blue!25}$5\times$ & \dark$0.526$ & $0$ & $0$ & \med$0.274$ & \dark$0.446$ & $0$ \\
                \hline
                \cellcolor{blue!25}$10\times$ & \dark$0.604$ & $0$ & $0$ & \med$0.303$ & \dark$0.489$ & $0$ \\
                \hline
                \cellcolor{blue!25}$40\times$ & \dark$0.559$ & $0$ & $0$ & \med$0.371$ & \dark$0.520$ & $0$ \\
                \hline
            \end{tabular}
        \end{center}
        \caption{$S^{all}_{DC}$, $S^{good}_{CC}$}
        \label{tbl:exp1_varyScale_good_err}
    \end{subfigure}
    \begin{subfigure}[b]{0.49\textwidth}
        \begin{center}
            \begin{tabular}{|>{\centering\arraybackslash}p{0.5cm}|>{\centering\arraybackslash}p{0.9cm}|>{\centering\arraybackslash}p{1.1cm}|>{\centering\arraybackslash}p{0.73cm}|>{\centering\arraybackslash}p{0.9cm}|>{\centering\arraybackslash}p{1.1cm}|>{\centering\arraybackslash}p{0.73cm}|}
                \hline
                \cellcolor{blue!25} &
                \multicolumn{3}{c|}{\cellcolor{blue!25}\bf CC error}&
                \multicolumn{3}{c|}{\cellcolor{blue!25}\bf DC error}\\
                %\hhline{~|-|-|-|-|-|-|}
                \cline{2-7}
                \multirow{-2}{*}{\cellcolor{blue!25}{\bf Scale}} & 
                \cellcolor{blue!25}\begin{tabular}{@{}c@{}}{\bf Baseline}\end{tabular} &
                \cellcolor{blue!25}\begin{tabular}{@{}c@{}}{\bf Baseline}\\(marginals)\end{tabular} &
                \cellcolor{blue!25}{\bf Hybrid} &
                \cellcolor{blue!25}\begin{tabular}{@{}c@{}}{\bf Baseline}\end{tabular} &
                \cellcolor{blue!25}\begin{tabular}{@{}c@{}}{\bf Baseline}\\(marginals)\end{tabular} & 
                \cellcolor{blue!25}{\bf Hybrid}\\
                \hline
                \cellcolor{blue!25}$1\times$ & \dark$0.233$ & $0$ & $0$ & \med$0.228$ & \dark$0.435$ & $0$ \\
                \hline
                \cellcolor{blue!25}$2\times$ & \dark$0.300$ & $0$ & $0$ & \med$0.246$ & \dark$0.434$ & $0$ \\
                \hline
                \cellcolor{blue!25}$5\times$ & \dark$0.467$ & $0$ & $0$ & \med$0.279$ & \dark$0.402$ & $0$ \\
                \hline
                \cellcolor{blue!25}$10\times$ & \dark$0.537$ & $0$ & $0$ & \med$0.305$ & \dark$0.510$ & $0$ \\
                \hline
                \cellcolor{blue!25}$40\times$ & \dark$0.580$ & $0$ & $0$ & \med$0.373$ & \dark$0.489$ & $0$ \\
                \hline
            \end{tabular}
        \end{center}
        \caption{$S^{all}_{DC}$, $S^{bad}_{CC}$}
        \label{tbl:exp1_varyScale_bad_err}
    \end{subfigure}
    \caption{Error rate comparison between the baseline, baseline with marginals and hybrid as data grows from \revb{Scale $1\times$--$40\times$} and: (a) $S^{all}_{DC}$ ($12$ DCs) and $S^{good}_{CC}$ ($1001$ CCs) are used, and (b) $S^{all}_{DC}$ ($12$ DCs) and $S^{bad}_{CC}$ ($1001$ CCs) are used}
    \label{fig:exp1_varyData_err}
    \vspace{-4mm}
\end{figure*}

% \begin{figure}[!h]
%     \begin{subfigure}[b]{0.49\columnwidth}
%         \centering
%         \includegraphics[width=\linewidth]{}
%         \caption{$S^{all}_{DC}$ and $S^{good}_{CC}$ at data Scale $5$}
%         \label{graph:exp1_scaleData_box_dumb_good}
%     \end{subfigure}
%     \hfill %%
%     \begin{subfigure}[b]{0.49\columnwidth}
%         \centering
%         \includegraphics[width=\linewidth]{}
%         \caption{$S^{all}_{DC}$ and $S^{bad}_{CC}$ at data Scale $5$}
%         \label{graph:exp1_scaleData_box_bad}
%     \end{subfigure}
%     \caption{Relative CC errors incurred by the baseline and hybrid at data \revb{Scale $40$}, and using $S^{all}_{DC}$ ($12$ DCs) with: (a) $S^{good}_{CC}$ ($1001$ CCs), and (b) $S^{bad}_{CC}$ ($1001$ CCs). Hybrid satisfies all CCs in $S^{good}_{CC}$, but gives some error (see outliers) for a few CCs in $S^{bad}_{CC}$. We omit baseline with marginals as it satisfies all CCs.%\amir{Add hybrid boxes?}
%     }
%     \label{fig:exp1_box}
%     % \vspace{-3mm}
% \end{figure}
\begin{figure}[!ht]
    \centering
    \includegraphics[width=0.8\linewidth, height=3.5cm]{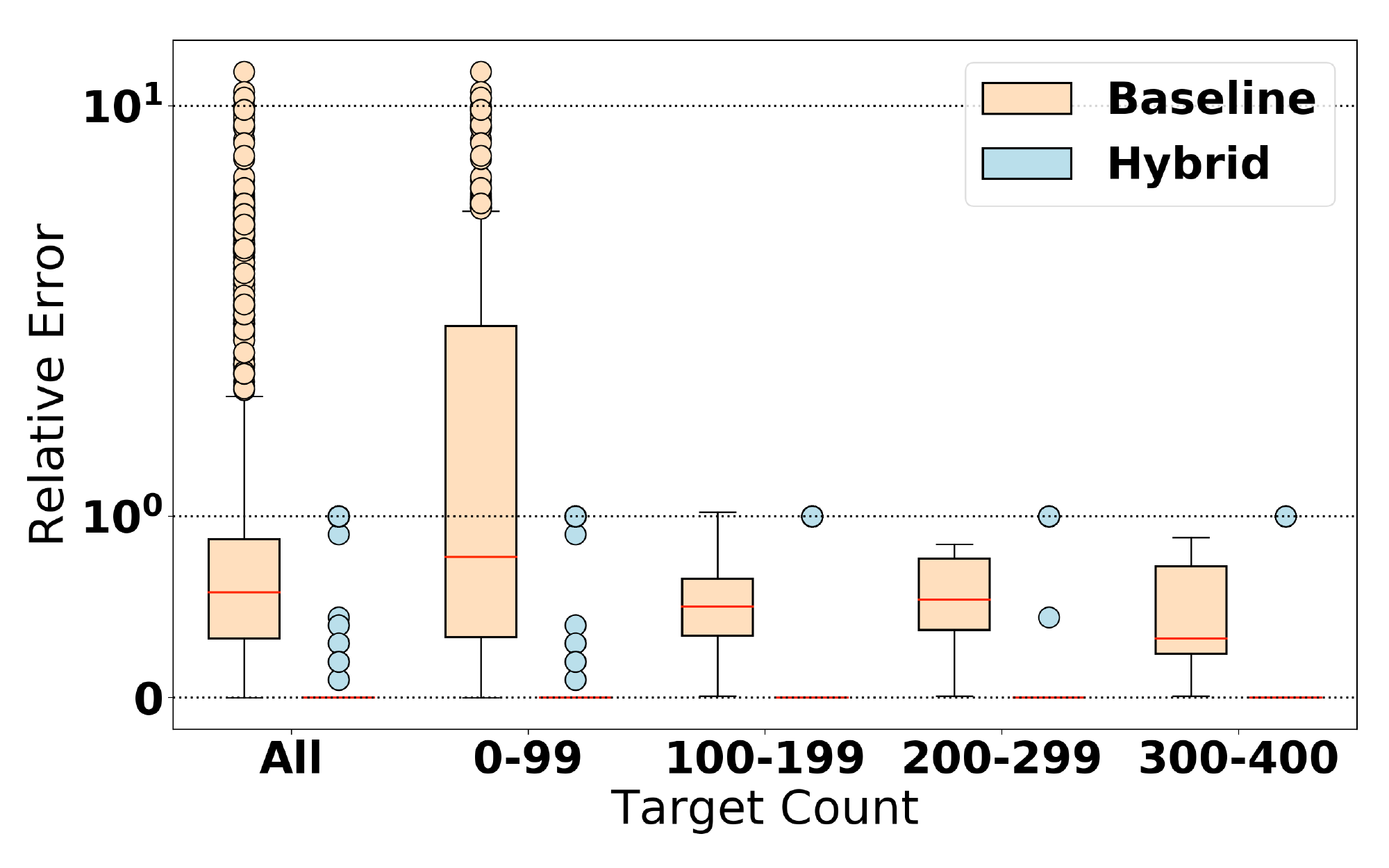}
    \caption{Relative CC error incurred by the baseline and hybrid for data \revb{Scale $40\times$}, $S^{all}_{DC}$ ($12$ DCs) and $S^{bad}_{CC}$ ($1001$ CCs). 
    % Hybrid gives some error (see outliers) for a few CCs in $S^{bad}_{CC}$. \amir{do we say this in the text? If yes, remove}
    We omit baseline with marginals as it satisfies all CCs 
    }
    \label{fig:exp1_box}
    \vspace{-4mm}
\end{figure}

\begin{figure}[!ht]
    \begin{center}
        \footnotesize
        \begin{tabular}{|>{\centering\arraybackslash}p{0.5cm}|>{\centering\arraybackslash}p{0.95cm}|>{\centering\arraybackslash}p{0.95cm}|>{\centering\arraybackslash}p{0.75cm}|>{\centering\arraybackslash}p{0.95cm}|>{\centering\arraybackslash}p{0.95cm}|>{\centering\arraybackslash}p{0.75cm}|}
            %\hhline{~|-|-|-|-|-|-|}
            %\nocell{1} & \multicolumn{3}{c|}{\cellcolor{blue!25}\bf CC error}& \multicolumn{3}{c|}{\cellcolor{blue!25}\bf DC error}\\
            %\hhline{~|-|-|-|-|-|-|}
            %\nocell{1} & \cellcolor{blue!25}\begin{tabular}{@{}c@{}}{\bf Baseline}\\(no aug)\end{tabular} & \cellcolor{blue!25}\begin{tabular}{@{}c@{}}{\bf Baseline}\\(with aug)\end{tabular} & \cellcolor{blue!25}{\bf Hybrid} & \cellcolor{blue!25}\begin{tabular}{@{}c@{}}{\bf Baseline}\\(no aug)\end{tabular} & \cellcolor{blue!25}\begin{tabular}{@{}c@{}}{\bf Baseline}\\(with aug)\end{tabular} & \cellcolor{blue!25}{\bf Hybrid}\\
            \hline
            \cellcolor{blue!25} &
            \multicolumn{3}{c|}{\cellcolor{blue!25}\bf CC error}&
            \multicolumn{3}{c|}{\cellcolor{blue!25}\bf DC error}\\
            %\hhline{~|-|-|-|-|-|-|}
            \cline{2-7}
            \multirow{-2}{*}{\cellcolor{blue!25}\begin{tabular}{@{}c@{}}{\bf Data-}\\{\bf set}\end{tabular}} & 
            \cellcolor{blue!25}\begin{tabular}{@{}c@{}}{\bf Baseline}\\(no aug)\end{tabular} &
            \cellcolor{blue!25}\begin{tabular}{@{}c@{}}{\bf Baseline}\\(with aug)\end{tabular} &
            \cellcolor{blue!25}{\bf Hybrid} &
            \cellcolor{blue!25}\begin{tabular}{@{}c@{}}{\bf Baseline}\\(no aug)\end{tabular} &
            \cellcolor{blue!25}\begin{tabular}{@{}c@{}}{\bf Baseline}\\(with aug)\end{tabular} & 
            \cellcolor{blue!25}{\bf Hybrid}\\
            \hline
            \cellcolor{blue!25}11 & \dark$0.618$ & $0$ & $0$ & \dark$0.081$ & \med$0.009$ & $0$ \\
            \hline
            \cellcolor{blue!25}12 & \dark$0.573$ & $0$ & $0$ & \dark$0.079$ & \med$0.004$ & $0$ \\
            \hline
            \cellcolor{blue!25}4 & \dark$0.604$ & $0$ & $0$ & \med$0.303$ & \dark$0.489$ & $0$ \\
            \hline
            \cellcolor{blue!25}9 & \dark$0.537$ & $0$ & $0$ & \med$0.305$ & \dark$0.510$ & $0$ \\
            \hline
        \end{tabular}
    \end{center}
    \caption{CC and DC error comparison between baseline, baseline with marginals and hybrid for combinations of good and bad cases of DCs and CCs at data \revb{Scale $10\times$}}
    \label{fig:exp2_4settings_error}
    \vspace{-4mm}
\end{figure}

\begin{figure}[!ht]
    \centering
    \begin{subfigure}[b]{0.49\columnwidth}
        \centering
        \includegraphics[width=\linewidth]{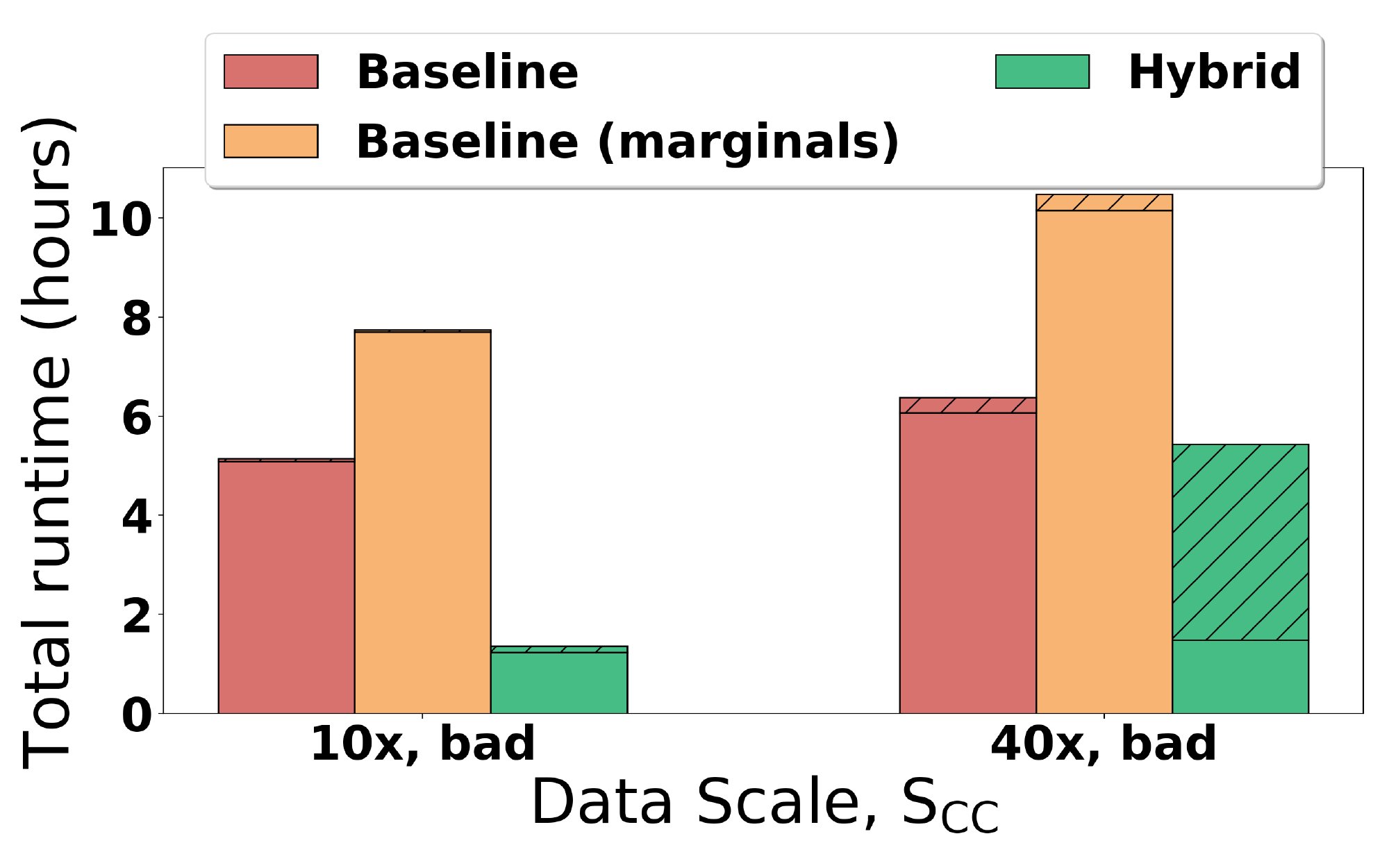}
        \caption{$S^{all}_{DC}$ ($12$ DCs), $S^{bad}_{CC}$}
        \label{fig:comparison_total_runtimes}
    \end{subfigure}
    \begin{subfigure}[b]{0.49\columnwidth}
        \centering
        \includegraphics[width=\linewidth]{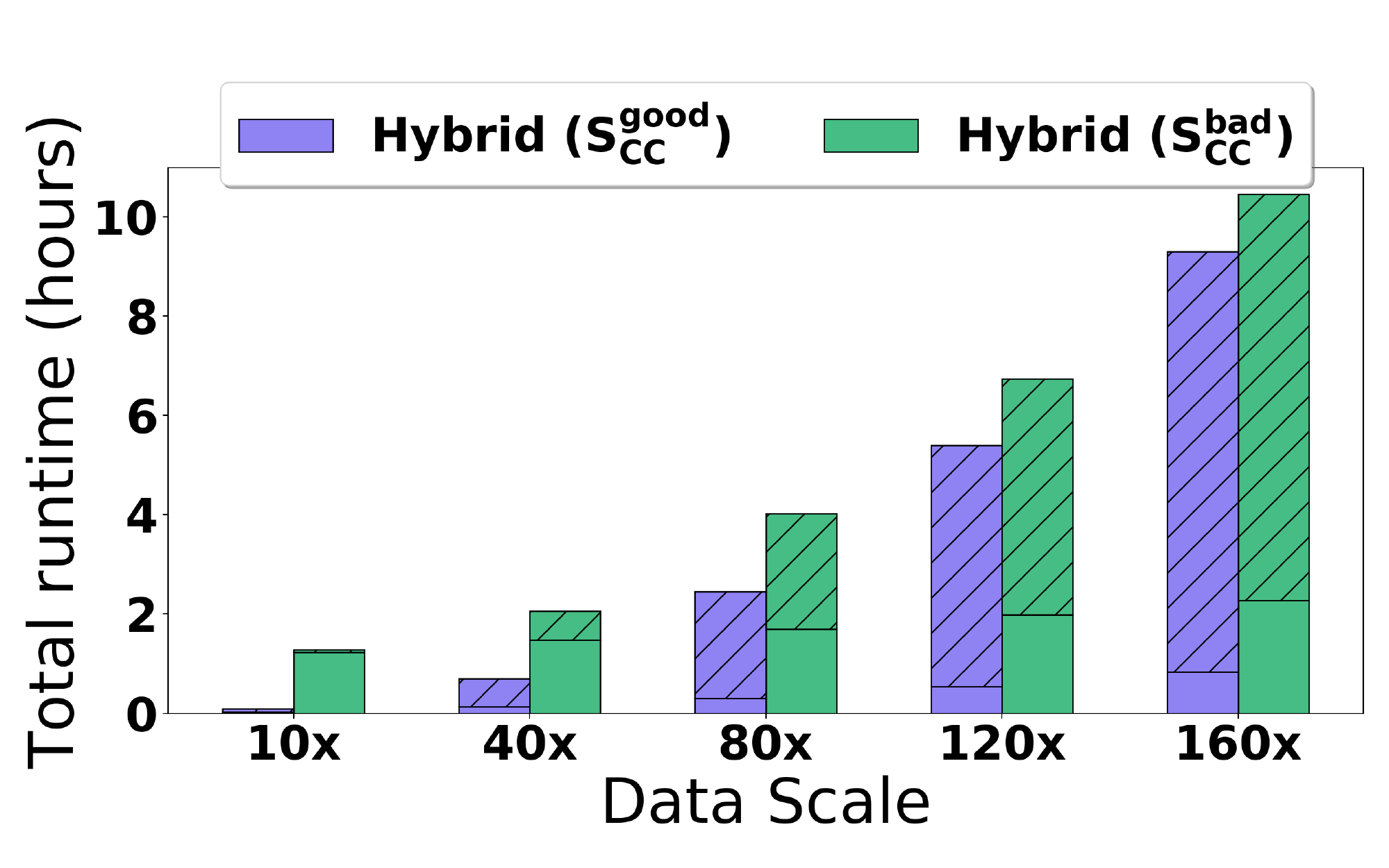}
        \caption{$S^{good}_{DC}$ ($8$ DCs), $S^{good}_{CC}$ or $S^{bad}_{CC}$}
        \label{fig:larger_scales_gC_bC}
    \end{subfigure}
    \caption{Shaded area depicts phase \rom{2} in: (a) Runtime comparison between baseline and hybrid for $S^{all}_{DC}$, $S^{bad}_{CC}$ ($1001$ CCs) and data \revb{Scale $10\times$ or $40\times$}, (b) \revb{Runtime of hybrid for $S^{good}_{DC}$ and data Scales $10\times$--$160\times$ with $S^{good}_{CC}$ or $S^{bad}_{CC}$ ($1001$ CCs)}}
    \label{fig:runtimes}
    \vspace{-4mm}
\end{figure}
% \begin{figure}[!ht]
%     \centering
%     \includegraphics[width=0.75\linewidth]{figs/comparison_total_runtimes.pdf}
%     \caption{Runtime comparison between baseline (no aug), baseline (with aug) and hybrid when $S^{all}_{DC}$ ($12$ DCs) is used with data at \revb{Scale $10$ or $40$}, and $S^{good}_{CC}$ or $S^{bad}_{CC}$ ($1001$ CCs).}
%     \label{fig:comparison_total_runtimes}
%     \vspace{-2mm}
% \end{figure}
% \begin{figure}[!ht]
%     \centering
%     \includegraphics[width=0.75\linewidth]{figs/larger_scales_gC_bC.pdf}
%     \caption{\revb{Runtime of hybrid for $S^{good}_{DC}$ ($8$ DCs) and data Scales $10$ to $160$ with $S^{good}_{CC}$ ($1001$ CCs) vs $S^{bad}_{CC}$ ($1001$ CCs).}}
%     \label{fig:larger_scales_gC_bC}
%     \vspace{-4mm}
% \end{figure}

\begin{figure}[!ht]
    \centering
    \includegraphics[width=0.8\linewidth, height=3.5cm]{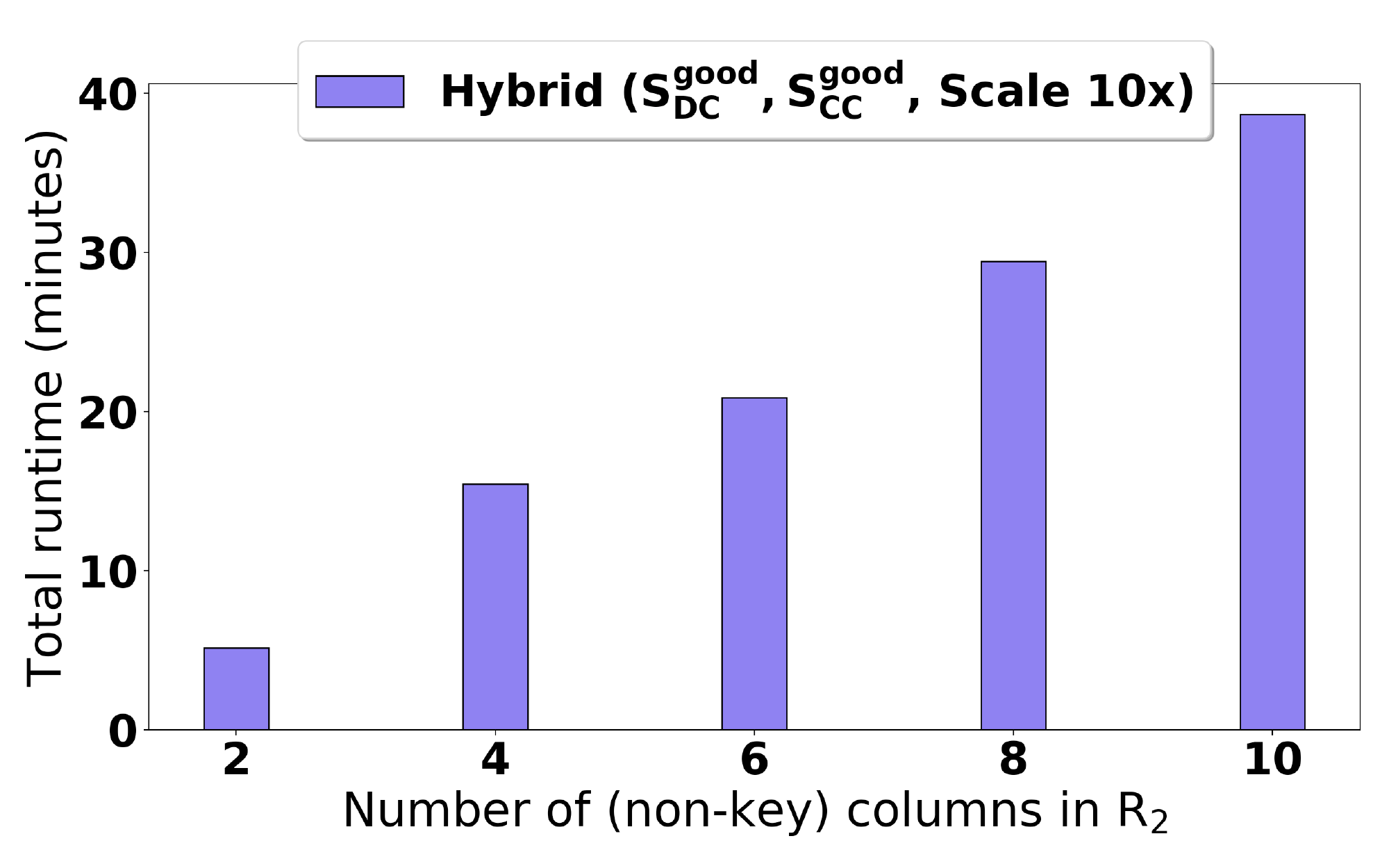}
    \caption{\revc{Runtime of hybrid for $S^{good}_{DC}$ ($8$ DCs), $S^{good}_{CC}$ ($1001$ CCs) and data Scale $10\times$ as number of columns in $R_2$ grows}}
    \label{fig:new_scaleR2_2_to_10_total}
    \vspace{-4mm}
\end{figure}

% \vspace{-1mm}
\subsection{Experimental Findings}
\label{sec:exp_findings}
% We next detail our experimental results. \\
%We discuss our results and address the aspects discussed at the beginning of Section~\ref{sec:experiments}. 
We discuss results and address aspects raised at the start of Section~\ref{sec:experiments}.

\paratitle{Our approach vs baselines - Accuracy.}
We consider the experimental setup from Table~\ref{tbl:exp_settings_figs} for the accuracy experiments, and detail our results in Figures~\ref{fig:exp1_varyData_err}-\ref{fig:exp2_4settings_error}. 
Our approach always satisfies all DCs, and all CCs in $S^{good}_{CC}$. For $S^{bad}_{CC}$ (Table~\ref{tbl:exp1_varyScale_bad_err}), the median CC error is $0$ but the smallest and largest average errors are $0.048$ and $0.093$ due to limitations in augmenting $S_{CC}$ (Section~\ref{sec:hybrid_approach}). 
In contrast, the baseline gives median CC and DC errors between $0.233$-$0.580$ and $0.228$-$0.373$, whereas baseline with marginals satisfies all CCs but gives DC errors between $0.402$-$0.510$. We take a closer look at the relative CC errors for data \revb{Scale $40\times$} and $S^{bad}_{CC}$ 
%(for algorithms that give non-zero error) 
in Figure~\ref{fig:exp1_box}. 
Note that DCs are used only after \joinView\ is partitioned by $B_1, \ldots, B_q$ values, so CCs affect the quality of the solution given by Algorithm~\ref{algo:two_partition}. %The DC error rate for baseline (no aug) is smaller than that of baseline (with aug) because random assignment in incomplete $V_1$ tuples used combinations that were not all present in $R_2$ that resulted in adding more (new) tuples in $R_2$.

Next, we look at combinations of good and bad cases of DCs and CCs for data at \revb{Scale $10\times$}. Again, our approach satisfies all DCs and gives a median CC error of $0$. Here, half the CCs were passed into Algorithm~\ref{algo:cc_no_intersection} that satisfies CCs exactly. The remaining CCs are augmented 
%with some $Age$--$Rel$--\reva{$Multi$-$ling$} marginals as described in 
(Section~\ref{sec:hybrid_approach}) to improve accuracy for Algorithm~\ref{algo:cc_intersection}. We find that most CCs have a relative error of $0$. %\amir{why?}. 
For $S^{bad}_{CC}$, the average CC error given by our approach is $0.0735$. In contrast, baseline gives CC errors between $0.537$-$0.618$ and DC errors between $0.079$-$0.305$, whereas baseline with marginals satisfies all CCs but gives DC errors between $0.004$-$0.510$ due to random assignment in $R_1.FK$.

\paratitle{Our approach vs baselines - Runtime.}
We consider the experimental setup as given in Table~\ref{tbl:exp_settings_figs} for the scalability experiments. 
The total runtimes \revb{at data Scales $10\times$ and $40\times$} are given in Figure~\ref{fig:comparison_total_runtimes}. 
\reva{Observe that the time spent on phase \rom{2} by the baseline is minimal because it randomly assigns $FK$ values, whereas our approach colors conflict graphs to satisfy all DC exactly. 
In our approach, Algorithm \ref{algo:cc_intersection} and \ref{algo:two_partition} are the bottlenecks taking $25.23$ \% and $72.74$ \% of the total runtime for $S^{all}_{DC}$ and $S^{bad}_{CC}$ at data Scale $40\times$.}
%Figure~\ref{graph:exp2_4settings_total} for the second setting. 
% Our approach has the smallest runtime.

% \begin{figure}[!ht]
%     \centering
%     \begin{subfigure}[b]{0.49\columnwidth}
%         \centering
%         \includegraphics[width=\linewidth]{}
%         \caption{$S^{all}_{DC}$, $S^{good}_{CC}$}
%         \label{graph:exp1_varyScale_compareTime_good}
%     \end{subfigure}
%     \begin{subfigure}[b]{0.49\columnwidth}
%         \centering
%         \includegraphics[width=\linewidth]{}
%         \caption{$S^{all}_{DC}$, $S^{bad}_{CC}$}
%         \label{graph:exp1_varyScale_compareTime_bad}
%     \end{subfigure}
%     \caption{Runtime comparison between baseline (no aug), baseline (with aug) and hybrid when data is at \revb{Scale $10$} and: (a) $S^{all}_{DC}$ ($12$ DCs) and $S^{good}_{CC}$ ($1001$ CCs) are used, and (b) $S^{all}_{DC}$ ($12$ DCs) and $S^{bad}_{CC}$ ($1001$ CCs) are used.}
%     \label{fig:comparison_total_runtimes}
%     %\vspace{-2mm}
% \end{figure}

At data \revb{Scale $40\times$}, the runtimes for completing \joinView\ for $S^{good}_{CC}$ and $S^{bad}_{CC}$ are as follows: (1) baseline takes \reva{$5.88$ hours and $6.07$ hours}, (2) baseline with marginals takes \reva{$9.75$ hours and $10.15$ hours}, and (3) our approach takes \reva{$7.79$ minutes and $1.48$ hours}. The corresponding total runtimes are: (1) \reva{$6.19$ hours and $6.38$ hours}, (2) \reva{$10.04$ hours and $10.49$ hours}, and (3) \reva{$1.06$ hours and $5.43$ hours} hours. 
% Our approach has the shortest runtime in completing \joinView\ because it directly transfers the updates to tuples with missing assignment in \joinView\ without materializing the updates in the database. 
% In contrast, the baseline spends longer on random assignment for such tuples in \joinView\ due to database updates. 
%\reva{
Our approach has the shortest runtime in completing \joinView\ because we take advantage of the relationships between the CCs to separate out intersecting CCs from $S_{CC}$ which reduces the time to solve the ILP. In contrast, the baseline creates one large ILP with all CCs (with or without all-way marginals). 
%}
% \amir{Is this a response to a rev. comment?} \shweta{I highlighted this because Reviewer 1 asked for everything in memory so the old explanation does not apply anymore. Do you think it's ok to not mark this text?} 
%Lastly, augmenting $S_{CC}$ in baseline leads to a complete assignment and updates to all tuples in \joinView\ which runs slowest overall. 
In addition, our approach does not need the ILP solver for $S^{good}_{CC}$, further improving the runtime.

Note that the baselines do not take into account the DCs and randomly assign $FK$ values based on filled-in \joinView, so solving the ILP dominates the time to populate $h_{id}$ in $Persons$ ($R_1$). 
%for the number of CCs we consider at \revb{Scale $10$}. 
Our approach has the shortest runtime for ($S^{good}_{DC}, S^{good}_{CC}$) (\reva{at $5.17$ minutes}), followed by ($S^{all}_{DC}, S^{good}_{CC}$), ($S^{good}_{DC}, S^{bad}_{CC}$) and ($S^{all}_{DC}, S^{bad}_{CC}$) (\reva{at $1.36$ hours}). Intuitively, for fixed data, completing \joinView\ is faster for $S^{good}_{CC}$ and conflict graphs are more likely to have fewer edges for $S^{good}_{DC}$. For ($S^{good}_{DC}, S^{good}_{CC}$) and ($S^{all}_{DC}, S^{bad}_{CC}$): (1) baseline took \reva{$4.84$-$5.14$ hours}, and (2) baseline with marginals took \reva{close to $8$ hours}. Baseline ran faster because it runs the ILP solver without the marginals.

% revb - magenta
\paratitle{Larger data scales - Runtime.}
\revb{
We examine our solution's runtime for larger data scales when $S^{good}_{DC}$ is used with $S^{good}_{CC}$ vs $S^{bad}_{CC}$ (see Figure~\ref{fig:larger_scales_gC_bC}). We find that our solution scales well, taking a total of $9.3$ hours for $S^{good}_{CC}$ and $10.46$ hours for $S^{bad}_{CC}$ at data Scale $160\times$.}

% revc - red
\paratitle{Increasing the number of $R_2$ columns - Runtime.}
\revc{
We study the effect of increasing the number of $R_2$ columns on the runtime of our approach for $S^{good}_{DC}$, $S^{good}_{CC}$ and data Scale $10\times$ (see Figure~\ref{fig:new_scaleR2_2_to_10_total}). We describe in Section~\ref{subsec:setup} how the number of columns in $R_2$ grows from $2$ to $10$. Our approach takes a total of $5.17$ minutes for $2$ columns and $38.66$ minutes for $10$ columns because we only consider the columns that are used in $S_{CC}$ for $combo_{unused}$ (Algorithm~\ref{algo:cc_no_intersection}).
}

\paratitle{Increasing the number of CCs - Runtime and accuracy.}
Here we study the effect of the size of $S_{CC}$ on the runtime and error in our approach (the last row in Table~\ref{tbl:exp_settings_figs}). 
The breakdown of runtimes for $900$ CCs chosen from $S^{good}_{CC}$ and $S^{bad}_{CC}$ is given in Figure~\ref{fig:exp3_breakdown_runtime_900}.

% \begin{figure}[!ht]
%     \centering
%     \includegraphics[width=0.7\linewidth]{}
%     \caption{Total runtime when data is at Scale $4$, all DCs are used and $500, 600, 700, 800$ and $900$ CCs are chosen from $S^{good}_{CC}$ and $S^{bad}_{CC}$, respectively.}
%     \label{fig:exp3_varyCC_runtime_total}
% \end{figure}

\begin{figure}[!ht]
\vspace{-3mm}
    \begin{center}
        \footnotesize
        \begin{tabular}{|>{\centering\arraybackslash}p{1.4cm}|>{\centering\arraybackslash}p{1.5cm}|>{\centering\arraybackslash}p{0.7cm}|>{\centering\arraybackslash}p{1.5cm}|>{\centering\arraybackslash}p{0.7cm}|}
            \hhline{~|-|-|-|-|}
            \nocell{1} &
            \multicolumn{2}{c|}{\cellcolor{blue!25}\bf $900$ CCs from $S^{good}_{CC}$}&
            \multicolumn{2}{c|}{\cellcolor{blue!25}\bf $900$ CCs from $S^{bad}_{CC}$}\\
            \hhline{~|-|-|-|-|}
            \nocell{1} & 
            \cellcolor{blue!25}{\bf Time} &
            \cellcolor{blue!25}{\bf \%} &
            \cellcolor{blue!25}{\bf Time} &
            \cellcolor{blue!25}{\bf \%}\\
            \hline
            \cellcolor{blue!25}{\bf Pairwise Comparison} & \reva{$4.48$s} & \reva{$1.12$} & \reva{$4.24$s} & \reva{$0.10$} \\
            \hline
            \cellcolor{blue!25}{\bf Recursion} & \reva{$1.70$m} & \reva{$25.64$} & \reva{$1.29$m} & \reva{$1.76$} \\
            \hline
            \cellcolor{blue!25}{\bf ILP solver} & $-$ & $-$ & \reva{$1.06$h} & \reva{$86.21$} \\
            \hline
            \cellcolor{blue!25}{\bf Coloring} & \reva{$4.87$m} & \reva{$73.24$} & \reva{$8.77$m} & \reva{$11.93$} \\
            \hline
        \end{tabular}
    \end{center}
    \caption{Runtime breakdown of the hybrid approach for \revb{data Scale $10\times$} with $S^{all}_{DC}$ ($12$ DCs) and $900$ CCs from $S^{good}_{CC}$ or $S^{bad}_{CC}$ (overall we have $1001$ CCs for both)
    % \amir{why repeat 1001? see if we can shorten}
    }
    \label{fig:exp3_breakdown_runtime_900}
    \vspace{-3mm}
\end{figure}

As more CCs are used, the time spent on labeling pairs of CCs as disjoint, contained or intersecting increases. Since $S^{good}_{CC}$ contains no intersecting CCs, the ILP solver is not used and the runtime is faster. Algorithm~\ref{algo:cc_no_intersection} takes \reva{$1.42$ minutes} for $500$ CCs and \reva{$1.78$ minutes} for $900$ CCs. More CCs not only cause more updates to \joinView, but may also add CCs where $Area$ is used without $Tenure$, creating tuples with a partial assignment in \joinView\ that are completed in line \ref{l:complete_partial_assignment} of the algorithm. For $900$ CCs, the total runtime is \reva{$6.65$ minutes}, of which \reva{$4.87$ minutes} are spent in filling-in $R_1$ (Algorithm~\ref{algo:two_partition}).

When more CCs are chosen from $S^{bad}_{CC}$, we see an increase in the number of CCs passed to Algorithm~\ref{algo:cc_intersection} that makes the ILP solver slower. Algorithm \ref{algo:cc_no_intersection} takes \reva{$1.21$ minutes} for $500$ CCs and \reva{$1.36$ minutes} for $900$ CCs, whereas Algorithm~\ref{algo:cc_intersection} takes $25.99$ minutes for $500$ CCs and $1.06$ hours for $900$ CCs. For $900$ CCs, the total runtime is \reva{$1.23$ hours}, of which \reva{$8.77$ minutes} are spent in completing $R_1$.

% Our approach satisfies all DCs and CCs in $S^{good}_{CC}$. The median and average CC error rates are $0$ in all cases, and $0.034$-$0.092$, resp.
Our approach satisfies all DCs, and CCs in $S^{good}_{CC}$. The median and average CC error rates are $0$ and $0.034$-$0.092$, resp.
% Since the coloring step is a bottleneck, we propose an optimization for it in Section \ref{sec:parallel} in the appendix.
We propose an optimization for the (bottleneck) coloring step in Section \ref{sec:parallel}.
\vspace{-1mm}
\section{related work}\label{sec:related}
Data generation has been the focus of multiple works, e.g., \cite{MannilaR89,GraySEBW94,BrunoC05,HoukjaerTW06,BinnigKLO07,LoCH10,Arasu2011,TayDWSLL13,ShenA13,AliIAB13,BudaCMK13,RablDFSJ15,FazekasK18,SanghiSHT18}.
{\em The main novelty of this paper is the generation of foreign keys for existing database relations while reducing the error of a set of CCs and ensuring the satisfaction of a set of DCs that relate to the foreign key attribute.}

A prominent line of work uses CCs to define the desired parameters of the generated data \cite{BinnigKLO07,Arasu2011,SanghiSHT18}.
QAGen \cite{BinnigKLO07} was among the first system that focused on data generation in a \textit{query-aware} fashion. The target application was to test the performance of a database management system (DBMS) when given a database schema, one parametric Conjunctive Query and a collection of constraints on each operator. 
MyBenchmark \cite{LoCH10} extends \cite{BinnigKLO07} by generating a {\em set} of database instances that approximately satisfies the cardinality expectations from a set of query results. 
HYDRA \cite{SanghiSHT18} uses a \textit{declarative approach} that allows for 
the generation of a database summary that can be used for dynamically generating data for query execution. 
% data re-generation from the original database and the query workload, and further generates a database summary that can be used for dynamically generating data for query execution. 
Arasu et. al. \cite{Arasu2011} proposed a framework that supports multiple CCs and generates data using a graphical model that converts the CCs to equations, using the concept of intervalization for efficient computations. Indeed, we have drawn on this work for Algorithm \ref{algo:cc_intersection}. 
These approaches allow for complex CCs, whereas our approach allows for DCs as well. 
A recent work \cite{SoltanaSB17} has proposed a solution for generating {\em multiple data samples} using a seed sample of the data (generated by previous work \cite{SoltanaSSB18}), statistical constraints and data validity constraints specified in OCL \cite{ocl}. 
% The proposed constraints differ from ours which adhere to common constraint classes such as linear cardinality constraints \cite{Arasu2011} and a subclass of denial constraints \cite{ChomickiM05}. 
UpSizeR \cite{TayDWSLL13} has focused on scaling the database 
% to a given scale $s$ 
while maintaining foreign key constraints. 
% Data generation from database metadata has also been studied \cite{ShenA13,RablDFSJ15} where the input is the metadata of the original database: the schema and statistical information. 
Data generation from the database schema and statistical information has also been studied \cite{ShenA13,RablDFSJ15}. 

The field of data privacy 
\cite{Sweene02,Winkler04a,Dwork06,MachanavajjhalaGKV06,LiMHMR15,McKennaMHM18,KotsogiannisTHF19} typically gives mechanisms that generate query answers that do not expose features of the underlying private data, rather than generate the data itself. 
% \amir{add about data generation for private data from the intro}
Some works \cite{HayRMS10,ZhangCPSX17} focus on providing consistent query answers, but none, to our knowledge, consider queries over linked data that guarantee the satisfaction of a set of ICs. 
% Castellanos et. al. \cite{CastellanosZJRDDJ10} explored the problem of desensitizing data so that the plans of the optimizer for the expected queries on the desensitized data have to be exactly the same as those on the original data. 
Yahalom et. al. \cite{YahalomSZ10} developed a framework for converting production data into test data by modeling it as a constrained satisfaction problem (CSP) using specific constraints that can be expressed as part of the CSP. 
% Additionally, solutions have been developed for taking small samples from large datasets while preserving their integrity \cite{BisbalGB05,BudaCMK13}. 

Finally, DCs (without CCs) have been mainly explored in relation with data cleaning \cite{ChomickiM05,Afrati2009,RekatsinasCIR17,ChuIP13,FaginKK15,KolahiL09,BertossiKL13, GiladDR20}. 
Previous work on the subject has focused on two main approaches: (1) repairing attribute values in cells \cite{RekatsinasCIR17,ChuIP13,BertossiKL13} and (2) tuple deletion \cite{ChomickiM05,LopatenkoB07,GiladDR20}. 
% A resemblance between our work and the former can be found when considering the FK column to be filled with null values. 
% However, we also consider CCs that are applied on the join view and cannot be expressed as DCs. 
\reva{
In this context, there has been previous work on automatically discovering DCs from the complete data \cite{discoverChuIP13,PenaAN19,LivshitsHIK20}. 
We consider DCs based on the FK column, which is missing. In many scenarios, as is the premise in many data cleaning works (e.g., \cite{ChomickiM05,ChuIP13,RekatsinasCIR17,GiladDR20}), such DCs can be naturally inferred from the schema or from domain knowledge. As in data cleaning, the constraints can be formulated by the users as logical statements \cite{RekatsinasCIR17} or as SQL queries \cite{GiladDR20}. 

}

\vspace{-1mm}
\section{Conclusions and Limitations}
\label{sec:conc}
We have defined the problem of generating links between database relations using linear CCs and foreign key DCs, and proved that it is intractable. 
Therefore, we have shown a novel two-phase heuristic solution. Our solution first considers the CCs, with a hybrid approach that combines an ILP-based solution and a solution based on specific relationships between the CCs. 
Second, our approach utilizes a version of conflict graph coloring in order to find a completion of the tuples that satisfies all DCs. 
Our experimental results show that our solution is both accurate and scalable. 

There are many intriguing directions for future work. First, our solution focuses on linear CCs and a subset of DCs. Finding a solution when the constraints include non-linear CCs (e.g., CCs on the number of rows that share the same foreign key) and general DCs (e.g., DCs on tuples that do not share a foreign key) is an important extension of our approach. 
Second, in phase \rom{1}, we assume foreign key dependence that induces a one-to-one mapping between the tuples of $R_1$ and the tuples of \joinView. Examining other join dependencies that do not have this property is an interesting direction of exploration. 
Third, in phase \rom{2}, tuples may be artificially added to $R_2$ due to the coloring algorithm. Some scenarios may not allow such augmentation and thus require different solutions. 
\revb{Finally, the extension of our solution to non-relational databases, such as graph databases and wide-column store is another subject of future study.}

% including the generalization of our approach to databases with complex schemas, considering DCs over multiple relations and non-linear CCs, and incorporating other forms of constraints. 
% Directions for future work include the generalization of our approach to databases that do not have a snowflake schema, considering DCs over multiple relations, and incorporating other forms of constraints.

\paratitle{Acknowledgements.} This work was supported by the National Science Foundation under grant 1408982; and by DARPA and SPAWAR under contract N66001-15-C-4067.

\clearpage
% \bibliographystyle{ACM-Reference-Format}
% \bibliography{bibtex}

\bibliographystyle{plain}
% \balance
\bibliography{bibtex.bib}

\clearpage
% \nobalance
\appendix

\section{Appendix}
We next give more details about the paper, including the full proofs of all propositions, details about our experimental settings and an outline for an optimization of the coloring process. 

\subsection{Proofs}\label{sec:proofs}
We next detail the full proofs for all of the propositions in the paper. 

\begin{proof}[Proof of Proposition \ref{prop:hardness}]
We give a reduction from NAE-3SAT\footnote{See \url{https://en.wikipedia.org/wiki/Not-all-equal_3-satisfiability} for details} to \prob. 
In the NAE-3SAT problem, we are given a 3-CNF formula $\varphi$ and asked whether there is a satisfying assignment to $\varphi$ with every clause having at least one literal with the value False. 
Given a 3-CNF formula $\varphi = C_1 \land \ldots \land C_n$, where $x_1,\ldots, x_m$ are the propositional variables in $\varphi$, construct a relation $R_1(Var,\alpha,Cls,Chosen)$, where $Chosen$ is missing all values, and $Var,\alpha,Cls$ attributes take the following values:
\begin{enumerate}
    \itemsep0em
    \item $(x_i, 1, C_j, ?)$ if making $x_i$ \textit{True} makes $C_j$ \textit{True}
    \item $(x_i, 0, C_j, ?)$ if making $x_i$ \textit{False} makes $C_j$ \textit{True}
\end{enumerate}
We define $S_{DC}$ to be the set with the following two DCs:
\begin{enumerate}
    \itemsep0em
    \item $\forall t_1, t_2.~ \neg (t_1.Var = t_2.Var \land t_1.\alpha \neq t_2.\alpha \land  t_1.Chosen = t_2.Chosen)$ 
    % \item $\forall t_1, t_2, t_3.~ \neg (t_1.\alpha = t_2.\alpha = t_3.\alpha \land t_1.Cls = t_2.Cls = t_3.Cls\land t_1.Chosen = t_2.Chosen = t_3.Chosen)$  
    \item $\forall t_1, t_2, t_3.~ \neg (t_1.Cls = t_2.Cls = t_3.Cls \land t_1.Chosen = t_2.Chosen = t_3.Chosen)$
\end{enumerate}
The goal is to complete the missing column $Chosen$ in $R_1$. 
CCs are not needed in the reduction. 
We define $R_2$ as containing two attributes: a primary key column $Chosen$, and another column $E$. 
$R_2$ contains the tuples $(0, a)$ and $(1, b)$, i.e., the domain for $Chosen$ is $\{0, 1\}$. 
% The domain for $Chosen$ is given by the set $\{0, 1\}$. The value in $E$ is of no major significance here (e.g., $R_2$ can contain the tuples $(0, a)$ and $(1, b)$).
% Intuitively, column $Chosen$ encodes the satisfying assignment for $\varphi$ by assigning values to each tuple, where if $t.Chosen = 1$, the assignment should have $t.Var = t.\alpha$.
DC (1) makes sure that if a tuple of the form $(x_i, 1, C_a)$ is chosen for the assignment, then $(x_i, 0, C_b)$ cannot be chosen as well and vice versa, and DC (2) enforces that for each clause, at least one literal in that clause will have the value True and at least one literal will have value False. 
We now show that a satisfying assignment for $\varphi$ exists iff there is a solution to \prob.

($\Rightarrow{}$) Assume there is a satisfying assignment $\alpha$ to $\varphi$ such that each clause contains a literal assigned to False. For each variable $x_i$ mapped to True, for tuples of the form $(x_i, 1, C_j, ?)$ and $(x_i, 0, C_l, ?)$ we fill-in $(x_i, 1, C_j, 1)$ and $(x_i, 0, C_l, 0)$, i.e., a tuple will be completed with a $Chosen$ value of $1$ iff $\alpha$ value is $1$. If $x_i$ was mapped to False, we do the opposite. 
We need to show that DCs (1) and (2) are satisfied. 
For DC (1), since the procedure described above gives only one of the tuples of the form $(x_i, 0, C_j, ?)$ the value $1$ and the other gets the value $0$ according to $\alpha$, this DC is satisfied. 
For DC (2), assume we have $(x_1, 1, C_j, ?)$, $(x_2, 1, C_j, ?)$, $(x_3, 0, C_j, ?)$ (w.l.o.g it could also be a different combination). Then, $\alpha$ cannot map the variables $x_1, x_2$ to True and $x_3$ to False. Hence, the $Chosen$ attribute of at least one of the above tuples will be $0$, satisfying DC (2). 
Thus, $\alpha$ defines a solution to \prob.

($\Leftarrow{}$) Assume there is a solution to \prob. Define the assignment $\alpha$ as follows. if $(x_i, 1, C_j, 1)$ then $\alpha(x_i) = True$ and otherwise, $\alpha(x_i) = False$. 
% \shweta{May be say if $(x_i, 1, C_j, 1)$ then $\alpha(x_i) = True$ instead of iff?} 
We prove that $\alpha$ is a proper assignment, i.e., each variable is mapped to either True or False and not both and we further prove that $\alpha$ satisfies $\varphi$ and each clause contains a literal mapped to False by $\alpha$. 
First, since the solution satisfies DC (1), two tuples containing the same variable name $x_i$ and different assignment values will be given different $Chosen$ values, i.e., $(x_i, 1, C_j, 1)$ iff $(x_i, 0, C_l, 0)$ and $(x_i, 0, C_j, 1)$ iff $(x_i, 1, C_l, 0)$, therefore, $\alpha$ is a proper assignment. 
Second, since the solution satisfies DC (2), for every three tuples of the form $t_1 = (x_1, 1, C_j, ?)$, $t_2 = (x_2, 1, C_j, ?)$, $t_3 = (x_3, 0, C_j, ?)$ (w.l.o.g.), 
%\shweta{Why consider values in $\alpha$ to be $1, 1, 0$? Do we need any conditions on $\alpha$ values as well in DC (2)?} 
it holds that at least one of the values $t_1[Chosen]$, $t_2[Chosen]$, $t_3[Chosen]$ is different than the rest, i.e., there is at least one $1$ value and one $0$ value in the tuples for $C_j$. 
The three tuples mentioned above correspond to the clause $C_j = (x_1 \lor x_2 \lor \neg x_3)$ (recall that in our reduction the second attribute suggests the assignment of $x_i$ that will satisfy the clause $C_j$, so since $x_1 = True$ will satisfy $C_j$, we initially have $(x_i, 1, C_j, ?)$).  Assume w.l.o.g., that $t_1.Chosen = 1$, and $t_2.Chosen = t_3.Chosen = 0$, then $C_j$ contains the literals $x_1 = True$, $x_2 = False$, and $x_3 = True$, thus having one literal mapped to True and one mapped to False. 
It is possible to verify that this mapping works for the other three combinations of literals as well ($(x_1 \lor x_2 \lor x_3)$, $(x_1 \lor \neg x_2 \lor \neg x_3)$, $(\neg x_1 \lor \neg x_2 \lor \neg x_3)$). 
% \shweta{2nd and 3rd are the same?} 
This implies that $\alpha$ both satisfies $\varphi$ (since every clause contains at least one variable that satisfies it) and maps at least one variable in each clause to False. 
 \end{proof}

 \begin{proof}[proof of Proposition \ref{prop:hybrid}]
 
\begin{lemma}
\label{lemma:all_disjoint}
    If every pair of CCs in $S_{CC}$ is disjoint, Algorithm~\ref{algo:cc_no_intersection} fills-in \joinView\ in polynomial time and $\joinView\vDash\sigma,~\forall\sigma\in S_{CC}$.
\end{lemma}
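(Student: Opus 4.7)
The plan is to show that, under the pairwise-disjointness hypothesis, Algorithm~\ref{algo:cc_no_intersection} never recurses and its base-case loop assigns values in a way that exactly meets every target count. First I would observe that containment (Definition~\ref{def:contained}) between two CCs forces the set of value assignments satisfying the stricter predicate to be a nonempty subset of those satisfying the laxer one, so containment implies non-disjointness. Hence, if every pair in $S_{CC}$ is disjoint, no containment edges appear in $\mathcal{H}$, i.e.\ $\mathcal{E}(\mathcal{H}) = \emptyset$, the test on line~\ref{l:all_disjoint} succeeds, and only lines~\ref{l:assign_vals_loop}--\ref{l:return_all_disjoint} execute; the recursive calls on line~\ref{l:call_recurse} and the tail loop on line~\ref{l:remaining} are never reached.

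Next I would verify that the resulting \joinView\ satisfies every $CC_i \in S_{CC}$. Fix such a $CC_i$ and consider any $CC_j$ processed by the base-case loop. By Definition~\ref{def:disjoint}, either (a) the $R_1$ selection conditions $\sigma_j^{R_1}$ and $\sigma_i^{R_1}$ are disjoint, so tuples assigned under $CC_j$ cannot satisfy $\sigma_i^{R_1}$ and therefore contribute nothing to $CC_i$'s count; or (b) $\sigma_j^{R_1} = \sigma_i^{R_1}$ but the $R_2$-parts are disjoint, so the $B_1,\ldots,B_q$ values written by $CC_j$ fail $\sigma_i^{R_2}$ and again contribute nothing to $CC_i$. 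The only tuples contributing to $CC_i$'s count are those written in $CC_i$'s own iteration, of which there are exactly $k_i$, yielding $\joinView \vDash CC_i$.

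The one nontrivial step, and the main obstacle, is showing that each iteration can in fact locate $k_j$ still-unassigned tuples satisfying $\sigma_j$, so the loop never halts with an unfilled CC. This is where I would invoke the hypothesis that some valid \joinView\ satisfying $S_{CC}$ exists: for any maximal subfamily of CCs sharing the same $R_1$ predicate, the hypothetical valid solution must draw pairwise-distinct tuples from that shared pool (since their $R_2$-assignments are disjoint), so $R_1$ contains at least $\sum k_\ell$ tuples matching that predicate. Because case~(a) CCs never compete for the same tuples and case~(b) CCs draw from exactly such a shared pool, the base-case loop can always find $k_j$ fresh tuples regardless of processing order. Polynomial runtime is then immediate: $|S_{CC}|$ iterations, each scanning \joinView\ once, for $O(|S_{CC}|\cdot|\joinView|)$ work beyond constructing $\mathcal{H}$.
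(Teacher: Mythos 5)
Your proposal is correct and follows essentially the same route as the paper's own proof: establish that pairwise disjointness forces the base case of Algorithm~\ref{algo:cc_no_intersection}, split the disjointness into the two cases of Definition~\ref{def:disjoint} to show that assignments for one CC cannot consume or contribute to another's count, and appeal to the existence of a valid \joinView\ to guarantee that each group of CCs sharing an $R_1$ predicate has a large enough tuple pool. Your version is somewhat more explicit than the paper's (e.g., in noting that containment would contradict disjointness and in arguing order-independence of the greedy loop), but the underlying argument is the same.
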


\begin{proof}
% The algorithm proceeds by iterating through the CCs, finding the desired number of tuples in \joinView\ for that CC and assigning to each the associated $B_1, \ldots, B_q$ values.
Assume that every pair of CCs in $S_{CC}$ is disjoint. 
If $CC_i$ and $CC_j$ are disjoint, then (by definition) either the selection conditions on $R_1$ columns are disjoint or they are identical on $R_1$ columns but disjoint on $R_2$ columns. 
%\shweta{CCs can also be disjoint if the conditions on $R_1$ columns match but are disjoint on $R_2$ columns}, a tuple $t\in R_1$ cannot contribute to both CCs simultaneously. Thus, the respective sets of tuples that can contribute to the CCs must be disjoint. Hence, satisfying $CC_i$ cannot affect the availability of tuples in \joinView\ to be used to satisfy $CC_j$.
Let us examine a pair $CC_i$-$CC_j$ of CCs that is disjoint based on the first condition. Here, the set of \joinView\ tuples that can contribute towards the target count of $CC_i$ is disjoint from that of $CC_j$. Thus, satisfying one cannot affect the availability of tuples for the other. 
% \amir{connected to the algo. verify:}In lines \ref{l:all_disjoint}--\ref{l:return_all_disjoint}, the algorithm will simply choose the needed set of tuples to satisfy $CC_i$ and then choose the needed set of tuples to satisfy $CC_j$. 
Another possibility is that $CC_i$ and $CC_j$ are disjoint based on the second condition. Here, the set of \joinView\ tuples that can contribute to both is the same but must contain at least as many tuples as the sum of the target counts for the two CCs since a \joinView\ exists that satisfies $S_{CC}$. 

Therefore, given that $S_{CC}$ contains only disjoint CCs, we go straight to line \ref{l:all_disjoint} and then to the loop on line \ref{l:assign_vals_loop}, where we first find as many tuples in \joinView\ as the target count for $CC_i$ and then assign the corresponding values from the selection condition in $CC_i$ to these tuples in line \ref{l:assign_vals_end}. Finally, if some tuples are missing some or all values from $R_2$ (the $B_1, \ldots, B_q$ values), combinations of values in $R_2$ columns that are not relevant to $S_{CC}$ are assigned to them in lines \ref{l:remaining}--\ref{l:complete_partial_assignment}. These do not add to the counts of the CCs since they use fresh combinations that do not appear in the selection conditions of the CCs in $S_{CC}$. 
% \amir{lines 14-16 do not add to the counts of the CCs since they use fresh combination that do not appear in the $S_{CC}$}
\end{proof}

We prove the proposition by induction on the size of the diagrams $H\in \mathcal{H}$, $H = (V_H,E_H)$. 

If $|E_H| = 0$, the first case checked in Algorithm \ref{algo:cc_no_intersection} applies and by Lemma \ref{lemma:all_disjoint}, we are done. 

For the inductive hypothesis, assume the proposition holds true for all diagrams with size $< n$. 
We prove it for a diagram with size $n$. 
If the base case does not hold, the algorithm identifies the maximal element, $CC_m$, of $H$ in line \ref{l:maximal}. Then, for each one of $CC_m$'s children, $c$, it solves the problem for the sub-diagram whose maximal element is $c$. Based on the induction assumption, the completion of \joinView\ returned from this call satisfies all the CCs in these sub-diagrams (the children of $m$ are pairwise-disjoint). We first find the remaining number of tuples needed for $CC_m$ after the CCs of its children have been satisfied in line \ref{l:combine_with_root} and then in line \ref{l:combine} assign the $B_1, \ldots, B_q$ values given by $CC_m$. 
Finally, as shown in Lemma \ref{lemma:all_disjoint}, the values assigned to the tuples in lines \ref{l:remaining}--\ref{l:complete_partial_assignment} do not add to the count of any of the CCs. 
Thus, we find a solution that satisfies all of the CCs whose vertices are in $H$. 
\end{proof}

\begin{proof}[Proof of Proposition \ref{prop:color_satisfies}]
Given relation $R_1$ with an empty $FK$ column and the set $S_{DC}$ of FK DCs, consider the conflict hypergraph $G$. Suppose we have a proper coloring $c$ of $G$, and we assign each tuple $t\in R_1,~ t.FK = c(t)$. 

We now show that this satisfies the DCs. Consider a DC $\sigma = \forall t_1,\ldots, t_k.~ \neg (\varphi(t_1,\ldots, t_k) \land t_1.FK = \ldots = t_k.FK)$, and a set of tuples $t_1, \ldots, t_k$ such that $\varphi(t_1,\ldots, t_k) = True$. 
Therefore, there is an edge $\{t_1,$ $\ldots, t_k\}$ in $G$. Since $c$ is a proper coloring of $G$, at least one of the tuples $t_1, \ldots, t_k$ gets a different color than the rest. Suppose this tuple is $t_i$. Thus, $t_i.FK \neq t_j.FK$ for all $1\leq j\leq n, j\neq i$, and $t_1, \ldots, t_k$ does not violate $\sigma$. Since this is true for all such sets of tuples, $\sigma$ is satisfied by such an assignment of $FK$ values.
\end{proof}

\begin{proof}[Proof of Proposition \ref{prop:solution}]
First, we show that $\forall \sigma\in S_{DC},~ \hat{R_1} \vDash \sigma$. 
The algorithm starts by constructing the conflict hypergraph, $G_c$, in lines \ref{l:start1}--\ref{l:add_edge}, where each tuple in $\joinView$ with the same combination of $B_1, \ldots, B_q$ values is added as a vertex. Note that these vertices represent the tuples in $R_1$. 
%, each tuple in $\joinView$ with the same combination of $B_1, \ldots, B_q$ is added as a vertex, since these vertices represent the tuples in $R_1$. 
Each set of vertices that violates the first part of an FK DC is added as an edge. 
According to Proposition \ref{prop:color_satisfies}, in line \ref{l:color}, the colored vertices represent tuples that now have FK values that satisfy the DCs that they are involved in. $s$ now contains all vertices that could not be colored. 
In lines \ref{l:new_colors_for_s}--\ref{l:add_row_R2}, the algorithm colors the vertices $t \in s$ using a fresh color $c[t]$ so all colored vertex combinations still do not violate any FK DC over $\hat{R_1}$.
Finally, in line \ref{l:handle_invalid} we construct a secondary conflict hypergraph and color the \noJoin\ vertices according to it, making sure that none of these tuples violate a DC over $\hat{R_1}$. 

We now prove that $\hat{R_1}\bowtie_{FK=K_2} \hat{R_2} = \joinView$. 
Prior to line \ref{l:add_row_R2}, we do not artificially add tuples to $\hat{R_2}$, so it is clear that $\hat{R_1}\bowtie_{FK=K_2} \hat{R_2} = \joinView$. 
Suppose we color $t \in \hat{R_1}$ with a new color $c[t]$ (i.e., $t.FK = c[t]$) in line \ref{l:color_s} and its corresponding tuple in \joinView\ is $t_J$. 
In the loop on line \ref{l:add_row_R2} we add a tuple, $t_{new}$, to $\hat{R_2}$ with $B_1, \ldots, B_q$ values as in $t_J$ and give it the primary key value $K_2 = c[t]$ (which is a fresh value not used in other tuples). 
Thus, joining $t$ with $t_{new}$ will give us $t_J$, which already exists in \joinView.
A similar procedure is performed in line \ref{l:handle_invalid} for the invalid vertices. Therefore, the added tuple does not add new tuples to \joinView, and we still have $\hat{R_1}\bowtie_{FK=K_2} \hat{R_2} = \joinView$.
\end{proof}

\subsection{Constraints Used in Section \ref{sec:experiments}}\label{sec:dcs_ccs}
We detail the CCs and DCs used in our experimental study. Table \ref{tbl:exp_dcs} depicts the DCs, and Table \ref{tbl:exp_ccs} depicts the CCs. 
%\shweta{TO DO: Change $Sex$ to $Multi$-$ling$.}

\begin{table}[!ht]
    \centering \footnotesize
    \caption{List of all $12$ DCs from which subsets are used
    }\label{tbl:exp_dcs}
    \begin{tabular}{| c | l |}
        \hline {\bf \#} & {\bf DC ($A$ denotes the age of a homeowner)} \\
        \hline 1 & \begin{tabular}{@{}l@{}}No biological or adoptive or step child can have age outside $[A$-$69, A$-$12]$\\when homeowner \reva{is not multi-lingual}\end{tabular}\\ %$\neg (t_1.Rel = Biological/Adoptive/Step-Child \land t_1.Age \not\in [12,69] \land t_2.Sex = M \land t_1.H_{id} = t_2.H_{id})$ \\
        \hline 2 & \begin{tabular}{@{}l@{}}No biological or adoptive or step child can have age outside $[A$-$50, A$-$12]$\\when homeowner \reva{is multi-lingual}\end{tabular}\\
        \hline 3 & No spouse or unmarried partner can have age outside $[A$-$50, A$+$50]$\\
        \hline 4 & No sibling can have age outside $[A$-$35, A$+$35]$\\
        \hline 5 & No parent or parent-in-law can have age outside $[A$+$12, A$+$115]$\\
        \hline 6 & No grandchild can have age outside $[A$-$115, A$-$30]$\\
        %\begin{tabular}{@{}l@{}}No grandchild can have an age outside of\\$[A-115, A-30]$\end{tabular}\\
        \hline 7 & No son or daughter in-law can have age outside $[A$-$69, A$-$1]$\\
        \hline 8 & No foster child can have age outside $[A$-$69, A$-$12]$\\
        %\begin{tabular}{@{}l@{}}No foster-child can have age\\outside $[A$-$69, A$-$12]$\end{tabular}\\
        \hline 9 & No two householders can share a house\\
        \hline 10 & \begin{tabular}{@{}l@{}}If $A<30$, number of grandchildren \& son/daughter in-law in the house\\must be $0$\end{tabular}\\
        \hline 11 & If $A>94$, number of parent/parent-in-law in the house must be $0$\\
        \hline 12 & No two spouses or unmarried partners can share a house\\
        \hline
    \end{tabular}
\end{table}

\begin{table}[!ht]
    \scriptsize
    \centering
    \caption{Set $S_{CC}^{good}$ and $S_{CC}^{bad}$ use subsets of predicates from the first and second table, respectively, with $469$ $Tenure$-$Area$ values and another $121$ $Area$ values without a $Tenure$ value.}
    %\caption{Set $S_{CC}^{good}$ uses predicates from the first table below, whereas $S_{CC}^{bad}$ uses CCs $1$-$5$, $7$-$12$ and $15$-$22$ from $S^{good}_{CC}$ along with those in the second table below. These are used in combination with $469$ $Tenure$-$Area$ values and another $121$ $Area$ values without specifying a $Tenure$ value.}
    \label{tbl:exp_ccs}
    
    \begin{tabular}{| c | c | c |}
        \multicolumn{3}{c}{$S_{CC}^{good}$}\\
        \hline {\bf Age} & {\bf Rel} & \begin{tabular}{@{}l@{}}{\bf \reva{Multi}}\\{\bf\reva{-ling}} \end{tabular} \\
        \hline $[18, 114]$ & Owner & 0 \\
        \hline $[18, 114]$ & Spouse & 1 \\
        \hline $[0, 10]$ & Biological child & $-$\\ 
        \hline $[6, 10]$ & Biological child & $-$\\ 
        \hline $[2, 5]$ & Biological child & $-$\\
        \hline $[3, 5]$ & Biological child & $-$\\
        \hline $[3, 5]$ & Biological child & 0 \\
        \hline $[11, 18]$ & Biological child & $-$\\  
        \hline $[11, 13]$ & Biological child & $-$\\
        \hline $[14, 18]$ & Biological child & $-$\\
        \hline $[19, 30]$ & Biological child & $-$\\
        \hline $[22, 30]$ & Biological child & $-$\\ 
        \hline $[25, 30]$ & Biological child & 1\\ 
        \hline $[18, 39]$ & Father/Mother & $-$\\
        \hline $[40, 85]$ & Father/Mother & 0 \\
        \hline $[40, 85]$ & Father/Mother & 1 \\
        \hline $[15, 85]$ & House/Room mate & 0 \\
        \hline $[15, 85]$ & House/Room mate & 1 \\
        \hline $[18, 30]$ & Grandchild & 0 \\
        \hline $[18, 30]$ & Grandchild & 1 \\
        \hline $[18, 114]$ & Unmarried partner & 1 \\
        \hline $[0, 30]$ & Step child & $-$\\
        \hline $[0, 20]$ & Step child & $-$\\
        \hline $[21, 30]$ & Step child & 1\\
        \hline $[19, 40]$ & Adopted child & $-$\\
        \hline $[25, 40]$ & Adopted child & 1 \\
        \hline $[31, 40]$ & Adopted child & 1 \\
        \hline
    \end{tabular}
    \quad
    \begin{tabular}{| c | c | c |}
        \multicolumn{3}{c}{$S_{CC}^{bad}$}\\
        \hline {\bf Age} & {\bf Rel} & \begin{tabular}{@{}l@{}}{\bf \reva{Multi}}\\{\bf\reva{-ling}} \end{tabular} \\
        \hline $[18, 114]$ & Owner & 0 \\
        \hline $[18, 114]$ & Spouse & 1 \\
        \hline $[0, 10]$ & Biological child & $-$\\ 
        \hline $[6, 10]$ & Biological child & $-$\\ 
        \hline $[2, 5]$ & Biological child & $-$\\
        \hline $[3, 5]$ & Biological child & 0 \\
        \hline $[11, 18]$ & Biological child & $-$\\  
        \hline $[11, 13]$ & Biological child & $-$\\
        \hline $[14, 18]$ & Biological child & $-$\\
        \hline $[19, 30]$ & Biological child & $-$\\
        \hline $[22, 30]$ & Biological child & $-$\\
        \hline $[40, 85]$ & Father/Mother & 0 \\
        \hline $[40, 85]$ & Father/Mother & 1 \\
        \hline $[15, 85]$ & House/Room mate & 0 \\
        \hline $[15, 85]$ & House/Room mate & 1 \\
        \hline $[18, 30]$ & Grandchild & 0 \\
        \hline $[18, 30]$ & Grandchild & 1 \\
        \hline $[18, 114]$ & Unmarried partner & 1 \\
        \hline $[0, 30]$ & Step child & $-$\\
        \hline $[21, 114]$ & Spouse & 1 \\
        \hline $[21, 64]$ & Spouse & 1 \\
        \hline $[18, 39]$ & Spouse & 1 \\
        \hline $[18, 85]$ & Spouse & 1 \\
        \hline $[40, 85]$ & Spouse & 1 \\
        \hline $[65, 114]$ & Father/Mother & 1 \\
        \hline $[0, 39]$ & Grandchild & 1 \\
        \hline $[22, 39]$ & Grandchild & 1 \\
        \hline $[0, 21]$ & Step child & $-$\\
        \hline $[19, 39]$ & Adopted child & $-$\\
        \hline $[25, 39]$ & Adopted child & 1 \\
        \hline $[31, 39]$ & Adopted child & 1 \\
        \hline
    \end{tabular}
\end{table}

\subsection{Parallelizing the Coloring Processes}\label{sec:parallel}
% \paratitle{\textcolor{ForestGreen}{Parallelizing the \joinView\ completion and the coloring processes.}}
% \reva{
% \amir{Discuss}
% Recall that in Section \ref{subsec:cc_relationships}, we give Algorithm \ref{algo:cc_no_intersection} that handles a set of Hasse diagrams, representing CC containment. Since a part of the algorithm computes the $B_1, \ldots, B_q$ values for each diagram independently (lines $7-13$), this part of the algorithm can be run in parallel for each diagram. For example, in our experiments with the setting Dataset no. $11$ in Table \ref{tbl:inputs}, we had $989$ diagrams that were mostly of size $1$. 
In Section \ref{sec:dc_algo}, we describe an optimization that allows for the conflict hypergraph to be split according to the filled-in \joinView\ and $R_2$ relations by the $B_1, \ldots, B_q$ values. As a result, we color each component of the hypergraph individually with Algorithm \ref{algo:lf_coloring} (in lines \ref{l:color} and \ref{l:color_s}) in different iterations of the loop in line \ref{l:start1}. 
% , which can be parallelized. 
For example, in our experiments with Dataset no. $11$ in Table \ref{tbl:inputs}, the conflict hypergraph was split into $3805-3813$ components. 
%Thus, it is possible to parallelize the coloring process performed in Algorithm \ref{algo:two_partition} (lines \ref{l:color} and \ref{l:color_s}) by computing the coloring of different graph components (using Algorithm \ref{algo:lf_coloring}) on different machines. 
Thus, it is possible to parallelize the coloring process performed in each iteration of the loop in line \ref{l:start1} in Algorithm \ref{algo:two_partition} and color separate components obtained in different iterations on different machines. 
% so we can achieve a speedup equal to the number of cores.
%which could be colored in parallel. This would speed up the runtime further.
% }

\end{document}